\newtheoremstyle{myplain}{}{}{\it}
{0pt}{\scshape}{}{ }{\thmname{#1}\thmnumber{ #2}\thmnote{ (#3)}}
\newtheoremstyle{mydefinition}{}{}{}
{0pt}{\scshape}{}{ }{\thmname{#1}\thmnumber{ #2}\thmnote{ (#3)}}
\theoremstyle{myplain}
	\newtheorem{Def}{Definition}[section]
        \newtheorem{Lem}[Def]{Lemma}
        \newtheorem{theo}[Def]{Theorem}
        \newtheorem{prop}[Def]{Proposition}
        \newtheorem{rem}[Def]{Remark}
        \newtheorem{hyp}[Def]{Hypothesis}
        \newtheorem{cor}[Def]{Corollary}
\renewcommand{\qed}{\nopagebreak\hfill$\Box$}
\newcommand{\skp}[2]{\mbox{$\left\langle #1\, , \, #2\right\rangle$}}
\newcommand{\skpd}[2]{\mbox{$\left\langle #1\, ,\,#2\right\rangle_{\ell^2}$}}
\newcommand{\natop}[2]{\genfrac{}{}{0pt}{}{#1}{#2}}
\newcommand{\dnt}[1]{\mbox{$\frac{d #1}{dt}$}}
\DeclareMathOperator{\Span}{span}
\DeclareMathOperator{\supp}{supp}
\DeclareMathOperator{\Op}{Op}
\newcommand{\id}{\mathbf{1}}
\numberwithin{equation}{section}
\newcommand{\beqa}{\begin{eqnarray*}}
\newcommand{\eeqa}{\end{eqnarray*}}
\renewcommand{\hat}{\widehat}
\newcommand{\bauf}{\begin{itemize}}
\newcommand{\eauf}{\end{itemize}}
\newcommand{\be}{\begin{equation}}
\newcommand{\ee}{\end{equation}}
\newcommand{\ben}{\begin{enumerate}}
\newcommand{\een}{\end{enumerate}}
\newcommand{\ra}{\rightarrow}
\renewcommand{\O}{\Omega}
\newcommand{\ep}{\varepsilon}
\newcommand{\R}{{\mathbb R} }
\newcommand{\Z}{{\mathbb Z}}
\newcommand{\C}{{\mathbb C}}
\newcommand{\N}{{\mathbb N}}
\newcommand{\T}{{\mathbb T}}
\newcommand{\disk}{(\varepsilon {\mathbb Z})^d}
\newcommand{\Ce}{\mathscr C}
\newcommand{\De}{\mathscr D}
\title{Agmon-type estimates for a class of difference operators}
\author{Markus Klein \and Elke Rosenberger}
\address{Markus Klein\\ Universit\"at Potsdam\\ Institut f\"ur Mathematik \\ Am Neuen Palais 10\\ 14469 Potsdam }
\email{mklein@math.uni-potsdam.de}
\address{
Elke Rosenberger\\ Universit\"at Potsdam\\ Institut f\"ur Mathematik \\ Am Neuen Palais 10\\ 14469 Potsdam}
\email{erosen@rz.uni-potsdam.de}
\date{\today}
\keywords{Finsler distance, Agmon estimates, difference operator}
\begin{document}

\begin{abstract}
We analyze a general class of self-adjoint difference operators $H_\ep = T_\ep + V_\ep$ on
$\ell^2(\disk)$,
where $V_\ep$ is a one-well potential and $\ep$ is a small parameter.
We construct a Finslerian distance $d$ induced by $H_\ep$ and show that short integral curves are
geodesics.
Then we show that Dirichlet eigenfunctions decay exponentially with a rate
controlled by the Finsler distance to the well. This is analog to semiclassical
Agmon estimates for Schr\"odinger operators.
\end{abstract}

\maketitle

\section{Introduction}

The central topic of this paper is the investigation of a rather general class of families of
self-adjoint difference operators $H_\ep$ on the Hilbert space $\ell^2(\disk)$, as the small parameter  $\ep>0$
tends to zero.

The operator $H_\ep$ is given by
\begin{align} \label{Hepein}
H_\ep = (T_\ep + V_\ep ),  \quad&\text{where}\quad
T_\ep  = \sum_{\gamma\in\disk} a_\gamma \tau_\gamma ,\\
(\tau_\gamma u)(x) = u(x+\gamma) \quad &\text{and}\quad (a_\gamma u)(x) := a_\gamma(x,\ep) u(x) \quad \mbox{for} \quad x,\gamma\in\disk
\end{align}
where $V_\ep$ is a multiplication operator, which in leading order is given by
$V_0 \in \Ce^\infty (\R^d)$.

We remark that the limit $\ep \to 0$ is analog to the semiclassical limit $\hbar \to 0$
for the Schr\"odinger operator
$-\hbar^2 \Delta + V$. This paper is the first in a series of papers; the aim is to develop an analytic approach
to the semiclassical eigenvalue problem and tunneling for $H_\ep$ which is comparable  in detail and
precision
to the well known analysis for the Schr\"odinger operator (see Simon \cite{Si1}, \cite{Si2} and
Helffer-Sj\"ostrand \cite{hesjo}). Our motivation comes from
stochastic problems (see Bovier-Eckhoff-Gayrard-Klein \cite{begk1}, \cite{begk2}). A large class of discrete
Markov chains analyzed in \cite{begk2}
with probabilistic
techniques falls into the framework of difference operators treated in this article.

We recall that sharp
semiclassical Agmon estimates describing the exponential decay of eigenfunctions of appropriate Dirichlet
realizations of the Schr\"odinger operator are crucial to analyze tunneling for the Schr\"odinger operator.
We further recall that the original work of Agmon on the decay of eigenfunctions for second order
differential operators is not in the semiclassical limit. It treats the limit $|x|\to \infty$ (in a 
non bounded domain of $\R^n$). 
Agmon realized in \cite{agmon} that for a large class of such operators the exponential rate at 
which eigenfunctions decay is given by the geodesic distance
in the Agmon metric. This is the Riemannian metric from Jacobi's theorem in classical mechanics:
For a Hamilton function whose kinetic energy is a positive definite quadratic form in the momenta,
the projection to configuration
space of an integral curve of the Hamiltonian vector field is a geodesic in the Agmon (Jacobi) metric.

This paper contains analog results for the class of operators $H_\ep$, including a generalization of
Jacobi's theorem. It is essential that we consider these operators as semiclassical quantizations
of suitable Hamilton functions and investigate the relation of these Hamilton functions to Finsler geometry.
In this generality our results are new. We recall, however, that various examples extending the original framework
of the semiclassical analysis in the work of Simon \cite{Si1} and Helffer-Sj\"ostrand \cite{hesjo} have been analyzed:
The operator $\cos hD_x + \cos x$ in Harpers equation (see e.g. Helffer-Sj\"ostrand \cite{harper}), the Schr\"odinger operator with magnetic field (Helffer-Mohamed \cite{helmoh}), the Dirac and Klein-Gordon operator (see e.g.
Helffer-Parisse \cite{helpar}, Servat \cite{servat}) and the Kac operator (Helffer \cite{kac}).

If $\T^d := \R^d/(2\pi)\Z^d$ denotes the $d$-dimensional torus and
$b\in \Ce^\infty\left(\R^d\times \T^d\times (0,1]\right)$,
a pseudo-differential operator $\Op_\ep^{\T^d}(b): {\mathcal K}\left(\disk\right) \longrightarrow
{\mathcal K}'\left(\disk\right)$ is defined by
\begin{equation}\label{psdo2}
\Op_\ep^{\T^d}(b)\, v(x) := (2\pi)^{-d} \sum_{y\in\disk}\int_{[-\pi,\pi]^d} e^{\frac{i}{\ep}(y-x)\xi}
b(x,\xi;\ep)v(y) \, d\xi \, ,
\end{equation}
where
\begin{equation}\label{kompaktge}
{\mathcal K}\left(\disk\right):=\{ u: \disk\rightarrow \C\; |\; u~\mbox{has compact
support}\}
\end{equation}
and ${\mathcal K}'\left(\disk\right):= \{f: \disk\rightarrow \C\ \} $ is dual to ${\mathcal K}\left(\disk\right)$
by use of the scalar product $\skpd{u}{v}:= \sum_x \bar{u}(x)v(x)$.

We remark that under certain assumptions on the $a_\gamma$ defining $T_\ep$ in 
\eqref{Hepein}, one has $T_\ep = \Op_\ep^{\T^d}(t(.,.;\ep))$, where 
$t\in\Ce^\infty\left(\R^d\times\T^d\times (0,1]\right)$ is given by
\begin{equation}\label{talsexp}
t(x,\xi, \ep) = \sum_{\gamma\in\disk} a_\gamma (x,\ep) \exp \left(-\frac{i}{\ep}\gamma\cdot\xi\right)\; .
\end{equation}
Here $t(x,\xi;\ep)$ is considered as a function on $\R^{2d}\times (0,1]$, which is
$2\pi$-periodic with respect to $\xi$. 

Furthermore, assuming that $a_\gamma(x,\ep) = a^{(0)}_\gamma(x) + \ep
a^{(1)}_\gamma(x) + R^{(2)}_\gamma(x,\ep)$, where
$R^{(2)}_\gamma(x,\ep)= O(\ep^2)$ uniformly with respect to $x$ and $\gamma$, we can write
\begin{align}\label{texpand}
t(x,\xi;\ep) &= t_0 (x,\xi) + \ep\,  t_1(x,\xi) + t_2(x,\xi;\ep)\; ,\qquad\text{with}\\
t_j(x,\xi) &:= \sum_{\gamma\in\disk} a_\gamma^{(j)}(x) e^{-\frac{i}{\ep}\gamma\xi}\, , \qquad j=0,1\nonumber\\
t_2(x,\xi;\ep) &:= \sum_{\gamma\in\disk} R_\gamma^{(2)}(x,\ep) e^{-\frac{i}{\ep}\gamma\xi}\nonumber\; .
\end{align}

Thus, in leading order the symbol of $H_\ep$ is $h_0=t_0+V_0$.
In its original form, neither Jacobi's theorem applies to $h_0(x, \xi)$ nor Agmon estimates to $H_\ep$.
Our analysis is motivated by the remark in Agmon's book \cite{agmon} to develop part of the theory of
the Agmon metric in
the more general context of Finsler geometry. It turns out that the Hamilton function
$-h_0(x,i\xi)$  (this transformation is analog to the procedure in the case
of the Schr\"odinger operator) in a natural way induces a Finsler metric and an associated
Finsler distance $d$ on $\R^d$. This allows to formulate and prove a generalization of
Jacobi's theorem (which might be some kind of lesser known folk wisdom in mathematical physics, which, however,
we were unable to find in the literature) and prove an analog of the semiclassical Agmon estimates for $H_\ep$.
We remark that Finsler distances have been used for higher order elliptic differential
operators in the analysis of decay of resolvent kernels and/or heat kernels,
see Tintarev \cite{tinta} and Barbatis \cite{barbatis1}, \cite{barbatis2}.\footnote{M.K.  thanks S. Agmon for 
the reference to \cite{tinta}, where prior to the publication of Agmon's book
a Finslerian approach was used to obtain estimates on the kernel of the resolvent and the decay of the heat kernel
for higher order elliptic operators, following ideas of Agmon.}
However, these papers do not develop a generalization of Jacobi's theorem, which turns out to be crucial in our semiclassical
analysis.

We will now state our assumptions on $H_\ep$ and formulate our results more precisely.

\begin{hyp}\label{hypdecay}
\begin{enumerate}
\item The coefficients $a_\gamma(x, \ep)$ in \eqref{Hepein} are
functions
\begin{equation}\label{agammafunk}
a: \disk \times \R^d \times (0,1] \ra \R\, , \qquad (\gamma, x,
\ep) \mapsto a_\gamma(x,\ep)\, ,
\end{equation}
satisfying the following conditions:
\ben
\item[(i)] They have an
expansion
\begin{equation}\label{agammaexp}
a_\gamma(x,\ep) = a_\gamma^{(0)}(x) + \ep \, a_\gamma^{(1)}(x) +
R^{(2)}_\gamma (x, \ep)\, ,
\end{equation}
where $a_\gamma^{(i)}\in\Ce^\infty(\R^d)$ and $|a_\gamma^{(j)}(x)
- a_\gamma^{(j)}(x+ h)| = O(|h|)$ for $j=0,1$ uniformly with respect to $\gamma\in\disk$ and $x\in\R^d$.
Furthermore $R^{(2)}_\gamma \in\Ce^\infty(\R^d\times (0,1])$ for
all $\gamma\in\disk$.
\item[(ii)] $\sum_\gamma a_\gamma^{(0)}  = 0$ and $a_\gamma^{(0)}
\leq 0$ for $\gamma \neq 0$
\item[(iii)] $a_\gamma(x, \ep) =
a_{-\gamma}(x+\gamma, \ep)$ for $x \in \R^d, \gamma \in \disk$
\item[(iv)] For any $c>0$ there exists $C>0$ such that for $j=0,1$ uniformly
with respect to $x\in\disk$ and $\ep$
\begin{equation}\label{abfallagamma}
\| \, e^{\frac{c|.|}{\ep}} a^{(j)}_.(x)\|_{\ell_\gamma^2(\disk)}\leq C \qquad\text{and} \qquad
\|\,
 e^{\frac{c|.|}{\ep}} R^{(2)}_.(x)\|_{\ell^2_\gamma(\disk)}
 \leq C\ep^2
\end{equation}
\item[(v)]
$\Span \{\gamma\in\disk\,|\, a^{(0)}_\gamma(x) <0\}= \R^d$ for all $x\in\R^d$.
\een
\item 
\ben
\item[(i)] The potential energy $V_\ep$ is the restriction to $\disk$ of a
function $\hat{V}_\ep\in\Ce^\infty (\R^d, \R)$, which has an expansion
\[
\hat{V}_{\ep}(x) = \sum_{l=0}^N\ep^l V_l(x) + R_{N+1}(x;\ep)  \, ,
\]
where $V_\ell\in\Ce^\infty(\R^d)$, $R_{N+1}\in \Ce^\infty (\R^d\times (0,\ep_0])$ for some $\ep_0>0$ and
for any compact set $K\subset \R^d$
there exists a constant $C_K$ such that $\sup_{x\in K} |R_{N+1}(x;\ep)|\leq C_K \ep^{N+1}$.
\item[(ii)]
There exist constants $R, C > 0$ such that
$V_\ep(x) > C$ for all $|x| \geq R$ and $\ep\in(0,\ep_0]$.
\item[(iii)]
$V_0(x)$ has exactly one non-degenerate minimum at $x_0=0$ with the value
$V_{0}(0)=0$.
\een
\een
\end{hyp}

The following lemma couples the assumptions on the coefficients $a_\gamma$ given in Hypothesis \ref{hypdecay}
with properties of the symbol $t$ and the kinetic energy $T_\ep$.

\begin{Lem}\label{propt}
Assume Hypothesis \ref{hypdecay} and let $t$ and $t_j, j=0,1,2$ be defined in \eqref{talsexp} and \eqref{texpand}
respectively. Then:
\ben
\item $t\in \Ce^\infty(\R^d\times \T^d\times (0,1])$ and the estimate
$\sup_{x,\xi}|\partial_x^\alpha\partial_\xi^\beta
t(x,\xi;\ep)|\leq C_{\alpha,\beta}$ holds for all $\alpha,\beta\in\N^d$ uniformly with respect to $\ep$.
Furthermore $t_0$ and $t_1$ are bounded and $\sup_{x,\xi}|t_2(x,\xi;\ep)| = O(\ep^2)$.
\item The $2\pi$-periodic function $\R^d\ni\xi\mapsto t_0(x,\xi)$ is even and has an analytic continuation to $\C^d$.
\item At $\xi=0$, for fixed $x\in\R^d$ the function $t_0$ defined in \eqref{texpand} has an expansion
\begin{equation}\label{kinen}
t_0(x,\xi) = \skp{\xi}{B(x)\xi} + O\left(|\xi|^4\right)\qquad\text{as}\;\; |\xi|\to 0\, ,
\end{equation}
where $B:\R^d\rightarrow\mathcal{M}(d\times d,\R)$ is positive
definite and symmetric.
\item The operator $T_\ep$ defined in \eqref{Hepein} is symmetric, bounded (uniformly in $\ep$) and
$\skpd{u}{T_\ep u} \geq -C\ep \|u\|^2$ for some $C>0$.
Furthermore $T_\ep=\Op_\ep^{\T^d}(t)$ (see \eqref{psdo2}).
\een
\end{Lem}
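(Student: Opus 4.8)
The common device for all four statements is that every $\gamma\in\disk$ equals $\ep k$ for some $k\in\Z^d$, so that $e^{-\frac{i}{\ep}\gamma\cdot\xi}=e^{-ik\cdot\xi}$ and \eqref{talsexp}, \eqref{texpand} are genuine Fourier series in $\xi$ with exponentially decaying coefficients: adding the three estimates in \eqref{abfallagamma} shows that for every $c>0$ there is $C_c$ with $|a_\gamma(x,\ep)|\le C_c e^{-c|\gamma|/\ep}$, $|a^{(j)}_\gamma(x)|\le C_c e^{-c|\gamma|/\ep}$ ($j=0,1$) and $|R^{(2)}_\gamma(x,\ep)|\le C_c\ep^2 e^{-c|\gamma|/\ep}$, all uniform in $x\in\disk$ and $\ep\in(0,1]$. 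This decay drives everything. For the smoothness and symbol estimates, differentiating \eqref{talsexp} term by term in $\xi$ brings down a factor $(-i\gamma/\ep)^\beta=(-ik)^\beta$ which is absorbed by the exponential decay, so the series and all its $\xi$-derivatives converge absolutely and uniformly; the smoothness and the uniform bounds in $(x,\ep)$ are then inherited from $a^{(j)}_\gamma\in\Ce^\infty(\R^d)$ and $R^{(2)}_\gamma\in\Ce^\infty$, giving $\sup_{x,\xi}|\partial_x^\alpha\partial_\xi^\beta t|\le C_{\alpha,\beta}$. The splitting $t=t_0+\ep t_1+t_2$ is just \eqref{agammaexp} read off term by term, and a Cauchy--Schwarz estimate converting the weighted $\ell^2$ bounds \eqref{abfallagamma} into $\ell^1$ bounds on the coefficients gives that $t_0,t_1$ are bounded and $\sup_{x,\xi}|t_2|=O(\ep^2)$.

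For $t_0$: periodicity of $\xi\mapsto t_0(x,\xi)$ is immediate from $e^{-ik\cdot\xi}$, and replacing $\xi$ by $z\in\C^d$ one has $|a^{(0)}_\gamma(x)e^{-ik\cdot z}|\le C_c e^{-c|k|}e^{|k||\Im z|}$, so choosing $c>|\Im z|$ yields locally uniform convergence on $\C^d$, i.e.\ $t_0(x,\cdot)$ extends to an entire function. For evenness, the symmetry relation $a_\gamma(x,\ep)=a_{-\gamma}(x+\gamma,\ep)$ (condition (iii)) gives, at leading order in $\ep$, $a^{(0)}_\gamma(x)=a^{(0)}_{-\gamma}(x)$, and reindexing $\gamma\mapsto-\gamma$ in the definition of $t_0$ then shows $t_0(x,-\xi)=t_0(x,\xi)$ (in particular $t_0$ is real-valued). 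Hence the Taylor expansion of $t_0(x,\cdot)$ at $\xi=0$ contains no odd-order terms; its constant term is $t_0(x,0)=\sum_\gamma a^{(0)}_\gamma(x)=0$ by condition (ii), so \eqref{kinen} holds with
\[
B_{ij}(x)=-\frac{1}{2\ep^2}\sum_{\gamma\in\disk}a^{(0)}_\gamma(x)\,\gamma_i\gamma_j ,
\]
the remainder $O(|\xi|^4)$ being uniform in $x$ because the fourth $\xi$-derivatives of $t_0$ are uniformly bounded. This $B(x)$ is real and symmetric, and for $v\neq0$
\[
\skp{v}{B(x)v}=-\frac{1}{2\ep^2}\sum_{\gamma\neq0}a^{(0)}_\gamma(x)\,(\gamma\cdot v)^2\ \ge\ 0
\]
since $a^{(0)}_\gamma(x)\le0$ for $\gamma\neq0$ (condition (ii)); it is strictly positive because by condition (v) the set $\{\gamma:a^{(0)}_\gamma(x)<0\}$ spans $\R^d$ and so cannot lie in $v^\perp$.

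For $T_\ep$: symmetry follows by substituting $x\mapsto x+\gamma$ and then $\gamma\mapsto-\gamma$ in $\skpd{u}{T_\ep v}$ and using $a_{-\gamma}(x+\gamma,\ep)=a_\gamma(x,\ep)$ (condition (iii)) together with the reality of the $a_\gamma$; in particular $\skpd{u}{T_\ep u}\in\R$. Since each $\tau_\gamma$ is unitary on $\ell^2(\disk)$, $\|T_\ep\|\le\sum_\gamma\sup_x|a_\gamma(x,\ep)|\le C$ uniformly in $\ep$ by \eqref{abfallagamma}. Write $T_\ep=T^{(0)}_\ep+\ep T^{(1)}_\ep+T^{(2)}_\ep$ with $T^{(0)}_\ep=\sum_\gamma a^{(0)}_\gamma\tau_\gamma$, $T^{(1)}_\ep=\sum_\gamma a^{(1)}_\gamma\tau_\gamma$, $T^{(2)}_\ep=\sum_\gamma R^{(2)}_\gamma\tau_\gamma$; by the same bound $\|\ep T^{(1)}_\ep+T^{(2)}_\ep\|=O(\ep)$. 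Using $\sum_\gamma a^{(0)}_\gamma(x)=0$ to eliminate the diagonal coefficient and then symmetrizing via the bijection $(x,\gamma)\mapsto(x+\gamma,-\gamma)$ and $a^{(0)}_{-\gamma}(x+\gamma)=a^{(0)}_\gamma(x)$ gives the discrete Dirichlet-form identity
\[
\skpd{u}{T^{(0)}_\ep u}=-\frac12\sum_{x\in\disk}\sum_{\gamma\neq0}a^{(0)}_\gamma(x)\,|u(x)-u(x+\gamma)|^2\ \ge\ 0 ,
\]
again by condition (ii); hence $\skpd{u}{T_\ep u}\ge\skpd{u}{T^{(0)}_\ep u}-O(\ep)\|u\|^2\ge-C\ep\|u\|^2$. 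Finally, inserting \eqref{talsexp} into \eqref{psdo2} and using $\int_{[-\pi,\pi]^d}e^{im\cdot\xi}\,d\xi=(2\pi)^d\delta_{m,0}$ with $m=\tfrac{1}{\ep}(y-x-\gamma)\in\Z^d$ collapses the sum over $y$ to $y=x+\gamma$, so $\Op_\ep^{\T^d}(t)v=T_\ep v$; all interchanges of summation and integration are legitimate by the exponential decay.

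The only step beyond bookkeeping is the lower bound, where one has to recognize that $\skpd{u}{T^{(0)}_\ep u}$ is a non-negative discrete Dirichlet form, which uses the zero-sum condition (ii) and the symmetry (iii) simultaneously. The remaining points require care rather than ideas: the estimates \eqref{abfallagamma} must be used uniformly in $\ep$ to justify the term-by-term manipulations, and the evenness of $t_0$ rests on reading the symmetry (iii) at leading order in $\ep$.
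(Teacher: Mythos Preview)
Your proof is correct and follows essentially the same route as the paper: Fourier-series bookkeeping driven by the exponential decay \eqref{abfallagamma}, evenness of $t_0$ from the symmetry (iii) read at leading order, the Taylor expansion at $\xi=0$ using (ii) and (v), and the lower bound via the discrete Dirichlet form for the leading part $T^{(0)}_\ep$. Two small remarks. First, your boundedness argument $\|T_\ep\|\le\sum_\gamma\sup_x|a_\gamma(x,\ep)|$ is cleaner than the paper's weighted Cauchy--Schwarz and uses the same input. Second, your derivation of $a^{(0)}_\gamma(x)=a^{(0)}_{-\gamma}(x)$ ``at leading order in $\ep$'' is slightly too quick: since $\gamma=\ep\eta$, condition (iii) gives $a^{(0)}_\gamma(x)=a^{(0)}_{-\gamma}(x+\ep\eta)+O(\ep)$, and one must invoke the Lipschitz bound in (i) to replace $a^{(0)}_{-\gamma}(x+\ep\eta)$ by $a^{(0)}_{-\gamma}(x)+O_\eta(\ep)$ before letting $\ep\to0$; the paper spells this out. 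The same caveat applies to the identity $a^{(0)}_{-\gamma}(x+\gamma)=a^{(0)}_\gamma(x)$ you use in the Dirichlet-form step --- strictly this holds only up to $O(\ep)$, but the discrepancy is absorbed into the $-C\ep\|u\|^2$ term, so the conclusion stands (the paper is equally informal here).
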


\begin{rem}
$T_\ep$ being symmetric  boils down to $a_\gamma$ being real and condition (a)(iii) of Hypothesis \ref{hypdecay}. In
the probabilistic context, which is our main motivation, the latter is a standard reversibility condition while the
former ist automatic for a Markov chain (see Section \ref{ananwend}). Since $T_\ep$ is bounded, $H_\ep=T_\ep + V_\ep$ defined in
\eqref{Hepein} posseses a self adjoint realization on the maximal domain of $V_\ep$. Abusing notation, we
shall denote this realization also by $H_\ep$ and its domain by $\De(H_\ep)\subset \ell^2\left(\disk\right)$. The associated symbol is denoted by $h(x,\xi;\ep)$. Clearly, $H_\ep$ commutes with complex conjugation.
\end{rem}

We will use the notation
\begin{equation}\label{agammaunep}
\tilde{a}:\Z^d\times\R^d\ni (\eta,x)\mapsto \tilde{a}_\eta(x) := a_{\ep\eta}^{(0)}(x)\in\R  
\end{equation}
and set
\begin{equation}\label{tildehnull}
\tilde{h}_0 (x,\xi):= -h_0(x,i\xi) = {\tilde t}_0(x,\xi) - V_0(x)\,:\, \R^{2d} \ra \R\; ,
\end{equation}
where by Lemma \ref{propt} (b)
\begin{equation}\label{tildetdef}
\tilde{t}_0(x,\xi):= -t_0(x,i\xi) = -\sum_{\eta\in\Z^d} \tilde{a}_\eta(x) \cosh \left(\eta\cdot \xi\right) \; .
\end{equation}

We shall now describe, how Hamilton functions such as $\tilde{h}_0$ for fixed energy $E$ introduce a Finsler
geometry in configuration space.

\begin{hyp}\label{hypfins}
Let $M$ be a $d$-dimensional smooth manifold.
Let $h\in\Ce^\infty(T^*M,\R)$ be hyperregular and even and strictly convex in
each fibre. Furthermore, let $h(.,0)$ be bounded from above.
For $E\in\R$ set $\widetilde{M}:= M\setminus \{h(x,0)\geq E\}$.
Denoting the fibre derivative of $h$ by ${\mathcal D}_Fh$, we
associate to $h$ the energy function
$E_h (x,v) :=h \circ \left({\mathcal D}_F h\right)^{-1} (x,v)$ on $TM$.
\end{hyp}
The notion of fibre derivative and hyperregular are standard (see
Abraham-Marsden \cite{abma}). For convenience of the reader, they are repeated in Definition \ref{fibreder}.

Now Theorem \ref{listfinsler} states that assuming Hypothesis \ref{hypfins}
\begin{equation}\label{leins}
\ell_{h,E}(x,v) := \left({\mathcal D}_Fh\right)^{-1}(x,\tilde{v})\cdot v\; ,
\end{equation}
where $\tilde{v}$ is chosen such that $E_h(x,\tilde{v}) = E$, is a Finsler function on $\widetilde{M}$.
The most important property of $\ell_{h,E}$ is the homogeneity
\[ \ell_{h,E}(x,\lambda v) = |\lambda| \ell_{h,E}(x,v)\, , \qquad \lambda\in\R\; ,\]
which is analog to the homogeneity of $|v|=\sqrt{g(v,v)}$ in the case of a Riemannian metric.
This is essential to define a curve length associated to $\ell_{h,E}$
as described in Definition \ref{Finslerman2} by
\[ s_{\ell_{h,E}}(\gamma) := \int_{a}^{b} \ell_{h,E}(\gamma(t),\dot{\gamma} (t)) \, dt \, . \]
A Finsler geodesic is then a curve $\gamma$ on $M$, for which $s_{\ell_{h,E}}$ is extremal
(see Def. \ref{Finslerman3a}).

The following theorem establishes the connection between geodesics with respect
to the Finsler function $\ell_{h,E}$ for a given hyperregular Hamilton function $h$
and the integral curves of the associated Hamiltonian vector field $X_h$.
It amplifies the Maupertuis principle in classical mechanics.

\begin{theo}\label{eulermaup}
Let $h$, $E$ and $\widetilde{M}$ satisfy Hypothesis \ref{hypfins}. Let $\ell_h :=\ell_{h,E}$
be as defined in \eqref{leins} (see Theorem \ref{listfinsler} for details).
\ben
\item
Let $\gamma_0:[a,b]\ra \widetilde{M}$ be a base integral curve of the Hamiltonian vector field $X_h$
with energy $E$ (i.e. $E_h(\gamma_0(t),\dot{\gamma}_0(t)) = E$ for all $t\in[a,b]$).
Then $\gamma_0$ is a geodesic on $\widetilde{M}$ with respect to $\ell_h$.
\item Conversely, if $\gamma_0$ is a geodesic on $\widetilde{M}$ with respect to $\ell_h$ with energy
$E_h(\gamma_0,\dot{\gamma}_0) = E$, then $\gamma_0$ is a base integral curve of $X_h$.
\een
\end{theo}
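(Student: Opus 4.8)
The plan is to prove the two implications by exploiting the standard Maupertuis–Jacobi correspondence, rephrased in the Finsler language furnished by Theorem~\ref{listfinsler}. The underlying idea is that $\ell_{h,E}$ is, up to reparametrization, the restriction of the canonical one-form $p\,dq$ (viewed via ${\mathcal D}_Fh$) to the energy shell $\{E_h=E\}\subset TM$, so that the length functional $s_{\ell_{h,E}}$ agrees with the abbreviated (Maupertuis) action $\int p\,dq$ along curves lifted to the energy shell. I would first set up the precise dictionary: given a base curve $\gamma$ on $\widetilde M$ with $\dot\gamma(t)\neq 0$, define its canonical lift to $T^*M$ by $p(t):=\left({\mathcal D}_Fh\right)^{-1}(\gamma(t),\tilde v(t))$, where $\tilde v(t)$ is the unique vector positively proportional to $\dot\gamma(t)$ with $E_h(\gamma(t),\tilde v(t))=E$ (this uses hyperregularity, evenness and fibrewise strict convexity of $h$, exactly as in Hypothesis~\ref{hypfins}; on $\widetilde M$ the level set $h(x,0)<E$ guarantees such a $\tilde v$ exists). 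By construction $\ell_{h,E}(\gamma(t),\dot\gamma(t)) = p(t)\cdot\dot\gamma(t)$, so $s_{\ell_{h,E}}(\gamma) = \int_a^b p(t)\cdot\dot\gamma(t)\,dt$ is the Maupertuis action of the lifted curve.

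For part (a), suppose $\gamma_0$ is a base integral curve of $X_h$ with energy $E$. Its Hamiltonian lift $(\gamma_0,p_0)$ solves Hamilton's equations, hence is a stationary point of the full action $\int (p\dot q - h)\,dt$ among curves in $T^*M$ with fixed endpoints in $M$ and with $h$ held at the value $E$; since on such curves $\int h\,dt = E(b-a)$ is determined once we also fix the parametrization, the abbreviated action $\int p\,dq$ is stationary under endpoint-fixing variations that stay on the energy shell $\{h=E\}$ — this is precisely the classical Maupertuis principle, and I would either cite Abraham--Marsden \cite{abma} for it or reproduce the short Lagrange-multiplier computation. The key observation is then that variations of the \emph{base} curve $\gamma$ on $\widetilde M$ correspond bijectively to variations of the lifted curve on the energy shell (the lift $p(t)$ is uniquely determined by $\gamma$, $\dot\gamma$ and the constraint $E_h=E$), and $s_{\ell_{h,E}}$ is reparametrization-invariant by the homogeneity $\ell_{h,E}(x,\lambda v)=|\lambda|\ell_{h,E}(x,v)$. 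Therefore stationarity of $\int p\,dq$ along the shell is equivalent to stationarity of $s_{\ell_{h,E}}(\gamma)$ among base curves with the same endpoints, i.e. $\gamma_0$ is an $\ell_{h,E}$-geodesic.

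For the converse (b), let $\gamma_0$ be an $\ell_{h,E}$-geodesic with $E_h(\gamma_0,\dot\gamma_0)=E$. Running the correspondence backwards: extremality of $s_{\ell_{h,E}}$ among base curves gives, via the dictionary above, extremality of the abbreviated action $\int p\,dq$ of the canonical lift among curves on the energy shell with fixed endpoints; Maupertuis' principle then yields that, after a suitable (monotone) reparametrization fixing the time so that $h\equiv E$ with unit "speed" in the Hamiltonian sense, the lift $(\gamma_0,p_0)$ satisfies Hamilton's equations for $h$, so $\gamma_0$ is a base integral curve of $X_h$. Here one must be a little careful that the Euler--Lagrange equation for the (degenerate, because $1$-homogeneous) Finsler Lagrangian $\ell_{h,E}$ only pins down the \emph{unparametrized} curve; one recovers the distinguished $X_h$-parametrization precisely by imposing $E_h(\gamma_0,\dot\gamma_0)=E$, which is part of the hypothesis.

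The main obstacle I anticipate is the careful handling of the reparametrization/degeneracy issue: because $\ell_{h,E}$ is positively homogeneous of degree one, its variational problem is invariant under reparametrization and its Euler--Lagrange equations are singular, so one cannot naively read off a parametrized ODE. The clean way around this — and the step I would spend the most care on — is to pass through the energy shell in $T^*M$, where the objects are honest (the contact/Maupertuis geometry of $\{h=E\}$ is non-degenerate away from $\{h(x,0)=E\}$, which is exactly why we excised that set to form $\widetilde M$), prove the equivalence of extremality there, and only then project back down. A secondary technical point is to verify that the canonical lift $t\mapsto(\gamma(t),p(t))$ is smooth wherever $\dot\gamma\neq0$ and that geodesics can be taken to have nonvanishing velocity (immersed), which again follows from hyperregularity together with $\ell_{h,E}(x,v)>0$ for $v\neq0$ on $\widetilde M$ as established in Theorem~\ref{listfinsler}.
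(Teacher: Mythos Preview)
Your proposal is correct and follows essentially the same route as the paper: both reduce the problem to the Maupertuis principle by identifying $s_{\ell_{h,E}}$ with the abbreviated action $\int p\,dq$ on the energy shell and then invoking the equivalence between extremals of the abbreviated action and Hamiltonian trajectories. The paper carries this out by explicitly constructing a Banach manifold $\Gamma(x_1,x_2,[a,b],E)$ of pairs (curve, reparametrization) on the energy shell together with a diffeomorphism $b_E$ from the space of base curves, and then performs the variational computation for the action integral $I(\gamma,\alpha)=\int A_h(\gamma,\dot\gamma)\,ds$ directly (taking care of the boundary terms arising from the variable time endpoint $\alpha(b)$); this is exactly the ``clean way around'' the degeneracy that you anticipate in your last paragraph, so your outline and the paper's execution match.
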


The Hamilton function $\tilde{h}_0$ introduced
in \eqref{tildehnull} actually satisfies Hypothesis \ref{hypfins} with respect to the energy $E=0$ (see Corollary
\ref{hnullhyperregular})
and thus induces a Finsler function $\ell:=\ell_{\tilde{h}_0,0}$ and a
Finsler distance defined by
\begin{equation}\label{delleins}
d_\ell(x_0,x_1) = \inf_{\gamma\in\Gamma_{0,1}(x_0, x_1)} \int_0^{1}
\ell(\gamma(t), \dot{\gamma}(t))\, dt
\, ,
\end{equation}
where $\Gamma_{0,1}(x_0,x_1)$ denotes the set of regular curves $\gamma$ with $\gamma(0) = x_0$ and $\gamma(1) = x_1$.

\begin{theo}\label{eikonald}
There exists a neighborhood $\Omega$ of $0$ such that
$d^0(x) := d_\ell(0,x) $, with $d_{\ell}$ defined in \eqref{delleins}, fulfills the generalized eikonal equation
\[ \tilde{h}_0(x,\nabla d^0 (x)) = 0\, , \qquad x\in\Omega \; . \]
Furthermore
\begin{equation}\label{eicmitd}
 d^0(x) - \sum_{1\leq k \leq N} \varphi_k(x) = O(|x|^{N+1}) \quad \mbox{as} \quad x\to 0\; ,
\end{equation}
where
each $\varphi_k$ is an homogeneous
polynomial of degree $k+2$.

In addition $d_\ell$ is locally Lipschitz continuous, i.e.
\begin{equation}\label{d-dgamma}
|d_\ell(x,y)|  \leq C\, |x-y| \; ,\qquad x,y\in\R^d\, ,
\end{equation}
where $C$ is locally uniform in $x$ and $y$.

The eikonal inequality
\begin{equation}\label{eicungl}
\tilde{h}_0 (x,\nabla d^0(x)) \leq 0
\end{equation}
holds almost everywhere in $\R^d$.
\end{theo}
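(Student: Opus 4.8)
The plan is to realise $d^{0}=d_{\ell}(0,\cdot)$ as the generating function of the outgoing (unstable) Lagrangian manifold of the Hamiltonian flow of $\tilde h_{0}$ at energy $0$, and to pass between Finsler length and classical action via Theorem~\ref{eulermaup}. First I would analyse the fixed point of $X_{\tilde h_{0}}$ at the well. Since $V_{0}(0)=0$ is a non-degenerate minimum and $\sum_{\eta}\tilde a_{\eta}\equiv0$ by Hypothesis~\ref{hypdecay}(a)(ii), the origin $(0,0)$ is a critical point of $\tilde h_{0}$ with $\tilde h_{0}(0,0)=0$; using \eqref{kinen} and the evenness of $\tilde t_{0}$ in $\xi$ one finds that the linearisation of $X_{\tilde h_{0}}$ at $(0,0)$ is $\left(\begin{smallmatrix}0 & 2B(0)\\ V_{0}''(0) & 0\end{smallmatrix}\right)$, whose square is block diagonal with blocks similar to the positive definite matrices $2B(0)V_{0}''(0)$ and $2V_{0}''(0)B(0)$. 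Hence $(0,0)$ is hyperbolic, with $d$-dimensional unstable eigenspace equal to the graph $\{\xi=Qx\}$, where $Q$ is the unique symmetric positive definite solution of $QB(0)Q=\tfrac12V_{0}''(0)$. The stable manifold theorem then gives a smooth local unstable manifold $\Lambda_{+}$ through $(0,0)$; it is Lagrangian (the symplectic form is flow invariant and uniformly contracted along $\Lambda_{+}$ as $t\to-\infty$, hence vanishes on it), and, being tangent at $(0,0)$ to a graph over $x$, it has the form $\Lambda_{+}=\{(x,\nabla\phi(x)):x\in U\}$ with $\phi\in\Ce^{\infty}(U)$, $\phi(0)=0$, $\nabla\phi(0)=0$, on a neighbourhood $U$ of $0$. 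Since $\Lambda_{+}\subset\{\tilde h_{0}=0\}$, the function $\phi$ solves the eikonal equation $\tilde h_{0}(x,\nabla\phi(x))=0$ on $U$.

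Next I would identify $d^{0}$ with $\phi$ near $0$. Along a base integral curve $\gamma_{0}$ of $X_{\tilde h_{0}}$ of energy $0$ one reads off from \eqref{leins} and Hamilton's equation $\dot x=\mathcal{D}_{F}\tilde h_{0}(x,\xi)$ that $\ell(\gamma_{0},\dot\gamma_{0})\,dt=\xi\cdot dx$, so the Finsler length of $\gamma_{0}$ equals its action $\int\gamma_{0}^{*}(\xi\,dx)$. For $x\in U$ the outgoing trajectory $\gamma_{x}$ in $\Lambda_{+}$ through $(x,\nabla\phi(x))$ tends to $(0,0)$ as $t\to-\infty$, the improper action integral converges by the exponential contraction near the hyperbolic point, and on $\Lambda_{+}$ one has $\xi\,dx=d\phi$, so this action is $\phi(x)$; by Theorem~\ref{eulermaup}(a) the truncations of $\gamma_{x}$ are geodesics, which together with the local Lipschitz bound near $0$ gives $d^{0}(x)\le\phi(x)$. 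For the converse, strict convexity of $\tilde h_{0}(x,\cdot)$ makes $\ell(x,\cdot)$ the support function of the convex body $\{\xi:\tilde h_{0}(x,\xi)\le0\}$ (a companion statement to Theorem~\ref{listfinsler}), so $\nabla\phi(y)\cdot v\le\ell(y,v)$ for all $v$; integrating this along any curve $\gamma$ from $0$ to $x$ that stays in a fixed ball $\overline{B}\subset U$ gives $\phi(x)\le s_{\ell}(\gamma)$, while a curve leaving $B$ has $s_\ell$-length at least $\min_{\partial B}\phi$ (apply the same estimate up to the first exit), which exceeds $\phi(x)$ once $x$ is close enough to $0$; choosing $\Omega\subset B$ accordingly yields $\phi(x)\le d^{0}(x)$ for $x\in\Omega$. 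Hence $d^{0}=\phi$ on $\Omega$, so $d^{0}$ is smooth there, satisfies $\tilde h_{0}(x,\nabla d^{0}(x))=0$, and \eqref{eicungl} holds pointwise on $\Omega$.

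The asymptotic expansion \eqref{eicmitd} is then obtained by inserting the formal series $\phi\sim\sum_{k\ge0}\varphi_{k}$, with $\varphi_{k}$ homogeneous of degree $k+2$, into the eikonal equation and comparing homogeneous parts: the degree-$2$ part reproduces $QB(0)Q=\tfrac12V_{0}''(0)$ for $\varphi_{0}(x)=\tfrac12\langle x,Qx\rangle$, and for $m\ge1$ the degree-$(m+2)$ part is a linear transport equation $\bigl(2B(0)\nabla\varphi_{0}\bigr)\cdot\nabla\varphi_{m}=F_{m}(\varphi_{0},\dots,\varphi_{m-1})$ along the expanding linear field $x\mapsto2B(0)Qx$; since $2B(0)Q$ is similar to a positive definite matrix, this operator has no kernel on homogeneous polynomials of any degree, so $\varphi_{1},\dots,\varphi_{N}$ are determined recursively. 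As $d^{0}=\phi$ is genuinely smooth, its Taylor polynomial at $0$ is $\varphi_{0}+\sum_{1\le k\le N}\varphi_{k}$, which is \eqref{eicmitd}.

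For the global statements I would note that for $x$ in a compact $K$ the energy surface $\{\tilde h_{0}(x,\cdot)=0\}$ is compact, since by Hypothesis~\ref{hypdecay}(a)(ii),(v) we have $\tilde t_{0}(x,\xi)=\sum_{\eta\neq0}(-\tilde a_{\eta}(x))(\cosh(\eta\cdot\xi)-1)\to+\infty$ as $|\xi|\to\infty$, and it shrinks to $\{0\}$ as $V_{0}(x)\to0^{+}$; hence $\ell(x,v)\le C_{K}|v|$ on $K$, with $\ell$ extending continuously by $0$ onto the null set $\{V_{0}\le0\}$ (which by Hypothesis~\ref{hypdecay}(b)(iii) is $\{0\}$). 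Competing with straight segments gives \eqref{d-dgamma} with locally uniform constant, so $d^{0}$ is locally Lipschitz; by Rademacher it is differentiable a.e., and at a.e. $x$ (which then lies in $\widetilde{M}$) the triangle inequality and reversibility of $\ell$ give $\bigl|d^{0}(x+tv)-d^{0}(x)\bigr|\le|t|\sup_{|s|\le|t|}\ell(x+sv,v)$, hence $\nabla d^{0}(x)\cdot v\le\ell(x,v)$ for all $v$, which by the support-function characterisation of $\ell$ is equivalent to $\tilde h_{0}(x,\nabla d^{0}(x))\le0$. The hard part is the identification $d^{0}=\phi$ on $\Omega$: one must control the improper action integral as the outgoing trajectory emerges from the degenerate well, verify that $\Lambda_{+}$ projects diffeomorphically onto a full neighbourhood of $0$, and justify the localisation confining minimising curves to a small ball; the support-function description of $\ell$, although elementary, is used repeatedly and should be recorded near Theorem~\ref{listfinsler}.
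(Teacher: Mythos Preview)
Your proposal is correct and follows the same overall architecture as the paper: construct the unstable Lagrangian manifold $\Lambda_{+}$ of $X_{\tilde h_{0}}$ through the hyperbolic fixed point $(0,0)$, write it as a graph $\{(x,\nabla\phi(x))\}$ of a smooth eikonal solution, and then identify $\phi$ with $d^{0}$ near the well; the Lipschitz bound and the global eikonal inequality are obtained exactly as in the paper, via straight-segment competitors and the pointwise estimate $\nabla d^{0}(x)\cdot v\le\ell(x,v)$.

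There are two substantive differences. First, the paper does not carry out the hyperbolic-fixed-point and transport-recursion analysis you sketch; it simply cites \cite{thesis} for the existence of a smooth $\varphi$ solving the eikonal equation with the expansion \eqref{varphi} and for the parametrisation $\Lambda_{+}=\{(x,\nabla\varphi(x))\}$. Your reconstruction is more self-contained and makes the origin of the quadratic term $\varphi_{0}(x)=\tfrac12\langle x,Qx\rangle$ transparent. Second, and more interestingly, for the identification $d^{0}=\phi$ the paper computes the Finsler length of the outgoing trajectory as $\varphi(x)-\varphi(y_{T})$ (your action computation), invokes Theorem~\ref{eulermaup} to call it a geodesic, and then appeals to the Finsler-geometric fact (Abate--Patrizio \cite{abate}) that sufficiently short geodesics are length-minimising, giving $\varphi(x)-\varphi(y_{T})=d_{\ell}(y_{T},x)$ directly; the limit $T\to\infty$ finishes. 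Your route instead proves the two inequalities separately: the upper bound $d^{0}\le\phi$ needs only that the outgoing trajectory is \emph{some} competitor (Theorem~\ref{eulermaup} is in fact superfluous here), while the lower bound $\phi\le d^{0}$ comes from the calibration inequality $\nabla\phi\cdot v\le\ell(\cdot,v)$, i.e.\ the support-function description of $\ell$, integrated along an arbitrary curve with a localisation to keep it inside $U$. This calibration argument is cleaner in one respect: it sidesteps the issue of applying the ``short geodesics minimise'' theorem uniformly as the endpoint $y_{T}$ approaches the singular point $0$ where the Finsler structure degenerates, a point the paper handles only implicitly. The price is that you must record the support-function characterisation of $\ell$ (which is not stated in the paper, though it follows readily from the construction in Theorem~\ref{listfinsler}) and carry out the localisation argument barring long curves.
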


To analyze eigenfunctions concentrated at the potential minimum $x_0=0$, we introduce
a Dirichlet operator $H_\ep^\Sigma$ as follows.

\begin{Def}\label{ell2sigma}
For $\Sigma\subset \R^d$ we set $\Sigma_\ep :=\Sigma \cap \disk$.
Any function $u\in\ell^2(\Sigma_\ep)$ can by zero extension, i.e. via
$u(x)=0$ for $x\notin \Sigma_\ep$, be embedded in
$\ell^2(\disk)$.
If we denote this embedding by $i_{\Sigma_\ep}$, we can define the space
$\ell^2_{\Sigma_\ep}:= i_{\Sigma_\ep} \left(\ell^2(\Sigma_\ep)\right) \subset \ell^2(\disk)$ and the Dirichlet operator
\begin{equation} \label{HepD}
H_\ep^\Sigma :=\id_{\Sigma_\ep} H_\ep|_{\ell^2_{\Sigma_\ep}}  \;:\; \ell^2_{\Sigma_\ep} \rightarrow \ell^2_{\Sigma_\ep}
\end{equation}
with domain $\De (H_\ep^\Sigma) = \{u\in\ell^2_{\Sigma_\ep}\,|\, V_\ep u \in \ell^2_{\Sigma_\ep}\}$.
\end{Def}

We now formulate our estimates of weighted $\ell^2$-norms of eigenfunctions of the Dirichlet operator
$H^\Sigma_\ep$.
We will show that they decay exponentially
at a rate controlled by the Finsler distance $d^0(x)$.
Theorem \ref{eikonald} is crucial to prove these estimates.

\begin{theo}\label{weig}
Let $\Sigma\subset\R^d$ be a bounded open region including the point $0$ such that
$d^0\in{\mathscr C}^2(\overline{\Sigma})$, where
$d^0(x):=d_\ell(0,x)$ is defined by
\eqref{delleins}.

Let $E\in [0,\ep R_0]$ for $R_0$ fixed,
assume Hypothesis \ref{hypdecay} and let $H_\ep^{\Sigma}$ denote the
Dirichlet operator introduced in (\ref{HepD}).

Then there exist constants $\ep_0, B, C>0$ such that for all $\ep\in(0,\ep_0]$
and real $u\in \ell^2_{\Sigma_\ep}$
\begin{equation}\label{weigequ}
\left\| \left(1+\tfrac{d^0}{\ep}\right)^{-B} e^{\frac{d^0}{\ep}} u
\right\|_{\ell^2}
\leq
C \left[ \ep^{-1}\left\| \left(1+\tfrac{d^0}{\ep}\right)^{-B}
e^{\frac{d^0}{\ep}}
\left(H_\ep^{\Sigma}-E\right)u\right\|_{\ell^2} +
 \| u \|_{\ell^2}  \right]
\; .
\end{equation}
In particular, let
$u\in \ell_{\Sigma_\ep}^2$ be a normalized eigenfunction of $H_\ep^\Sigma$
with respect to the eigenvalue
$E\in[0,\ep R_0]$. Then there exist constants $B,C>0$ such
that for all $\ep\in(0,\ep_0]$
\begin{equation}\label{eigenu}
 \left\| \left(1+\frac{d^0}{\ep}\right)^{-B} e^{\frac{d^0}{\ep}}u\right\|_{\ell^2} \leq
 C \, .
 \end{equation}
\end{theo}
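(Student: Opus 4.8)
The plan is to run a discrete analogue of Agmon's exponential weight argument, with the polynomial prefactor built into the weight. Fix a large constant $B>0$ (to be chosen at the end, depending on $R_0$ and on geometric constants) and set
\[ \varphi:=\varphi_\ep:=d^0-\ep B\log\Bigl(1+\tfrac{d^0}{\ep}\Bigr)\,, \]
so that $e^{\varphi/\ep}=(1+\tfrac{d^0}{\ep})^{-B}e^{d^0/\ep}$ is precisely the weight in \eqref{weigequ} and $0\le d^0-\varphi$. On $\overline\Sigma$ one has $\nabla\varphi=\tfrac{d^0-\ep(B-1)}{d^0+\ep}\nabla d^0$, whose scalar prefactor lies in $[-(B-1),1)$; since $d^0\in{\mathscr C}^2(\overline\Sigma)$ we may extend $\varphi$ to a function in ${\mathscr C}^2(\R^d)$ with $\|\nabla\varphi\|_\infty,\|D^2\varphi\|_\infty$ bounded uniformly in $\ep$. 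For real $u\in\ell^2_{\Sigma_\ep}$ put $v:=e^{\varphi/\ep}u\in\ell^2_{\Sigma_\ep}$; since $v$ is supported in $\Sigma_\ep$ the Dirichlet restriction is immaterial and one may compute with $H_\ep$ itself. Conjugation gives $e^{\varphi/\ep}T_\ep e^{-\varphi/\ep}=\sum_\gamma a_\gamma\,e^{(\varphi-\tau_\gamma\varphi)/\ep}\tau_\gamma$; symmetrising by means of Hypothesis \ref{hypdecay}(a)(iii), the identity $v(x)v(x+\gamma)=\tfrac12 v(x)^2+\tfrac12 v(x+\gamma)^2-\tfrac12|v(x+\gamma)-v(x)|^2$, and $\sum_\gamma a_\gamma^{(0)}=0$ (Hypothesis \ref{hypdecay}(a)(ii)), one obtains for real $v$
\[ \re\bigl\langle v\,,\,e^{\varphi/\ep}(H_\ep^\Sigma-E)u\bigr\rangle_{\ell^2}=\sum_x|v(x)|^2 F_\ep^\varphi(x)-\tfrac12\sum_{x,\gamma}\tilde c_\gamma(x)\,|v(x+\gamma)-v(x)|^2\,, \]
where $\tilde c_\gamma(x):=a_\gamma(x)\cosh\bigl(\tfrac1\ep(\varphi(x)-\varphi(x+\gamma))\bigr)$ and $F_\ep^\varphi(x):=\sum_\gamma\tilde c_\gamma(x)+V_\ep(x)-E$. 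By Cauchy--Schwarz the left side is $\le\|v\|_{\ell^2}\,\|e^{\varphi/\ep}(H_\ep^\Sigma-E)u\|_{\ell^2}$, so everything reduces to lower bounds on the two terms on the right.

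For the discrete Dirichlet form, $-a_\gamma^{(0)}(x)\ge0$ for $\gamma\neq0$, so the leading contribution is $\ge0$ and is discarded; for the remainder $\ep a_\gamma^{(1)}+R_\gamma^{(2)}$ one uses the crude bound $\cosh(\tfrac1\ep(\varphi(x)-\varphi(x+\gamma)))\le e^{L|\gamma|/\ep}$ ($L:=\|\nabla\varphi\|_\infty$) together with a Cauchy--Schwarz in $\gamma$ against the exponential decay \eqref{abfallagamma} (with decay constant $>L$), giving $\sum_\gamma(\ep|a_\gamma^{(1)}(x)|+|R_\gamma^{(2)}(x)|)e^{L|\gamma|/\ep}\le C\ep$ uniformly in $x$; bounding $|v(x+\gamma)-v(x)|^2\le 2|v(x+\gamma)|^2+2|v(x)|^2$ and relabelling then yields $-\tfrac12\sum_{x,\gamma}\tilde c_\gamma(x)|v(x+\gamma)-v(x)|^2\ge-C\ep\|v\|_{\ell^2}^2$. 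For $F_\ep^\varphi$, using the uniform ${\mathscr C}^2$ bound on $\varphi$ one expands $\cosh$ about the argument $-\eta\cdot\nabla\varphi(x)$ (with $\gamma=\ep\eta$) and splits the $\gamma$-sum at $|\gamma|\asymp\sqrt\ep$, dominating both pieces by \eqref{abfallagamma}; this gives $\sum_\gamma\tilde c_\gamma(x)=-\tilde t_0(x,\nabla\varphi(x))+O(\ep)$ with $\tilde t_0$ as in \eqref{tildetdef}. Together with Hypothesis \ref{hypdecay}(b)(i), $E\in[0,\ep R_0]$ and the definition \eqref{tildehnull} this gives
\[ F_\ep^\varphi(x)=-\tilde h_0(x,\nabla\varphi(x))+O(\ep)\qquad\text{uniformly for }x\in\Sigma_\ep\,. \]

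The crux is the positivity of $-\tilde h_0(x,\nabla\varphi(x))$, and this is where Theorem \ref{eikonald} and the fibre convexity enter. On $\overline\Sigma$, $\nabla\varphi(x)=\bigl(1-\tfrac{\ep B}{d^0(x)+\ep}\bigr)\nabla d^0(x)$, so a Taylor expansion of $\tilde h_0(x,\cdot)$ about $\nabla d^0(x)$ together with the eikonal inequality \eqref{eicungl} yields
\[ -\tilde h_0(x,\nabla\varphi(x))\ \ge\ \tfrac{\ep B}{d^0(x)+\ep}\,g(x)\ -\ C\Bigl(\tfrac{\ep B}{d^0(x)+\ep}\Bigr)^2|\nabla d^0(x)|^2,\qquad g(x):=\bigl({\mathcal D}_F\tilde h_0\bigr)(x,\nabla d^0(x))\cdot\nabla d^0(x). \]
From \eqref{tildetdef}, $g(x)=-\sum_{\eta\neq0}\tilde a_\eta(x)\,(\eta\cdot\nabla d^0(x))\sinh(\eta\cdot\nabla d^0(x))\ge0$, and it is strictly positive for $x\in\overline\Sigma\setminus\{0\}$ because $\{\eta:\tilde a_\eta(x)<0\}$ spans $\R^d$ (Hypothesis \ref{hypdecay}(a)(v)) while $\nabla d^0(x)\neq0$ there, $d^0$ being a Finsler distance from $0$; moreover near $0$, $g(x)=2\langle\nabla d^0(x),B(x)\nabla d^0(x)\rangle+O(|\nabla d^0(x)|^4)$ and $d^0(x)\asymp|\nabla d^0(x)|^2\asymp|x|^2$ by nondegeneracy of the well, whence on $\overline\Sigma$ one has $g(x)\ge c_g\,d^0(x)$ and $|\nabla d^0(x)|^2\le C\,d^0(x)$. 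Inserting these, $-\tilde h_0(x,\nabla\varphi(x))\ge\tfrac{\ep B\,d^0(x)}{d^0(x)+\ep}\bigl(c_g-\tfrac{CB\ep}{d^0(x)+\ep}\bigr)$. Setting $B':=2CB/c_g$, this is $\ge\tfrac14 c_g B\ep$ on $\Sigma_\ep\cap\{d^0\ge B'\ep\}$ and $\ge-C'\ep$ on $\Sigma_\ep\cap\{d^0<B'\ep\}$; combining with the previous step and enlarging $B$ once more so that $\tfrac14 c_gB$ dominates the $O(\ep)$-constant (which depends on $R_0$), we arrive at $F_\ep^\varphi(x)\ge c_6 B\ep$ on $\Sigma_\ep\cap\{d^0\ge B'\ep\}$ and $F_\ep^\varphi(x)\ge-C_5\ep$ on $\Sigma_\ep\cap\{d^0<B'\ep\}$.

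Finally, on $\Sigma_\ep\cap\{d^0<B'\ep\}$ one has $e^{\varphi/\ep}\le e^{d^0/\ep}\le e^{B'}$, so $\sum_{d^0<B'\ep}|v(x)|^2\le e^{2B'}\|u\|_{\ell^2}^2$; combining the two lower bounds for $F_\ep^\varphi$, the Dirichlet-form estimate and Cauchy--Schwarz gives
\[ c_6 B\ep\!\!\!\sum_{d^0(x)\ge B'\ep}\!\!\!|v(x)|^2\ \le\ \|v\|_{\ell^2}\bigl\|e^{\varphi/\ep}(H_\ep^\Sigma-E)u\bigr\|_{\ell^2}+C\ep\|v\|_{\ell^2}^2+C(B)\ep\|u\|_{\ell^2}^2\,. \]
Since $\|v\|_{\ell^2}^2\le\sum_{d^0\ge B'\ep}|v|^2+e^{2B'}\|u\|_{\ell^2}^2$, choosing $B$ large enough that $c_6 B>2C$ absorbs the term $C\ep\|v\|_{\ell^2}^2$, leaving $\|v\|_{\ell^2}^2\le\tfrac{C}{\ep}\|v\|_{\ell^2}\,\|e^{\varphi/\ep}(H_\ep^\Sigma-E)u\|_{\ell^2}+C\|u\|_{\ell^2}^2$; a single application of Young's inequality then gives exactly \eqref{weigequ}, and \eqref{eigenu} is the special case $(H_\ep^\Sigma-E)u=0$, $\|u\|_{\ell^2}=1$ (the general complex eigenfunction being treated by taking real and imaginary parts). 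I expect the two genuinely delicate points to be: (i) the uniform-in-$\ep$ expansion $\sum_\gamma\tilde c_\gamma(x)=-\tilde t_0(x,\nabla\varphi(x))+O(\ep)$, which requires controlling the full $\gamma$-sum by balancing the exponential decay \eqref{abfallagamma} against the exponential growth of $\cosh$, and hence the global Lipschitz and ${\mathscr C}^2$ control of the extended weight $\varphi$; and (ii) the lower bound for $-\tilde h_0(x,\nabla\varphi(x))$, which is the real input of the eikonal analysis of Theorem \ref{eikonald} and of strict fibre convexity, and which is what forces $B$ to be chosen large in terms of $R_0$.
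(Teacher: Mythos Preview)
Your strategy coincides with the paper's: the same conjugated quadratic-form identity (their Lemma~3.1), the same lower bound $-C\ep\|v\|^2$ on the Dirichlet form, the same reduction of the diagonal term to the symbol $-\tilde h_0(x,\nabla\varphi)$ modulo $O(\ep)$, and the same near/far splitting according to $d^0\gtrless B'\ep$. The difference is the choice of weight, and this is exactly where your argument has a gap at the point you flagged as delicate.

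You claim $\|D^2\varphi\|_\infty$ is bounded uniformly in $\ep$, but you use it as if it were also bounded uniformly in $B$. It is not: $\partial_\nu\partial_\mu\varphi=\bigl(1-\tfrac{B\ep}{d^0+\ep}\bigr)\partial_\nu\partial_\mu d^0+\tfrac{B\ep}{(d^0+\ep)^2}\partial_\nu d^0\,\partial_\mu d^0$, and at $x=0$ the first bracket equals $1-B$ while $D^2d^0(0)$ is nondegenerate, so $\|D^2\varphi\|_\infty\asymp B$. The Taylor remainder $\tfrac1\ep|\varphi(x)-\varphi(x+\gamma)+\gamma\cdot\nabla\varphi(x)|\le\tfrac{1}{2\ep}\|D^2\varphi\|_\infty|\gamma|^2$ then makes the symbol error $O(B\ep)$, so on $\{d^0\ge B'\ep\}$ you only get $F_\ep^\varphi\ge(\tfrac14 c_g-C_1)B\ep$ with $C_1$ fixed, and ``enlarging $B$'' no longer helps. (Your Lipschitz bound $\|\nabla\varphi\|_\infty\le\|\nabla d^0\|_\infty$ for small $\ep$ is correct, so the Dirichlet-form constant is indeed $B$-free; the trouble is only in the second derivative.) The gap is fixable: for $x$ with $d^0(x)\ge B'\ep$ and $B'\ge B$, the segment $[x,x+\gamma]$ stays in $\{d^0\ge(B-1)\ep\}$ (where $|D^2\varphi|\le C$ independently of $B$) whenever $|\eta|\lesssim B'-B$, and the complementary tail is killed by the exponential decay \eqref{abfallagamma}; but this extra localisation argument is missing.

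The paper sidesteps the issue by choosing a weight $\Phi$ that equals $d^0+\mathrm{const}$ on $\{d^0\le\tfrac{B\ep}{2}\}$ and $d^0-\tfrac{B\ep}{2}\log\tfrac{2d^0}{B\ep}+\mathrm{const}$ on $\{d^0\ge B\ep\}$, glued by a cutoff $\chi(d^0/B\ep)$; one then checks (their Lemma~3.3) that $|\nabla\Phi|\le|\nabla d^0|$ everywhere and $\|D^2\Phi\|_\infty\le C$ \emph{independently of $B$}, so the symbol error is a genuine $O(\ep)$. For the positivity they use convexity rather than Taylor: on $\{d^0\ge B\ep\}$ one has $\nabla\Phi=(1-\lambda)\nabla d^0$ with $\lambda=\tfrac{B\ep}{2d^0}\in[0,1]$, and concavity of $\xi\mapsto-\tilde t_0(x,\xi)$ together with $\tilde t_0(x,0)=0$ and the eikonal inequality give directly $V_0-\tilde t_0(x,\nabla\Phi)\ge\lambda V_0\ge\tfrac{B\ep}{C_0}$ via $V_0\asymp d^0$, with no $B$-dependent remainder to worry about.
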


\begin{rem} The estimate \eqref{weigequ} ist the analog of the sharp semiclassical Agmon estimate in 
Helffer-Sj\"ostrand \cite{hesjo} for the Schr\"odinger operator. We emphasize that our generalization
in the context of general Finsler geometry is a result in the semiclassical limit, under the 
crucial hypothesis that both the kinetic and the potential energy have a non-degenerate minimum at
$\xi=0, x=0$. 
In particular, we have nothing to report for an analog of the original Agmon estimate if $\ep=1$ and
$|x|\to\infty$. 
\end{rem}

The plan of the paper is as follows.

Section \ref{Kap5} is devoted to the construction and properties of a Finsler function associated to a hyperregular
Hamilton function. In particular, in Subsection \ref{fins1} we introduce the general notion of a Finsler manifold,
the associated curve length and Finsler geodesics. In Subsection \ref{Finshyp} we construct the absolute
homogeneous Finsler function $\ell_{h,E}$ with respect to an hyperregular Hamilton function $h$ and a
fixed energy $E$. In particular, we prove Theorem \ref{listfinsler}. The proof of Theorem \ref{eulermaup}
is given in Subsection \ref{geodesic}.
In Subsection \ref{Finseik} we prove Lemma \ref{propt} and we show that we can apply the results derived up to this point to the
Hamilton function $\tilde{h}_0$ defined in \eqref{tildehnull}.
Subsection \ref{proofeiko} contains the proof of Theorem \ref{eikonald}.

In Section \ref{Kap5a} we
show the exponential decay of the eigenfunctions of the low lying spectrum of $H_\ep$
with a rate controlled by the Finsler distance constructed in Section \ref{Kap5}.
In particular, in  Subsection \ref{pre} we show three basic lemmata and in Subsection \ref{proofweig} we
prove Theorem \ref{weig}.
In Section \ref{ananwend} we describe how a certain class of Markov chains fits into the framework of our
hypotheses.

\section{Finsler Distance associated to $H_\ep$}\label{Kap5}

\subsection{Definition and Properties of Finsler Manifold and Finsler Metric}\label{fins1}

We introduce the general notion of a Finsler manifold and Finsler
distance (for detailed
description of Finsler manifolds we refer e.g. to Bao-Chern-Shen \cite{bao},
Abate-Patrizio \cite{abate}).

For a manifold $M$, $\pi: TM\ra M$ denotes the tangent bundle with fibre $T_xM=\pi^{-1}(x)$.
We denote an element of $TM$ by $(x,v)$ where $x\in M$ and $v\in T_xM$.
Analogously, $(x,\xi)$ with $\xi \in T^*_xM$ denotes a point in the cotangent bundle $\pi^*: T^*M\ra M$.

The canonical pairing between an element $v\in T_xM$ and $\xi\in T^*_xM$ is written as $v\cdot \xi$.

\begin{Def}\label{Finslerman}
Let $M$ be a $d$-dimensional $\Ce^\infty$-manifold and
$TM\setminus \{0\} :=\{(x,v)\in TM\,|\,v\neq 0\}$ the slit tangent bundle.
\ben
\item A (Lagrange)-function $F:TM \ra [0,\infty)$ is called a Finsler function on $M$, if:
\ben
\item[1)] $F$ is of class ${\mathscr C}^\infty (TM\setminus \{0\})$.
\item[2)] $F(x,\lambda v) = \lambda F(x,v)$
for $\lambda > 0$, i.e., $F$ is positive homogeneous of order 1 in each fibre.
\item[3)] $g(x,v):= D^2_v\left(\frac{1}{2}F^2\right)|_{(x,v)}$ is positive definite as a bilinear
form on $T_xM$ for all $(x,v)\in TM\setminus \{0\}$.
\een
\item A Finsler function $F$ is said to be absolute homogeneous,
if
\ben
\item[4)] $F(x,\lambda v) = |\lambda| F(x,v)$ for all
$\lambda\in\R$,
\een
\item A manifold together with a Finsler function, $(M,F)$, is called a Finsler manifold.
\een
\end{Def}

\begin{rem}
\ben
\item By Euler's Theorem, 2) implies $\sum_{i,j=1}^d g_{ij}(x,v) v_i v_j = F^2(x,v)$ and thus 3) implies 
$F(x,v)>0$ for $v\neq 0$ (for details see Bao-Chern-Shen \cite{bao}).
\item In Agmon \cite{agmon}, the definition of a Finsler function is slightly more general: $F$ is only
required to be continuous and $F(x,.)$ to be positive and convex. In this paper, we use the
definition of Bao-Chern-Shen \cite{bao}, since the natural class of Finsler functions which we construct
turn out to be Finsler in this more narrow sense. Thus we can use results of Bao-Chern-Shen \cite{bao}.
\een
\end{rem} 

A Finsler function induces a curve length
on $M$ as follows.
A curve $\gamma: [a,b]\rightarrow M$ on M
is called regular, if it is $\Ce^2$ and $\dot{\gamma}(t)\neq 0$ for all $t\in[a,b]$.
We introduce the Banach manifold
\begin{equation}\label{gammaab}
\Gamma_{a,b}(x_1,x_2) := \{ \gamma\in\Ce^2([a,b],M)\, |\, \gamma\;\;\text{is regular and}\;\; \gamma(a) = x_1,\, \gamma(b) =
x_2\, \} \; .
\end{equation}

\begin{Def}\label{Finslerman2}
For any Finsler
function $F$ on $M$, the curve length $s_F : \Gamma_{a,b}(x_1,x_2) \ra \R$ associated
to $F$ is defined as
\[ s_F(\gamma) := \int_{a}^{b} F(\gamma(t),\dot{\gamma} (t)) \, dt \, . \]
\end{Def}

For any $\delta >0$, a regular variation of $\gamma\in \Gamma_{a,b}(x_1,x_2)$ is a $\Ce^2$-map
$\gamma_\delta:[a,b]\times (-\delta, \delta)\ra M$,
such that $\gamma_\delta (t,0) = \gamma(t)$ for all $t\in[a,b]$ and $\gamma_\delta(.,u)$ is regular for each $u\in
(-\delta,\delta)$.

Each $\Ce^2$-map $\gamma_\delta: (-\delta,\delta) \ra \Gamma_{a,b}(x_1,x_2)$ with $\gamma_\delta(0) = \gamma$
can be considered as a regular variation of $\gamma$ with fixed endpoints
(i.e. with $\gamma_\delta(a,u)=x_1$ and $\gamma_\delta(b,u) = x_2$ for all $u\in (-\delta,\delta)$).
Therefore the tangent space of $\Gamma_{a,b}(x_1,x_2)$ at a point $\eta$ is given by
\begin{equation}\label{tangentspace}
   T_\eta\Gamma_{a,b}(x_1,x_2) = \{ \partial_u \eta_\delta|_{u=0}\,|\, \eta_\delta\;\;\text{is a regular
variation of}\;\;\eta\;\;\text{with fixed endpoints}\} \; ,
\end{equation}
where $\partial_u \eta_\delta|_{u=0}$
is considered as a vector field along $\eta$, i.e.,
as a function $\partial_u \eta_\delta|_{u=0}:[a,b]\ra TM$ such that
$\partial_u \eta_\delta|_{u=0}(t)\in T_{\eta(t)}M$. Since the variation $\eta_\delta$ has
fixed endpoints, it follows that $\partial_u \eta_\delta|_{u=0}(a) = \partial_u \eta_\delta|_{u=0}(b) =0$.

\begin{Def}\label{Finslerman3a}
$\gamma\in \Gamma_{a,b}(x_1,x_2)$ is called a geodesic with respect to the Finsler
function $F$ (or a Finsler geodesic), if $ds_F|_\gamma = 0$.
\end{Def}

\begin{Def}\label{Finslerman3}
Let $(M,F)$ denote a Finsler manifold.
\ben
\item
The Finsler distance $d_F(x_1,x_2): M\times
M\rightarrow [0,\infty]$ between the points $x_1$ and $x_2$ is defined by
\[ d_F(x_1,x_2) := \inf_{\gamma\in\Gamma_{0,1}(x_1,x_2)} s_F(\gamma) \; . \]
If $\Gamma_{0,1}(x_1,x_2)$ is empty, the distance is defined to be
infinity.
\item A geodesic $\gamma$ between two points $x_1$ and $x_2$ is called
minimal, if $s_F(\gamma) = d(x_1,x_2)$.
\een
\end{Def}

It follows easily from the definitions of a Finsler function $F$ and the associated Finsler
distance $d_F$ that
$d_F(x_1,x_2) \geq 0$, where equality holds if and only if $x_1=x_2$. Furthermore 
the triangle
inequality $d_F(x_1,x_3) \leq d_F(x_1,x_2) + d_F(x_2,x_3)$ holds.
If in addition the Finsler function $F$ is absolute homogeneous, then
$d_F(x_1,x_2) = d_F(x_2,x_1)$.
Thus for an absolute homogeneous Finsler function, $(M,d_F)$ is a metric space.

\begin{Def}\label{SMPTM}
We denote by $SM:=TM/\sim_S$ the sphere bundle, where
\[ (x,v)\sim_S (y,w)\, ,\quad\text{if}\quad x=y\quad\text{and}
\quad v=\lambda w\quad \text{for any}\quad \lambda >0\, .\]
Let $\pi_s:TM\ra SM$ denote the projection $\pi_s(x,v) =[x,v]$.
\end{Def}

\subsection{The Finsler Function of a hyperregular Hamilton function}\label{Finshyp}

To define a Finsler distance for which an analog of Jacobi's Theorem holds, we briefly introduce the notion of
fibre derivatives, hyperconvexity and hyperregularity of $h$.
It is shown in Proposition \ref{hyperconvex} that hyperconvexity of $h$ is a sufficient condition for
hyperregularity.\\

\begin{Def}\label{fibreder}
\ben
\item
Let $M$ be a manifold and $f\in{\mathscr C}^\infty\left(T^*M,\R\right)$. Then for $f_x:= f|_{T^*_xM}$ the map
${\mathcal D}_Ff:T^*M \ra TM$ defined by
${\mathcal D}_F f(x,\xi) := Df_x(\xi)$ is called the fibre derivative of $f$.
\item Analogously, the fibre derivative of a function $g\in{\mathscr C}^\infty\left(TM,\R\right)$
is defined as ${\mathcal D}_Fg:TM \ra T^*M\, ,\quad {\mathcal D}_F g(x,v) := D g_x(v)$.
\item A function $f: SM \ra TM$ is called strictly fibre preserving, if $f([x,u]) \in [x,u]$, where
$[x,u]$ denotes the equivalence class with respect to $\sim_S$.
\item A smooth function $h:T^*M \ra \R$ (or $L:TM \ra \R$) is said to be hyperregular, if its
fibre derivative ${\mathcal D}_Fh:T^*M\ra TM$ (or ${\mathcal D}_FL:TM\ra T^*M$) is a diffeomorphism.
For a hyperregular function $h\in\Ce^\infty(T^*M)$, we sometimes use the notation
\begin{equation}\label{xivonv}
\xi_h(x,v) := \left(\mathcal{D}_F h\right)^{-1}(x,v) \quad\text{and}\quad v_h(x,\xi) := \mathcal{D}_F h (x,\xi) \; .
\end{equation}
\een
\end{Def}

\begin{Def}\label{konvexdef}
For a normed vector space $V$ we call a function $L\in\Ce^2(V,\R)$
hyperconvex, if there exists a constant $\alpha > 0$ such that
\[ D^2L|_{v_0}(v,v)\geq\alpha \|v\|^2\quad\text{for all}\quad v_0,v\in V\, .\]
\end{Def}

We recall that a strictly convex function $L\in\Ce^2(V,\R)$ has the properties
\begin{align}\label{energie10}
L(v_1) - L(v_2) &\geq DL(v_2)(v_1-v_2)\\
\label{energie5}
D^2 L|_{v_0}[v,v]  &> 0 \\
\label{energie8}
\left(DL(v_1) - DL(v_2)\right)(v_1-v_2) &> 0
\end{align}

\begin{prop}\label{hyperconvex}
If a real valued function $h\in {\mathscr C}^\infty\left(T^*M\right)$ is hyperconvex in each fibre
$T_x^*M$, it is hyperregular.
\end{prop}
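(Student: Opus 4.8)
The plan is to show that hyperconvexity in each fibre forces the fibre derivative $\mathcal{D}_Fh: T^*M \to TM$ to be, fibrewise, a global diffeomorphism onto $TM$, which immediately gives that $\mathcal{D}_Fh$ is a diffeomorphism (being smooth, fibre-preserving over the identity on $M$, with smooth inverse). Since the statement is fibrewise, I would fix $x\in M$, identify $T^*_xM$ with a Euclidean space $V$ (choosing an inner product, e.g. via a chart or local frame), and write $f := h_x = h|_{T^*_xM} \in \Ce^\infty(V,\R)$; then $\mathcal{D}_Fh(x,\cdot)$ becomes the gradient map $Df: V \to V^* \cong V$. Hyperconvexity says $D^2f|_{\xi_0}(\eta,\eta) \geq \alpha\|\eta\|^2$ for all $\xi_0,\eta$, with $\alpha>0$ independent of $\xi_0$. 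I must show $Df$ is a $\Ce^\infty$-diffeomorphism of $V$ onto $V$.

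First, \emph{injectivity}: for $\xi_1 \neq \xi_2$, the strict-convexity consequence \eqref{energie8}, applied to $f$, gives $\bigl(Df(\xi_1) - Df(\xi_2)\bigr)(\xi_1-\xi_2) > 0$; in fact the uniform lower bound yields the stronger monotonicity $\bigl(Df(\xi_1) - Df(\xi_2)\bigr)(\xi_1-\xi_2) \geq \alpha\|\xi_1-\xi_2\|^2$ by integrating $D^2f$ along the segment from $\xi_2$ to $\xi_1$. This at once forces $Df(\xi_1)\neq Df(\xi_2)$. Second, \emph{local diffeomorphism}: by the inverse function theorem it suffices that $D^2f|_{\xi_0}$ be invertible at each $\xi_0$, and this is exactly the content of the uniform positive-definiteness bound (a symmetric bilinear form bounded below by $\alpha\|\cdot\|^2$ is nonsingular). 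Third, \emph{surjectivity}: this is the step that genuinely uses the \emph{uniform} constant $\alpha$ rather than mere strict convexity. From the monotonicity estimate above and Cauchy–Schwarz, $\|Df(\xi) - Df(0)\| \cdot \|\xi\| \geq \alpha\|\xi\|^2$, hence $\|Df(\xi)\| \geq \alpha\|\xi\| - \|Df(0)\| \to \infty$ as $\|\xi\|\to\infty$, so $Df$ is proper. Equivalently, fixing $w\in V$, the function $\xi \mapsto f(\xi) - w\cdot\xi$ is $\alpha$-strongly convex, hence coercive, hence attains a global minimum at some $\xi_*$, where $Df(\xi_*) = w$. Either way every $w\in V$ is hit. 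Combining: $Df$ is a smooth bijection of $V$ with everywhere-invertible derivative, so by the inverse function theorem its global inverse is smooth, i.e. $Df$ is a $\Ce^\infty$-diffeomorphism.

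Finally I would package the fibrewise statement. The map $\mathcal{D}_Fh$ sends $(x,\xi)$ to $(x, Df_x(\xi))$, so it is a bijection of $T^*M$ onto $TM$ covering the identity, and smoothness of $\mathcal{D}_Fh$ is given (it is a fibre derivative of a smooth function); smoothness of the inverse follows because on each fibre it is smooth and depends smoothly on $x$ — concretely, the global implicit function theorem applied to the smooth map $(x,v,\xi)\mapsto \mathcal{D}_Fh(x,\xi) - (x,v)$, whose fibre derivative in $\xi$ is the invertible $D^2f_x|_\xi$, produces a smooth local inverse near every point, and these agree with the already-constructed global fibrewise inverse. Hence $\mathcal{D}_Fh$ is a diffeomorphism $T^*M \to TM$, which is the definition of hyperregularity.

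\textbf{Main obstacle.} The only nonformal point is \emph{surjectivity} of the fibrewise gradient map: plain strict convexity (the pointwise inequalities \eqref{energie10}–\eqref{energie8}) is not enough — e.g. $f(t) = \sqrt{1+t^2}$ is strictly convex with $Df$ having bounded range. The uniform bound $D^2f \geq \alpha\,\mathrm{id}$ in Definition \ref{konvexdef} is precisely what rescues coercivity/properness, and the proof must make visible where that uniformity enters (in the estimate $\|Df(\xi)\|\gtrsim \alpha\|\xi\|$, equivalently in the $\alpha$-strong convexity of $\xi\mapsto f(\xi)-w\cdot\xi$). Everything else — injectivity, the local diffeomorphism property, and assembling the fibrewise statements into a bundle diffeomorphism with smooth inverse — is routine.
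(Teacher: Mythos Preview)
Your proof is correct, and the injectivity and local-diffeomorphism steps match the paper's argument. The genuine difference is in the surjectivity of $Df_x$: you argue via coercivity (the function $\xi\mapsto f(\xi)-w\cdot\xi$ is $\alpha$-strongly convex, hence attains a minimum where $Df(\xi_*)=w$), equivalently via properness of $Df$. The paper instead runs an ODE argument: given $v_0\in T_xM$, it solves $\dot\xi(t)=(D^2h_x|_{\xi(t)})^{-1}v_0$, $\xi(0)=0$, observes that hyperconvexity gives the uniform bound $\|(D^2h_x|_\xi)^{-1}\|\leq \alpha^{-1}$, so $\|\dot\xi\|\leq \alpha^{-1}\|v_0\|$ and the solution persists to $t=1$, and then $Dh_x(\xi(1))-Dh_x(\xi(0))=\int_0^1 D^2h_x(\xi)\dot\xi\,dt=v_0$. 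Your route is the standard convex-analysis one and is shorter; the paper's route is more constructive, explicitly tracing the preimage of the segment $t\mapsto tv_0$. Both arguments invoke the uniform constant $\alpha$ at exactly the analogous place --- you to force coercivity, the paper to keep the ODE solution from blowing up --- so your identification of this as the ``main obstacle'' is on target. A minor organizational difference: the paper establishes that $\mathcal{D}_Fh$ is a local diffeomorphism of the total space first, via the block-triangular Jacobian with $D^2_\xi h$ in the lower-right block, and only then checks fibrewise bijectivity; you do the fibrewise analysis first and assemble afterwards with the implicit function theorem. Either order is fine.
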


\begin{proof}

By definition, ${\mathcal D}_Fh$ is fibre preserving, thus in the coordinates on $TM$ and $T^*M$
induced from local coordinates on $M$ at $x_0$,
its derivative $D{\mathcal D}_Fh|_{(x_0,\xi_0)}$ is given by the $2d\times 2d$-matrix
\begin{equation}\label{hypercon}
 \left(\begin{array}{cc} \id & 0 \\ * & M \end{array}\right) \; ,
\end{equation}
where $M$ is the matrix representation of $D^2_\xi h|_{(x_0,\xi_0)} = D\left({\mathcal D}_F h|_{T_x^*M}\right)$.

Since $h$ was assumed to be
hyperconvex in each fibre, $M$ is positive definite and thus it follows from \eqref{hypercon} that
${\mathcal D}_Fh$ is a local diffeomorphism.

We claim that $Dh_x:T_x^*M\ra T_xM$ is bijective for all $x\in M$.
Since ${\mathcal D}_F h$ is fibre preserving, this shows that ${\mathcal D}_Fh$ is a global diffeomorphism
and finishes the proof.
Thus we fix any $x\in M$ and analyze $Dh_x$.

Since $h_x$ is strictly convex for each $x\in M$, by \eqref{energie8}
\[ (\xi - \eta) \left(Dh_x(\xi) - Dh_x(\eta)\right)>0\; ,\qquad \xi, \eta\in T^*_xM,\,
\eta\neq \xi \]
holds and thus ${\mathcal D}_F h(x,.)= Dh_x$ is injective.

To show the surjectivity, we claim that, for any $v_0\in T_xM$, the initial value problem
\begin{equation}\label{AWPxi}
v_0 = \frac{d}{dt} D h_x(\xi(t)) =  D^2 h_x(\xi(t))\cdot \dot{\xi}(t)\; , \qquad \xi(0)=0\; .
\end{equation}
has a solution $\xi(t)$ for all $t\in[0,1]$. Then
\[ v_0 = \int_0^1 D^2 h_x(\xi(t))\cdot \dot{\xi}(t)\, dt =
D h_x(\xi(1)) - D h_x(\xi(0)) = D h_x(\xi(1)) \]
and thus $D h_x$ is surjective.

Since $h$ is hyperconvex, the inverse $\left(D^2 h_x|_{\xi(t)}\right)^{-1}$ exists, thus \eqref{AWPxi} can
be rewritten as
\begin{equation}\label{AWPxi2}
\dot{\xi}(t) =  \left(D^2 h_x|_{\xi(t)}\right)^{-1}\cdot v_0\,, \qquad
\xi (0) = 0 \, .
\end{equation}
Thus \eqref{AWPxi2} is of the form $\dot{\xi}= F(\xi)$, where $F$ is locally Lipschitz.
Therefore for any $v_0\in T_xM$,
\eqref{AWPxi2}
has a solution, which
either exists for all $t\geq 0$ or
becomes infinite for a finite value of $t$.

In order to exclude that the curve $\xi$ reaches infinity for some $t<1$,
we need the hyperconvexity of $h$.
We choose a norm $\|.\|_{T_x^*M}$ on $T_x^*M$ and denote by $\|.\|_{T_xM}$ the norm on
$T_xM$, which is induced by duality. Since for fixed $\eta\in T_x^*M =T_\xi(T_x^*M)$ the second derivative
$D^2h_x|_\xi(\eta)$ can be regarded as an element of $T_xM$, it follows by the
hyperconvexity of $h$ that there exists a constant $\alpha >0$ such that for all $\xi\in T_x^*M$
\begin{equation}\label{hypercon1}
\|D^2h_x|_\xi(\eta)\|_{T_xM} = \sup_{\mu\in T_x^*M} \frac{|D^2h_x|_\xi(\eta,\mu)|}{\|\mu\|_{T_x^*M}}
\geq \frac{|D^2h_x|_\xi(\eta,\eta)|}{\|\eta\|_{T_x^*M}}\geq  \alpha \|\eta\|_{T_x^*M}\, , \quad \eta\in T_\xi(T_x^*M)
\end{equation}
and therefore
\begin{equation}\label{hypercon2}
 \|v\|_{T_xM} = \|D^2h_x|_\xi\left(D^2h_x|_\xi\right)^{-1} (v)\|_{T_xM} \geq
 \alpha \|\left(D^2h_x|_\xi\right)^{-1} (v)\|_{T^*_xM} \; ,\qquad v\in T_xM \; .
\end{equation}
\eqref{AWPxi2} together with \eqref{hypercon2} yields
\begin{equation}
 \|\dot{\xi}(t)\|_{T_x^*M} = \| \left(D^2 h_x|_{\xi(t)}\right)^{-1}(v_0)\|_{T_x^*M} \leq \frac{1}{\alpha} \|v_0\|\; .
\end{equation}
Therefore the curve $\xi(t)$ exists for all $t\in[0,1]$ and ${\mathcal D}_Fh(\xi (1))=v_0$.

\end{proof}

For any hyperregular Hamilton function $h\in {\mathscr C}^\infty\left(T^*M\right)$,
we define the energy function $E_h$ on $TM$ by
\begin{equation}\label{energiefunktion}
E_h (x,v) :=h \circ \left({\mathcal D}_F h\right)^{-1} (x,v) (= h(x,\xi_h(x,v))
\end{equation}
and the action
\begin{equation}\label{kanpaarung}
A_h: TM \ra \R\, , \quad A_h(x,v):= \left({\mathcal D}_F h\right)^{-1}(x,v)\,\cdot\, v (= \xi_h(x,v)\cdot v)\; .
\end{equation}
Then the Lagrange function
\begin{equation}\label{lagrangefunktion}
L_h:TM \ra \R \quad\text{defined by} \quad L_h(x,v) = A_h(x,v) - E_h(x,v)
\end{equation}
(the Legendre transform of h) is hyperregular on $TM$ and
\begin{equation}\label{lagrangehamilton}
\mathcal{D}_FL_h (x,v) = \left({\mathcal D}_Fh\right)^{-1}(x,v)
\end{equation}
(by Theorem 3.6.9 in \cite{abma}, the hyperregular Lagrange functions on $TM$ and the
hyperregular Hamilton functions on $T^*M$ are in bijection).
In particular, by \eqref{kanpaarung} and \eqref{lagrangehamilton},
\begin{equation}\label{Ahanders}
A_h(x,v) = {\mathcal D}_F L_h(x,v) \cdot v\; .
\end{equation}

\begin{Def}\label{singularpunkt}
For a smooth manifold $M$, a $\Ce^\infty$-function $h: T^*M\ra \R$ and $E\in \R$, we define the singular set
$S_h(E)$ by
\[ S_h(E) := \{x\in M\;|\;  h(x,0) \geq E\, \}\, . \]
Since
$h(.,0)$ is continuous, $S_h(E)$ is closed. Thus
$\widetilde{M}:=M\setminus S_h(E)$ is again a smooth manifold.\\
\end{Def}

For $h$ hyperregular, we shall now introduce an associated
Finsler function on $\widetilde{M}$.

\begin{theo}\label{listfinsler}
Let $M$ and $h\in{\mathscr C}^\infty \left(T^*M\right)$
satisfy Hypothesis \ref{hypfins} and let $E$, $S_h(E)$ and $\widetilde{M}$ be as described in Definition
\ref{singularpunkt}.
\ben
\item[i)] Then there exists a strictly fibre preserving ${\mathscr C}^\infty$-function
$ \tau_{E}  : S\widetilde{M} \rightarrow T{\widetilde M}$, which is uniquely determined
by the condition
\begin{equation}\label{taucond}
 h\circ\left({\mathcal D}_Fh\right)^{-1}\circ \tau_{E} = E \; .
\end{equation}
\item[ii)]
Let $\tilde{\tau}_{E}:=\tau_{E}\circ\pi_S:T{\widetilde M}\ra T\widetilde{M}$ and
let $\ell_{h,E}:T\widetilde{M} \rightarrow \R$ be defined by
\[ \ell_{h,E}(x,v):= \left({\mathcal D}_Fh\right)^{-1} \circ\tilde{\tau}_{E}(x,v)\,\cdot\, v \, . \]
Then $\ell_{h,E}$ is an absolute homogeneous Finsler function on $\widetilde{M}$.
\item[iii)]
For any regular curve $\gamma:[a,b]\ra \widetilde{M}$, there exists a unique ${\mathscr C}^1$-function
$\lambda:[a,b]\ra \R_+$ such that
\begin{equation}\label{tauundlambda}
\tilde{\tau}_{E}(\gamma(t),\dot{\gamma}(t)) = (\gamma(t), \lambda(t)\dot{\gamma}(t))\; .
\end{equation}
\een
\end{theo}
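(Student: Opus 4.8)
The plan is to treat the three items in order, since each builds on the previous one. For part i), I would fix a point $[x,u]\in S\widetilde M$ with representative $(x,u)\in T_x\widetilde M\setminus\{0\}$ and look for $\tau_E([x,u])=(x,\lambda u)$ with $\lambda>0$ satisfying $h(x,\xi_h(x,\lambda u))=E$; writing $\phi(\lambda):=E_h(x,\lambda u)$ this is the scalar equation $\phi(\lambda)=E$. By Hypothesis \ref{hypfins}, $h$ is hyperregular (so $E_h$ is well defined and smooth), even in each fibre, and strictly convex in each fibre; the latter forces $E_h(x,\cdot)$ to be strictly convex with a unique minimum at $v=0$ where $E_h(x,0)=h(x,0)$, and since $x\in\widetilde M$ means $h(x,0)<E$, while $E_h(x,\lambda u)\to\infty$ as $\lambda\to\infty$ (strict convexity plus coercivity, which I would extract from strict convexity of $h$ in the fibre together with $h(\cdot,0)$ bounded above — this is where I would need to argue a bit), the function $\phi$ is strictly increasing on $[0,\infty)$ from $h(x,0)<E$ to $+\infty$, hence $\phi(\lambda)=E$ has a unique positive solution $\lambda=\lambda(x,u)$. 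Smoothness of $\tau_E$ follows from the implicit function theorem applied to $(x,u,\lambda)\mapsto E_h(x,\lambda u)-E$, whose $\lambda$-derivative is $\phi'(\lambda)=\mathcal D_Fh^{-1}(x,\lambda u)\cdot u\cdot(\text{something})>0$ at the solution — more precisely $\phi'(\lambda)=\xi_h(x,\lambda u)\cdot u$ and one checks this is positive because $\phi$ is strictly convex with minimum at $0$. That $\tau_E$ depends only on the class $[x,u]$ is immediate since replacing $u$ by $\mu u$ ($\mu>0$) replaces $\lambda$ by $\lambda/\mu$, giving the same point $(x,\lambda u)$; this also gives uniqueness, so $\tau_E$ is the unique strictly fibre-preserving solution of \eqref{taucond}.

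For part ii), set $\ell:=\ell_{h,E}$ with $\ell(x,v)=\xi_h(x,v_E)\cdot v$ where $v_E:=\tilde\tau_E(x,v)$ is the rescaling of $v$ to the energy shell $\{E_h=E\}$. Absolute homogeneity is the key structural point: for $\lambda\neq 0$, $\tilde\tau_E(x,\lambda v)=\tilde\tau_E(x,v)$ when $\lambda>0$ (same ray) and for $\lambda<0$ one uses that $h$ is even in the fibre, hence $\xi_h$ is odd and $E_h$ is even, so the energy shell is symmetric and $v_E$ flips sign; in both cases $\ell(x,\lambda v)=\xi_h(x,\pm v_E)\cdot(\lambda v)=\pm\lambda\,\xi_h(x,v_E)\cdot v=|\lambda|\,\ell(x,v)$. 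Positivity $\ell(x,v)>0$ for $v\neq 0$: here I would invoke the Legendre/Lagrangian picture from \eqref{lagrangefunktion}–\eqref{Ahanders}, namely $\ell(x,v)=A_h(x,v_E)=L_h(x,v_E)+E_h(x,v_E)=L_h(x,v_E)+E$, together with strict convexity of $L_h$ in the fibre and the relation $A_h(x,w)=\mathcal D_FL_h(x,w)\cdot w>\text{(value at }0)$ for $w\neq 0$ by \eqref{energie8} applied to $L_{h,x}$ — alternatively, directly, $\xi_h(x,v_E)\cdot v>0$ because $v$ and $v_E$ point the same way and $\xi_h(x,v_E)=\mathcal D_FL_h(x,v_E)$ pairs positively with $v_E$ by strict convexity of $L_{h,x}$ at the origin. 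Smoothness of $\ell$ on $T\widetilde M\setminus\{0\}$ follows from smoothness of $\tau_E$, $\pi_S$ and $\mathcal D_Fh^{-1}$. The remaining and genuinely substantive point is condition 3) of Definition \ref{Finslerman}, that $g_{ij}=D^2_v(\tfrac12\ell^2)$ is positive definite; I expect this to be the main obstacle. I would compute $\tfrac12\ell^2(x,v)=\tfrac12(L_h(x,v_E)+E)^2$ and differentiate twice in $v$, carefully using the defining relation $E_h(x,v_E)=E$ (which, differentiated, controls $\partial_v v_E$), and reduce the positive-definiteness to the hyperconvexity/strict convexity of $h$ (equivalently of $L_h$) in the fibre; this is the standard but delicate computation showing that the "energy-shell rescaling" of a strictly convex even Hamiltonian produces a genuine Finsler metric, and I would present it as a lemma-style calculation rather than inline.

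For part iii), given a regular curve $\gamma:[a,b]\to\widetilde M$, the point $(\gamma(t),\dot\gamma(t))$ lies in $T\widetilde M\setminus\{0\}$, and by definition $\tilde\tau_E(\gamma(t),\dot\gamma(t))=\tau_E([\gamma(t),\dot\gamma(t)])=(\gamma(t),\lambda(t)\dot\gamma(t))$ where $\lambda(t)=\lambda(\gamma(t),\dot\gamma(t))>0$ is exactly the scalar produced in part i). Positivity and uniqueness of $\lambda(t)$ are already established pointwise in i). Regularity: $\lambda$ is the composition of the $\mathscr C^\infty$ map $\lambda(\cdot,\cdot)$ (smoothness from the implicit function theorem in i)) with $t\mapsto(\gamma(t),\dot\gamma(t))$, which is $\mathscr C^1$ since $\gamma$ is $\mathscr C^2$; hence $\lambda\in\mathscr C^1([a,b],\R_+)$, as claimed. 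The only care needed is to phrase i) so that the solution map $(x,u)\mapsto\lambda(x,u)$ is shown to be smooth on all of $T\widetilde M\setminus\{0\}$, not merely continuous, which the implicit function theorem delivers once $\phi'(\lambda)>0$ is in hand.
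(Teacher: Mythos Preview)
Your overall strategy matches the paper's proof closely: the paper's Lemma \ref{energiemonoton} carries out exactly your $\phi$-analysis for i), and iii) is handled just as you describe. Two corrections are worth flagging.

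First, your formula $\phi'(\lambda)=\xi_h(x,\lambda u)\cdot u$ is not right; the paper computes $\phi'(\lambda)=\lambda\,D^2L_{h,x}|_{\lambda u}(u,u)$, which is positive by strict convexity of $L_{h,x}$. Your fallback reasoning (``strictly convex with minimum at $0$'') also needs care, since convexity of $E_h(x,\cdot)$ is never asserted; only $h$ and $L_h$ are strictly convex in the fibre.

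Second, and more consequential for property 3), you have $\ell(x,v)\neq A_h(x,v_E)$: writing $v_E=\mu(v)\,v$ with $\mu(v)>0$, one has $\ell(x,v)=\xi_h(x,v_E)\cdot v=\mu(v)^{-1}A_h(x,v_E)$, so your proposed starting point $\tfrac12\ell^2=\tfrac12(L_h(x,v_E)+E)^2$ is off by this scalar and will not yield the right Hessian. The paper instead splits $g_{(x,v)}(w,w)=\ell_x(v)\,D^2\ell_x|_v(w,w)+(D\ell_x|_v(w))^2$, computes directly that $D\ell_x|_v(w)=(Dh_x)^{-1}(\mu(v)v)\cdot w$ and
\[
D^2\ell_x|_v(w,w)=\mu(v)\Bigl(\langle w,w\rangle_L-\frac{\langle v,w\rangle_L^2}{\|v\|_L^2}\Bigr),\qquad \langle\cdot,\cdot\rangle_L:=D^2L_x|_{\mu(v)v}(\cdot,\cdot),
\]
obtains $g\geq 0$ from Cauchy--Schwarz, and then argues that $g_{(x,v)}(w,w)=0$ forces $w$ parallel to $v$ (equality in Cauchy--Schwarz) and hence, via the $(D\ell_x|_v(w))^2$ term and Euler homogeneity, $w=0$. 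This is the computation you should aim for; your instinct that it reduces to strict convexity of $L_h$ is correct, but the route through $\mu(v)$ rather than through $L_h(x,v_E)+E$ is what makes it go through.
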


\begin{rem}\label{energieschale}
\ben
\item Since $\ell_{h,E}$ is defined on
$\widetilde{M} = M\setminus S(E)$, we call $(M,\ell_{h,E})$
a Finsler manifold with singularities.
\item If we continuously extend $\ell_{h,E}$ from $\widetilde{M}$ to $M$ by setting $\ell_{h,E}(x,v)=0$ for $x\in S(E)$,
the associated distance $d_\ell$ is well defined on all of $M$. Nevertheless
contrary to the case of a Finsler manifold without singularities (as described for example in
Bao-Chern-Shen \cite{bao}), the geodesic curves with respect to $\ell_{h,E}$ may have kinks at the
``singular points'', which are the connected components of $S_h(E)$.
\item Geometrically, the function $\tilde{\tau}_{E}$ projects an
element $(x,v)$ of the tangent bundle $T\widetilde{M}$ to an element $(x,\lambda v)$
in the $(2d-1)$-dimensional
submanifold $\mathcal{E}= E_h^{-1}(E)$.
\item Schematically the functions occurring in Theorem \ref{listfinsler} are illustrated in the following diagram.
\[
\begin{diagram}
\node[3]{\R}\\
\node{T_x\widetilde M}\arrow{nee,t}{\ell_{h,E}}\arrow{s,l}{(\pi_S, \id)}
\arrow{e} \node{{\mathcal E}\times T_x\widetilde M}
\arrow{e}\node{h^{-1}(E)\times T_x\widetilde M}
\arrow{n}\\
\node{S_x\widetilde M\times T_x\widetilde M}\arrow{ne,b}{\tau_{E}\times\id}\\
\node{v}\arrow{e,t}{(\tilde{\tau}_{E}, \id)}
\node{(\tilde{v},v)}
\arrow{e,t}{(\mathcal{D}_Fh)^{-1}\times \id}
\node{(\xi_h(\tilde{v}),v)}
\end{diagram}
\]
\item With the notation \eqref{xivonv}, $\ell_{h,E}(x,v)$ can be written as
\begin{equation}\label{lalspaar}
\ell_{h,E}(x,v) = \xi_h(x,\tilde{v})\cdot v\quad\text{where}\quad (x,\tilde{v}) = \tilde{\tau}_{E}(x,v)\in
{\mathcal
E} \; .
\end{equation}
\een
\end{rem}

To prove Theorem \ref{listfinsler}, we need the following lemma.

\begin{Lem}\label{energiemonoton}
In the setting of Theorem \ref{listfinsler} fix $x\in\widetilde{M}$ and $u\in T_x\widetilde{M}\setminus \{0\}$.
Then
for $E_h:TM\ra \R$ defined by (\ref{energiefunktion}), the function
\[ E_u:[0,\infty)\rightarrow \R\; , \qquad E_u(\lambda) := E_h (x,\lambda u)\]
is strictly increasing with $\frac{d}{d\lambda}E_u>0$ for $\lambda>0$.
Furthermore $E_u(0)\leq E$ and $\lim_{\lambda\to\infty} E_u(\lambda) = \infty$.
\end{Lem}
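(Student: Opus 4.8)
\textbf{Proof proposal for Lemma \ref{energiemonoton}.}

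The plan is to reduce everything to a one-dimensional statement about the scalar function $\lambda\mapsto E_u(\lambda)=E_h(x,\lambda u)$ and to exploit hyperconvexity of $h$ in the fibre $T^*_xM$. The key observation is that, by \eqref{energiefunktion}, $E_h(x,v)=h(x,\xi_h(x,v))$ where $\xi_h(x,v)=(\mathcal D_Fh)^{-1}(x,v)=Dh_x^{-1}(v)$; since $h_x$ is strictly convex and even with $Dh_x(0)=0$ (evenness of $h$ in the fibre forces the fibre derivative to vanish at the origin), the Legendre transform $L_h(x,\cdot)$ is also strictly convex and even, with $\mathcal D_FL_h(x,v)=\xi_h(x,v)$ by \eqref{lagrangehamilton}. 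First I would write $E_u$ explicitly: along the ray $\lambda\mapsto\lambda u$ we have $E_u(\lambda)=h(x,\xi_h(x,\lambda u))$.

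Next I would compute the derivative. Differentiating in $\lambda$ and using the chain rule together with the fundamental Legendre identity $Dh_x(\xi_h(x,v))=v$ (i.e. $\mathcal D_Fh\circ(\mathcal D_Fh)^{-1}=\id$), one gets
\[
\frac{d}{d\lambda}E_u(\lambda)=Dh_x\bigl(\xi_h(x,\lambda u)\bigr)\cdot\frac{d}{d\lambda}\xi_h(x,\lambda u)=(\lambda u)\cdot\frac{d}{d\lambda}\xi_h(x,\lambda u).
\]
Now $\frac{d}{d\lambda}\xi_h(x,\lambda u)=D\xi_h(x,\lambda u)\,u=\bigl(D^2h_x|_{\xi_h(x,\lambda u)}\bigr)^{-1}u$, because $\xi_h(x,\cdot)=Dh_x^{-1}$ has derivative the inverse Hessian. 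Hyperconvexity of $h$ (Definition \ref{konvexdef}, available here since Hypothesis \ref{hypfins} gives strict convexity in each fibre and, in the application, uniform hyperconvexity; more robustly one may argue directly from strict convexity of $h_x$ via \eqref{energie8}) guarantees that $\bigl(D^2h_x\bigr)^{-1}$ is positive definite, so
\[
\frac{d}{d\lambda}E_u(\lambda)=\lambda\,\bigl\langle u,\bigl(D^2h_x|_{\xi_h(x,\lambda u)}\bigr)^{-1}u\bigr\rangle>0\qquad\text{for }\lambda>0,\ u\neq0,
\]
which gives strict monotonicity. If one prefers to avoid the Hessian bound, monotonicity also follows from \eqref{energie8}: for $0\le\lambda_1<\lambda_2$, set $\xi_i=\xi_h(x,\lambda_i u)$; then $Dh_x(\xi_i)=\lambda_i u$, so $(\xi_2-\xi_1)\cdot(\lambda_2-\lambda_1)u>0$ unless $\xi_1=\xi_2$, forcing $\lambda_1 u=\lambda_2 u$, a contradiction; then $E_u(\lambda_2)-E_u(\lambda_1)=h_x(\xi_2)-h_x(\xi_1)\ge Dh_x(\xi_1)\cdot(\xi_2-\xi_1)=\lambda_1 u\cdot(\xi_2-\xi_1)$, and a symmetric use of \eqref{energie10} at $\xi_2$ together with the already-established $\xi_1\neq\xi_2$ upgrades this to a strict inequality.

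For the boundary values: $E_u(0)=h(x,\xi_h(x,0))=h(x,0)$, and since $x\in\widetilde M=M\setminus S_h(E)$ means precisely $h(x,0)<E$, we get $E_u(0)<E\le E$. For the limit $\lambda\to\infty$, note $\xi_h(x,\cdot)=Dh_x^{-1}$ is a diffeomorphism of $T^*_xM$, and the lower bound $\|\frac{d}{d\lambda}\xi_h(x,\lambda u)\|\ge$ (something positive, via \eqref{hypercon2}-type estimates) shows $\|\xi_h(x,\lambda u)\|\to\infty$; strict convexity plus $Dh_x(0)=0$ forces $h_x$ to be coercive (superlinear), so $E_u(\lambda)=h_x(\xi_h(x,\lambda u))\to\infty$. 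Alternatively, since $\frac{d}{d\lambda}E_u=\lambda\,u\cdot\frac{d}{d\lambda}\xi_h$ and, by hyperconvexity, $\frac{d}{d\lambda}\xi_h$ stays bounded away from directions orthogonal to $u$ with a uniform positive lower bound on $u\cdot\frac{d}{d\lambda}\xi_h\ge c>0$, integration gives $E_u(\lambda)\ge E_u(0)+\tfrac12 c\lambda^2\to\infty$.

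The main obstacle is the $\lambda\to\infty$ limit: strict convexity alone does not give superlinear growth, so one genuinely needs the uniform lower bound on the Hessian that hyperconvexity (or, in the concrete case, the explicit form \eqref{tildetdef} of $\tilde t_0$ with its $\cosh$ terms and Hypothesis \ref{hypdecay}(a)(v) ensuring the relevant directions are all present) supplies. I would make sure to invoke Proposition \ref{hyperconvex} and the estimate \eqref{hypercon2} to pin down that the inverse Hessian is uniformly bounded below, hence $\xi_h(x,\lambda u)$ escapes to infinity linearly in $\lambda$, and then use coercivity of $h_x$. Everything else is a routine one-variable calculus argument once the derivative formula $\frac{d}{d\lambda}E_u(\lambda)=\lambda\langle u,(D^2h_x|_{\xi_h(x,\lambda u)})^{-1}u\rangle$ is in hand.
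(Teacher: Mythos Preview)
Your main line of argument is correct and essentially identical to the paper's. The only cosmetic difference is that the paper writes the derivative as $\frac{d}{d\lambda}E_u(\lambda)=\lambda\,D^2L_{h,x}|_{\lambda u}(u,u)$ using the Hessian of the Legendre transform $L_h$, whereas you write it as $\lambda\,\langle u,(D^2h_x|_{\xi_h(x,\lambda u)})^{-1}u\rangle$; since $D(DL_{h,x})=(D^2h_x)^{-1}\circ Dh_x^{-1}$ these are the same expression. The treatment of $E_u(0)$ and your first argument for $\lim_{\lambda\to\infty}E_u(\lambda)=\infty$ (properness of the diffeomorphism $Dh_x^{-1}$ forcing $|\xi_h(x,\lambda u)|\to\infty$, followed by coercivity of $h_x$) are exactly what the paper does.

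One correction: your \emph{alternative} argument for the limit, integrating a uniform lower bound $u\cdot\tfrac{d}{d\lambda}\xi_h\ge c>0$, has the inequality the wrong way round. Hyperconvexity $D^2h_x\ge\alpha\,\mathrm{id}$ gives $(D^2h_x)^{-1}\le\alpha^{-1}\,\mathrm{id}$, i.e.\ an \emph{upper} bound $u\cdot(D^2h_x)^{-1}u\le\alpha^{-1}|u|^2$, not a lower one; so the quadratic lower bound on $E_u$ does not follow this way. Relatedly, your final paragraph overstates the need for hyperconvexity: the paper works in the abstract setting of Hypothesis~\ref{hypfins} (strict convexity only) and never invokes hyperconvexity in this lemma. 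The point you are worried about---coercivity of $h_x$---follows already from convexity together with hyperregularity: since $Dh_x$ is a diffeomorphism onto $T_xM$, for each unit $v$ pick $\xi_v$ with $Dh_x(\xi_v)=v$; then convexity gives $h_x(\xi)\ge h_x(\xi_v)+v\cdot(\xi-\xi_v)$, and properness of $Dh_x$ keeps $\xi_v$ in a compact set as $v$ ranges over the unit sphere, so the linear lower bounds are uniform. Drop the alternative argument and the hyperconvexity detour and you match the paper exactly.
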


\begin{proof}[Proof of Lemma \ref{energiemonoton}]

Since $h_x$ is even, $D h_x(0) = 0$, thus $v_h(x,0) = 0$, $\xi_h(x,0)=0$ and
\begin{equation}\label{Eunull}
E_u(0) = E_h(x,0) = h(x,0) < E\;  .
\end{equation}
To show that $E_u$ is strictly increasing, we will analyze the derivative of
$E_u$ for $\lambda >0$.\\
By definition ${\mathcal D}_F h (x,\xi) = Dh_x(\xi)$, thus
\begin{equation}\label{energie1}
\frac{dE_u}{d\lambda}|_{\lambda} = D h_x|_{(Dh_x)^{-1}(\lambda u)}
\cdot D \left(Dh_x\right)^{-1}|_{\lambda u} (u) \; .
\end{equation}
We notice that
\begin{equation}\label{energie2}
D h_x |_{(Dh_x)^{-1}(\lambda u)} = \lambda u
\end{equation}
and for $L_h$ defined in \eqref{lagrangefunktion} it follows from \eqref{lagrangehamilton} that
\begin{equation}\label{energie2a}
D \left(Dh_x\right)^{-1}|_{\lambda u} (u) = D \left(DL_{h,x}\right)|_{\lambda u} (u)\; .
\end{equation}
Inserting \eqref{energie2a} and \eqref{energie2} in \eqref{energie1}
yields
\begin{equation}\label{energie4}
\frac{dE_u}{d\lambda}|_{\lambda} = \lambda u \cdot D \left(DL_{h,x}\right)|_{\lambda u} (u) =
\lambda\, D^2 L_{h,x}|_{\lambda u} (u,u) \; ,
\end{equation}
where we identify linear maps from
$T_xM$ to $T_x^*M$ with bilinear forms on $T_xM$.
If $h$ is strictly convex in each fibre, the same is true for $L_h$. Therefore by \eqref{energie4} the
first derivative of $E_u$ is strictly positive for $\lambda \in (0,\infty)$ and thus $E_u$ is
strictly increasing.

The fact that $\lim_{\lambda\to\infty}E_u(\lambda) = \infty$ can be seen
as follows.
From the strict convexity of $h$ and since $h(x,\xi) \geq h(x,0)$ for all $\xi\in T^*M$,
it follows that $\lim_{|\xi|\to\infty}h(x,\xi) = \infty$. Since
$h$ is hyperregular, ${\mathcal D}_Fh(x,.) = Dh_x : T_xM  \ra T_x^*M$ is a global diffeomorphism.
Thus for any norm $\|.\|_{T_xM}$ on $T_xM$ and the induced norm $\|.\|_{T_x^*M}$ on $T_x^*M$, we have
$\|Dh_x(v_n)\|_{T_x^*M} \to \infty$ for any sequence $(v_n)$ in $T_xM$ satisfying $\|v_n\|_{T_xM}\to \infty$
(any global diffeomorphism is proper).
Thus
\[ \lim_{\lambda\to\infty}E_u(\lambda) = \lim_{\lambda\to\infty}h(x,\xi_h(x,\lambda u)) =
\lim_{\|\xi\|\to\infty}h(x,\xi) = \infty\; . \]
\end{proof}

\begin{proof}[Proof of Theorem \ref{listfinsler}]

i)
From Lemma \ref{energiemonoton} it follows that for fixed $x\in \widetilde{M}$, each ray
$[x,u]\in S\widetilde{M}$
intersects the hypersurface
${\mathcal E}_x := E_h^{-1}(E)\cap T_x\widetilde{M}$
in exactly one point $v$, i.e.
for each ray $[x,u]$ there is exactly one point $(x,v)\in T_x\widetilde{M}$ such
that $E_h(x,v) = E$. Thus $[x,u] \mapsto (x, v)$ defines a map $\tau_{E}$, which is uniquely determined
by \eqref{taucond}.
Clearly
$\tau_{E}$ is strictly fibre preserving.

To analyze the regularity of $\tau_{E}$, we will use the Implicit Function Theorem.
We remark that there exists an isomorphism $\psi: T\widetilde{M} \setminus\{0\} \ra S\widetilde{M} \times \R_+$, such
that $\psi^{-1} (\{[x,u]\} \times \R_+) = [x,u]\subset T\widetilde{M}\setminus\{0\}$. Setting
$\hat{E}_h:= E_h \circ \psi^{-1}$ and $\hat{\tau}_{E} := \psi \circ \tau_{E}$, we get
$\hat{\tau}_{E} (s) = (s,\lambda (s)), s\in S\widetilde{M},$ for some $\lambda: S\widetilde{M}\ra \R_+$ and
therefore $\hat{E}_h(s,\lambda (s)) = E$.
By Lemma \ref{energiemonoton}, $\frac{d \hat{E}_h}{d\lambda}(s_0,\lambda_0)>0$ for all
$(s_0,\lambda_0)\in S\widetilde{M}\times \R_+$.
Smoothness of $\lambda$ and thus of $\tau_{E}$ now follows
from the Implicit Function Theorem.\\
iii) By i), for $t\in[a,b]$ fixed, there is a unique $\lambda(t)\in\R_+$ with \eqref{tauundlambda}.
Since $\tau_{E}$ is smooth (as composition of smooth maps), $t\mapsto (\gamma(t),\lambda(t)\dot{\gamma}(t))$
is a $\Ce^1$-curve. Thus $\lambda\in\Ce^1([a,b],\R)$.\\
ii) To show that $\ell_{h,E}: T\widetilde{M} \ra \R$ is a Finsler function on $\widetilde{M}$, we check the
defining properties.
\ben
\item[1)] The regularity $\ell_{h,E}\in \Ce^\infty (T\widetilde{M}\setminus \{0\})$ follows from the fact that
$h$ is hyperregular and the function $\tilde{\tau}_{E}$ is
$\Ce^\infty$.
\item[2)] To show $\ell_{h,E}(x,\lambda v) = \lambda \ell_{h,E}(x,v)$, ($\lambda>0$), we notice that
by construction $\tilde{\tau}_{E} (x,\lambda v) = \tilde{\tau}_{E} (x,v)$ for any $\lambda>0$. Thus
$\left({\mathcal D}_Fh\right)^{-1}\circ \tilde{\tau}_{E}$ is homogeneous of
order zero in each fibre. Since $\xi\cdot v$ is bilinear, it follows that
\[\ell_{h,E}(x,\lambda v) = \left({\mathcal D}_Fh\right)^{-1}\circ \tilde{\tau}_{E}(x,\lambda v)\,\cdot\,
 \lambda v
 = \lambda \ell_{h,E}(x,v) \, . \]
\item[3)] We start showing that $\ell_{h,E}(x,v)>0$, ($v\neq 0$). To this end we
define
\[ a_h=A_h\circ {\mathcal D}_Fh \,:\, T^*M \ra \R\; , \quad \quad
a_h(x,\xi) =  \xi\cdot {\mathcal D}_Fh
(x,\xi)\, .\]
Since $h$ was assumed to be strictly convex in each fibre, one obtains from \eqref{energie8}
\[ (\xi-\eta) \cdot \left( {\mathcal D}_Fh (x,\xi) - {\mathcal D}_Fh (x,\eta)\right) > 0\: ,\quad
\xi,\eta\in T^*_xM, \;\eta\neq \xi \; . \]
Therefore choosing $\xi = - \eta$ and using that $h$ is even in each fibre (thus ${\mathcal D}_Fh $ is odd) yields
\begin{equation}\label{vierah}
2\xi \cdot \left( {\mathcal D}_Fh(x,\xi)- {\mathcal D}_Fh(x,-\xi)\right) =
4\xi\cdot  {\mathcal D}_Fh (x,\xi) = 4 a_h(x,\xi) > 0 \,,\quad\text{for}\quad \xi\neq 0\; .
\end{equation}
Since $h$ is even and strictly convex, it takes its absolute
minimum at $\xi=0$ and thus ${\mathcal D}_Fh(x,0) = 0$. Since furthermore
${\mathcal D}_Fh$ is a global diffeomorphism, we get $\left({\mathcal D}_Fh\right)^{-1} (x,v) \neq 0$
for $v\neq 0$. By \eqref{vierah}
\begin{equation}\label{Gxpos}
A_h(x,v) = a_h(x,\left({\mathcal D}_Fh\right)^{-1}(x,v)) > 0\, , \qquad v\neq 0 \; .
\end{equation}
Setting $\tilde{\tau}_{E}(x,v) =: (x,\tilde{v})$, it follows from the fact that $\tau_{E}$ is
strictly fibre preserving that there exists a $\lambda >0$ such that $v = \lambda \tilde{v}$. Thus
by \eqref{Gxpos} for $v\neq 0$
\begin{equation}\label{g14}
\ell_{h,E}(x,v) =
 \left({\mathcal D}_Fh\right)^{-1}(x,\tilde{v})\,\cdot\,\lambda \tilde{v}
= \lambda A_h(x,\tilde{v}) > 0\; .
\end{equation}

To show that the matrix $g$ is positive definite, we set $\ell_x(v):= \ell_{h,E}(x,v): T_x\widetilde{M} \ra \R$.
Then we have to show that $g_{(x,v)}(w,w) = D^2(\frac{1}{2} \ell_x^2)|_{v}(w,w)>0$ for all $v,w\in T_x\widetilde{M}\setminus \{0\}$.\\
We first remark that 
\begin{align}\label{g1}
 g_{(x,v)}(w,w) = T_1(w) + T_2(w)\qquad\text{where}\\
 T_1(w) = \ell_x(v) D^2\ell_x|_v (w,w) \quad\text{and}\quad T_2(w) = \left(D\ell_x|_v (w)\right)^2\; . \nonumber
 \end{align}
By the definition of $\ell_x$ and \eqref{lagrangehamilton} it follows that
\begin{equation}\label{g10} 
D\ell_x|_v(w) = D^2L_x|_{\tilde{\tau}_{E}(v)} \left(D\tilde{\tau}_{E}|_v (w) , v\right) +
\left(Dh_x\right)^{-1} \circ \tilde{\tau}_{E} (v) \cdot w \; . 
\end{equation}
To analyze $D\tilde{\tau}_{E}$, we use that by Lemma \ref{energiemonoton} the function $E_{h,x}$ is strictly increasing
in each fibre. Therefore, analogue to the proof of i), there exists a smooth function $\mu_x: T_xM \ra (0,\infty)$ such that 
$\tilde{\tau}_{E}(v) = \mu_x(v) v$ and thus 
\begin{equation}\label{g8} 
D\tilde{\tau}_{E}|_v (w)= v\cdot D\mu|_v (w) + \mu(v) w \; .
\end{equation}
Since 
\[ E = E_{h,x}(\tilde{\tau}_{E})(v) = E_{h,x}\circ G_x(v,\mu(v)) \quad\text{where}\quad G_x(v,\mu(v)) := \mu(v) v \]
we have $D_v(E_{h,x}\circ G_x)(v,\mu(v)) = 0$, leading to 
\begin{equation}\label{g4}
D\mu|_v (w) = -\left(D_2(E_{h,x}\circ G_x)|_{(v,\mu(v))}\right)^{-1} D_1( E_{h,x}\circ G_x)|_{(v,\mu(v))} (w)
\end{equation}
Since by the definition of $E_h$ in \eqref{energiefunktion} and again by \eqref{lagrangehamilton} we have 
\begin{equation}\label{g5}
DE_{h,x}|_v(w) = Dh_x|_{(Dh_x)^{-1}(v)} \cdot D(Dh_x)^{-1}|_v (w)= D^2L|_v(v,w)\, ,
\end{equation}
it follows that
\begin{equation}\label{g3}
D_2(E_{h,x}\circ G_x)(v,\mu) = DE_{h,x}|_{\mu v} (v) = \mu D^2L|_v (v,v) \, .
\end{equation}

By \eqref{g5} it follows at once that
\begin{equation}\label{g6}
D_1( E_{h,x}\circ G_x)|_{(v,\mu(v))} (w) = DE_{h,x}|_{\mu(v)v} D_1G_x|_{(\mu(v),v)} (w) = \mu^2(v) D^2L_x|_{\mu(v)v} (v,w)\; .
\end{equation}
Since $L$ is strictly convex, the symmetric bilinear form $D^2L_x|_{\mu(v)v} (.,.)$ is positive definite and
therefore defines a scalar product 
\begin{equation}\label{g13}
\langle v, w\rangle_L :=  D^2L_x|_{\mu(v)v} (v,w) \quad\text{with associated norm} \quad \|v\|_L 
\end{equation}
Inserting \eqref{g6} and \eqref{g3} in \eqref{g4} and using \eqref{g13} gives
\begin{equation}\label{g7}
D\mu|_v (w)= -\frac{\mu(v)}{\|v\|^2_L} \langle v,w\rangle_L\; .  
\end{equation}
Thus inserting \eqref{g7} in \eqref{g8} and the resulting term in \eqref{g10} yields
\begin{align}
D\ell_x|_v (w) &= \left\langle \left[-v \frac{\mu(v)}{\|v\|_L^2} \langle v,w\rangle_L + \mu(v) w\right], 
v \right\rangle_L + \left(Dh_x\right)^{-1}|_{\mu(v)v} \cdot w \nonumber\\
&=  \left(Dh_x\right)^{-1}|_{\mu(v)v} \cdot w   \label{g9}
\end{align}
Using \eqref{g9} and \eqref{lagrangehamilton} we get
\[
D^2\ell_x|_v (w,w) = \left\langle \left(v D\mu|_v (w) + \mu(v) w\right) , w\right\rangle_L
\]
and inserting \eqref{g4} gives
\begin{equation}\label{g12}
D^2\ell_x|_v (w,w) = \mu(v) \left\{ \langle w,w\rangle_L - \frac{(\langle v,w\rangle_L)^2}{\|v\|^2_L} \right\} \geq 0\; , 
\end{equation}
where the last estimate follows from the Cauchy-Schwarz Inequality.
Since $T_2$ is quadratic, \eqref{g12} together with \eqref{g14} gives 
\[ D^2\left(\frac{1}{2} \ell_x^2\right)|_{v} (w,w) \geq 0 \, , \qquad v,w\in T_x\widetilde{M} \; . \]
To prove the strict positivity, we now fix
$x\in\widetilde{M}$ and $v\in T_x\widetilde{M}\setminus\{0\}$. Assuming 
\begin{equation}\label{g15}
g_{(x,v)} (w,w) = 0\quad\text{for}\quad w\in T_x\widetilde{M}\; ,
\end{equation}
we have to show that $w=0$.
By \eqref{g1} it follows from \eqref{g15} that $T_1(w)=0$ and $T_2(w) =0$. We have already seen in \eqref{g14} that 
$\ell_{h,E}(x,v)>0$ for $v\neq 0$, thus $T_1(w)=0$ implies $g_{(x,v)} (w,w) = 0$, leading by \eqref{g12}
and the Cauchy-Schwarz-inequality to
\[  w=\eta v \quad \text{for some}\quad \eta\in\R \; .\]
Inserting this in $T_2$, the homogeneity of $\ell_x$ shows
\[ 0 = D\ell_x|_v (\eta v) = \eta D\ell_x|_v (v) = \eta \ell_x(v) \]
and thus by the positivity of $\ell_x$ we get $\eta=0$ and thus $w=0$.
\item[4)]
It remains to show that $\ell_{h,E}$ is absolute homogeneous of order one.
Since $h$ is even in each fibre, $\mathcal{D}_F h$ and $\left(\mathcal{D}_Fh\right)^{-1}$ are odd.
Thus for $(x,v)\in{\mathcal E}_x$
\begin{equation}\label{Exsymm}
h\circ \left(\mathcal{D}_Fh\right)^{-1}(x,-v) =
h\circ \left(\mathcal{D}_Fh\right)^{-1}(x,v) = E
\end{equation}
and \eqref{Exsymm} yields
\begin{equation}\label{plusimpmin}
(x,v)\in\mathcal{E} \qquad \Longrightarrow \qquad (x,-v)\in{\mathcal E}\, .
\end{equation}
Since $\tau_{E}$ is strictly fibre preserving, we have for $(x,v)\in T_x\widetilde{M}$ and some $\lambda>0$, using
\eqref{plusimpmin}
\begin{equation}\label{tautildeodd}
\tilde{\tau}_{E} (-v) = \lambda(-v) = - \tilde{\tau}_{E} (v)\; .
\end{equation}
By the fact that $\left(\mathcal{D}_Fh\right)^{-1}$ is odd and \eqref{tautildeodd}
we can conclude that
\begin{equation}\label{lgerade}
\ell_{h,E}(-v) =  \left(\mathcal{D}_Fh\right)^{-1}\circ\tilde{\tau}_{E}(-v)\,\cdot\,(-v) =
    \left(\mathcal{D}_Fh\right)^{-1}\circ\tilde{\tau}_{E}(v)\,\cdot\, v = \ell_{h,E}(v)\;   .
\end{equation}
By \eqref{lgerade}, $\ell_{h,E}$ is even in each fibre and thus for any $\lambda\in \R$
\[
\ell_{h,E}(x, \lambda v) = \ell_{h,E}(x, |\lambda| v) = |\lambda| \ell_{h,E}(x,v) \; . \]
\een
\end{proof}

\subsection{Proof of Theorem 1.4}\label{geodesic}

 Step 4 of our proof is
adapted from Abraham-Marsden \cite{abma} and uses the Maupertuis principle (at least implicitly).\\

{\sl Step 1:}\\
We will show that
\begin{multline}\label{eumau20}
\Gamma (x_1,x_2,[a,b],E):= \{ (\gamma,\alpha)\; | \; \alpha: [a,b]\ra \R\quad\text{is}\quad\Ce^2,\quad
\dnt\alpha >0\, ,\quad \alpha(a) = 0\, ,\\
\gamma\in \Gamma_{0,\alpha(b)}(x_1,x_2)
\quad\text{such that}\quad E_h(\gamma(\alpha(t)),\dot{\gamma}(\alpha(t)) = E\quad\text{for all}\quad t\in[a,b]\, \}
\, ,
\end{multline}
is a Banach manifold, where $\Gamma_{a,b}(x_1,x_2)$ was introduced
in \eqref{gammaab}.\\
$\Gamma (x_1,x_2,[a,b],E)$ is the set of all pairs $(\gamma,\alpha)$, where $\gamma$ is a
regular curve on $\widetilde{M}$ joining
the points $x_1$ and $x_2$ and $\alpha$ is a
change of parameter, ensuring that the curve
$(\gamma\circ\alpha, \dot{\gamma}\circ\alpha)\in TM$ (which is not equal to the lifted curve
$(\gamma\circ \alpha, \frac{d}{dt} (\gamma\circ \alpha))$) lies on the energy shell
${\mathcal E} = E_h^{-1}(E)$.

Set
$A:= \{ \alpha:[a,b]\ra \R\,|\, \dnt\alpha >0\quad\text{and}\quad \alpha(a) = 0 \}$
and denote by $\Gamma_{0,\infty}$ the space of all regular curves $\gamma:[0,\infty) \ra \widetilde{M}$. Then
$\Gamma_{0,\infty}\times A$ is a Banach manifold. We consider the $\Ce^1$-mapping
\[ g: \Gamma_{0,\infty}\times A \ra \widetilde{M}\times \widetilde{M} \, , \quad (\gamma, \alpha) \mapsto
(\gamma\circ\alpha (a), \gamma\circ\alpha (b))\; . \]
Then $(x_1,x_2)\in \widetilde{M}\times \widetilde{M}$ is a regular value of $g$ and
\[ \Gamma ([a,b],x_1,x_2):= g^{-1}(x_1,x_2) = \{ (\gamma,\alpha)\in\Gamma_{0,\infty}\times A\, |\,
\gamma(\alpha(a))=x_1, \gamma(\alpha(b))=x_2\} \]
is a submanifold of $\Gamma_{0,\infty}\times A$.
This follows from the fact that the Inverse
Function Theorem holds in Banach manifolds (see Hamilton \cite{hamilton}).
We introduce
\[ \tilde{E}_h: \Ce^1([a,b], T\widetilde{M})\ra \Ce^1([a,b],\R)\,, \quad \tilde{E}_h(\eta)(t):= E_h(\eta(t))\]
and for
\[
\Gamma^{T\widetilde{M}}_{a,b}(x_1,x_2):= \{(\gamma, k\,\dot{\gamma})\in \Ce^1([a,b], T\widetilde{M})\,|\, \gamma\in\Gamma_{a,b}(x_1,x_2),\;
k\in\Ce^1([a,b],\R_+)\} \]
we set
\[ \Phi: \Gamma ([a,b],x_1,x_2) \ra \Gamma^{T\widetilde{M}}_{a,b}(x_1,x_2)\, , \quad (\gamma,\alpha)\mapsto
(\gamma\circ\alpha, \dot{\gamma}\circ \alpha)\; . \]
Then $\Phi$ is a diffeomorphism. In fact, by a straightforward calculation it is bijective, with inverse
$\Phi^{-1}(\eta, k\dot{\eta}) = (\eta \circ \alpha^{-1}, \alpha)$, where $\alpha(t) = \int_a^t \left(k(s)\right)^{-1}\, ds$.

Identifying $E$ with the constant function $E(t)=E$, we obtain
$\Gamma(x_1,x_2,[a,b],E) = f^{-1}(E)$ for $f:= \tilde{E}_h\circ \Phi$.
To show that $E$ is a regular value of $f$ it is sufficient to show that it is a regular value of
$\tilde{E}_h$, i.e.,
that for each $v\in \Ce^1([a,b],\R)$ (considered as a vector field along
$E \in \Ce^1([a,b],\R)$), there is a vector field $X$ along $\eta\in \tilde{E}_h^{-1}(E)$
with $d\tilde{E}_h|_\eta X = v$. Note that $\eta(t) = (x(t), v(t))$ with $v(t)\neq 0$.

Since $DE_h(x,v)\neq 0$ for $v\neq 0$, there is a covering of $[a,b]$ by open intervals $I_j, j\in J$ and vector fields $X_j$
along $\eta|_{I_j}$ with $DE_h|_{\eta(t)} X_j(t) = v(t)$ for all $t\in I_j$.
Choosing a partition of unity $(\chi_j)$ subordinate to $(I_j)$, we set $X(t) = \sum_{j\in J} \chi_j(t)X_j(t)$.
Then $d\tilde{E}_h|_\eta X = v$. Thus $E$ is a regular value of $f$ and
$\Gamma ([a,b],x_1,x_2,E)$ is a Banach manifold.\\

{\sl Step 2:}\\
We construct a diffeomorphism
$b_{E}: \Gamma_{a,b}(x_1,x_2) \ra \Gamma (x_1,x_2,[a,b],E)$.

By Theorem \ref{listfinsler}, there exists for any $\eta\in \Gamma_{a,b}(x_1,x_2)$
a unique $\Ce^1$-function
$\lambda: [a,b] \ra \R_+$ such that
\[  E_h(\eta(t), \lambda (t)\dot{\eta}(t)) = E\; .\]
Set
\begin{equation}\label{pairdef}
\alpha (t):=\int_a^t \frac{1}{\lambda(s)}\,ds\quad\text{and}\quad
\gamma = \eta\circ\alpha^{-1}:[0,\alpha(b)]\ra \widetilde{M}\; ,
\end{equation}
then $\alpha:[a,b]\ra \R$ with $\dot{\alpha}>0$. From
$\dot{\eta}(t) = \dot{\gamma}(\alpha (t)) \cdot \dot{\alpha}(t)$
it follows that
\begin{equation}\label{eumau1}
E_h(\gamma(\alpha(t)), \dot{\gamma}(\alpha(t)))  =
E_h(\eta(t), \lambda(t)\dot{\eta}(t)) = E\; ,
\end{equation}
i.e. $(\gamma, \alpha)\in \Gamma(x_1,x_2,[a,b],E)$.
We can conclude that there is a
bijection between Banach manifolds given by
\begin{align}
b_{E}: \Gamma_{a,b}(x_1, x_2)
&\ra \Gamma (x_1,x_2,[a,b],E) \nonumber\\
\label{eumau23}
b_{E}(\eta) &= (\eta\circ\alpha^{-1}, \alpha)\quad
\text{with}\quad \alpha (t):=\int_a^t \left(\lambda(s)\right)^{-1}\,ds\quad\text{for}\quad
\tilde{\tau}_{E}(\eta,\dot{\eta})=(\eta,\lambda \dot{\eta}) \; .
\end{align}
On the other hand, if we start with $(\gamma,\alpha)\in\Gamma(x_1,x_2,[a,b],E)$, then
$E_h(\gamma(s),\dot{\gamma}(s))=E$ with $s=\alpha(t)$. Setting
$\eta:= \gamma\circ\alpha : [a,b]\ra \widetilde{M}$ it follows from $\dot{\gamma}(s) =
\dot{\eta}(t)\,(\dot{\alpha}(t))^{-1}$ that
\begin{equation}\label{eumau16}
E_h(\eta(t), (\dot{\alpha}(t))^{-1} \,\dot{\eta}(t)) = E
\qquad\text{and thus}\qquad
\lambda(t)=(\dot{\alpha}(t))^{-1}\; .
\end{equation}
Thus the inverse function $b_{E}^{-1}$ is given by $b_{E}^{-1}(\gamma,\alpha) = \gamma\circ \alpha$.\\

{\sl Step 3:}\\
We show that the critical points of the length functional $s_{\ell_h}$ defined in Definition \ref{Finslerman2}
(i.e., the geodesics of $\ell_h$)
are in bijection with the critical points of the action integral
\begin{equation}\label{eumau3}
I:\Gamma(x_1,x_2,[a,b],E)\ra \R\, , \quad
I(\gamma,\alpha):= \int_{\alpha(a)}^{\alpha(b)} A_h(\gamma(s),\dot{\gamma}(s))\, ds \; ,
\end{equation}
where $A_h$ denotes the action with respect to $h$ defined in \eqref{kanpaarung}.
Setting $s=\alpha(t)$ and using \eqref{xivonv} gives
\begin{equation}\label{eumau2}
\int_{\alpha(a)}^{\alpha(b)} A_h(\gamma(s),\dot{\gamma}(s))\, ds
=\int_{a}^{b} \xi_h(\gamma(\alpha(t)),\dot{\gamma}(\alpha(t)))\cdot \dot{\gamma}(\alpha(t))\dot{\alpha}(t)\,
dt\; .
\end{equation}
Setting $\eta(t)=\gamma(\alpha(t))$ and using \eqref{eumau16} and the definition of $\ell_{h}$ and $s_{\ell_h}$,
we obtain from \eqref{eumau3} and \eqref{eumau2}
\begin{align*}
I(\gamma,\alpha) &= \int_{a}^{b} \xi_h(\eta(t),\dot{\eta}(t)(\dot{\alpha}(t))^{-1})\cdot \dot{\eta}(t)\, dt =
\int_{a}^{b} \xi_h\circ \tilde{\tau}_{E}(\eta(t),\dot{\eta}(t))\cdot \dot{\eta}(t)\, dt \\
&= \int_{a}^{b} \ell_{h,E}(\eta(t),\dot{\eta}(t))\, dt = s_{\ell_{h}}(\eta)\; .
\end{align*}
Since $b_{E}(\gamma\circ \alpha) = (\gamma,\alpha)$, it follows that
\begin{equation}\label{eumau19}
s_{\ell_h} = I \circ b_{E}\quad\text{and thus}\quad ds_{\ell_{h}}|_\eta = dI|_{b_{E}(\eta)} \circ
db_{E}|_\eta\; .
\end{equation}
Since $b_{E}$ is a diffeomorphism, we get
\begin{equation}\label{eumau40}
ds_{\ell_h}|_\eta = 0 \qquad \Longleftrightarrow \qquad dI|_{b_{E}(\eta)} = 0 \qquad \eta\in\Gamma_{a,b}(x_1,x_2)\; .
\end{equation}

{\sl Step 4:}\\
We show (a).

We set $\gamma_0(a)=x_1$, $\gamma_0(b)=x_2$. If $\gamma_0$ is a base integral curve of the Hamiltonian vector field $X_h$ with
$E_h(\gamma_0(t),\dot{\gamma}_0(t))=E$ for all $t\in[a,b]$, then
$b_{E}(\gamma_0) = (\gamma_0,\id)$, where $\id:[a,b]\ra[a,b]$ is defined by
$\id(t) = t$.

Thus by \eqref{eumau40} it remains to show that $dI|_{(\gamma_0, \id)} = 0$
for any base integral curve $\gamma_0\in \Gamma_{a,b}(x_1,x_2)$ of the
Hamiltonian vector field $X_h$ with energy $E$.
The tangent space of $\Gamma (x_1,x_2,[a,b],E)$ at a point
$(\gamma,\alpha)$ can be described by use of variations as
\begin{eqnarray}\label{tangentgamma}
T_{(\gamma,\alpha)}\Gamma (x_1,x_2,[a,b],E) = \{
\partial_u (\gamma,\alpha)_\delta|_{u=0}\,|\,
(\gamma,\alpha)_\delta :(-\delta,\delta)\ra \Gamma(x_1,x_2,[a,b],E)\\
\text{is}\quad\Ce^2\quad\text{with}
\quad (\gamma,\alpha)_\delta(0)=(\gamma,\alpha)\}\; .\nonumber
\end{eqnarray}

We start analyzing $dI|_{(\gamma,\alpha)}$. We use the
notation $(\gamma(\,.\,),\alpha(\,.\,))_\delta (u) =: (\gamma_\delta(\,.\,,u), \alpha_\delta (\,.\,,u))$.
From \eqref{tangentgamma} and \eqref{eumau20}
it follows that
$\alpha_\delta (a,u) = 0$. Furthermore
$\gamma_\delta(0,u) = x_1$ and
$\gamma_\delta(\alpha_\delta(b,u),u) = x_2$ for all $u\in(-\delta,\delta)$.
This leads to
\begin{equation}\label{eumau9}
\frac{d}{du} \gamma_\delta (0,u) = 0 = \frac{d}{du} \gamma_\delta (\alpha_\delta (b,u),u) \; .
\end{equation}
Since $A_h = L_h + E_h$ by the definition \eqref{lagrangefunktion} of the Lagrange function $L_h$,
it follows from the definition
\eqref{eumau20} of $\Gamma(x_1,x_2,[a,b],E)$ that
\begin{equation}\label{eumau13}
A_h(\gamma_\delta(\alpha_\delta(t,u),u),\dot{\gamma}_\delta(\alpha_\delta(t,u),u))=
L_h(\gamma_\delta(\alpha_\delta(t,u),u),\dot{\gamma}_\delta(\alpha_\delta(t,u),u)) + E\; ,
\end{equation}
thus the definition \eqref{eumau3} of $I$ and \eqref{eumau13} yield
\begin{multline}\label{eumau10}
dI|_{(\gamma,\alpha)}\left(\partial_u(\gamma,\alpha)_\delta|_{u=0}\right) =
\partial_u I\left((\gamma_\delta ,\alpha_\delta)\right)|_{u=0} \\
\left. = \frac{d}{du}  \int_{\alpha_\delta(a,u)}^{\alpha_\delta(b,u)}
\left(L_h(\gamma_\delta(s,u),\dot{\gamma}_\delta(s,u)) + E\right)\, ds \right|_{u=0}.
\end{multline}
We get using $\gamma_\delta(t,0) = \gamma(t)$ and $\alpha_\delta(t,0) = \alpha(t)$
\begin{multline}\label{eumau10a}
\left. \frac{d}{du}  \int_{\alpha_\delta(a,u)}^{\alpha_\delta(b,u)}
\left(L_h(\gamma_\delta(s,u),\dot{\gamma}_\delta(s,u)) + E\right)\, ds \right|_{u=0} \\
=
\left[\left(L_h(\gamma(\alpha(t)),\dot{\gamma}(\alpha(t))) +
E\right)\cdot
\partial_u\alpha_\delta|_{u=0}(t)\right]_a^b  \\
+ \int_{0}^{\alpha(b)}
\left. \frac{d}{du} L_h(\gamma_\delta(s,u),\dot{\gamma}_\delta(s,u))\right|_{u=0}\; ds\; .
\end{multline}
For the integrand on the right hand side of \eqref{eumau10a} we get
\begin{multline}\label{eumau41}
\left. \frac{d}{du} L_h(\gamma_\delta(s,u),\dot{\gamma}_\delta(s,u))\right|_{u=0} =
D_{\gamma}L_h(\gamma(s),\dot{\gamma}(s))\cdot
\partial_u\gamma_\delta|_{u=0}(s)  \\
+ D_{\dot{\gamma}}L_h(\gamma(s),\dot{\gamma}(s))
\partial_u\dot{\gamma}_\delta|_{u=0}(s)\;  .
\end{multline}
Since $\partial_u\dot{\gamma}_\delta|_{u=0} (s) =\partial_s \partial_u \gamma_\delta|_{u=0} (s)$,
integration by parts for the second summand on the right hand side of
\eqref{eumau41} gives
\begin{multline}\label{eumau11}
\int_{\alpha(a)}^{\alpha(b)}
\left.\frac{d}{du}L_h(\gamma_\delta(s,u),\dot{\gamma}_\delta(s,u))\right|_{u=0}\; ds =
\left[D_{\dot{\gamma}}L_h(\gamma(s),\dot{\gamma}(s))\cdot
\partial_u \gamma_\delta(s,u)|_{u=0}\right]_{0}^{\alpha(b)}  \\
- \int_{0}^{\alpha(b)}\left(
 D_{\gamma}L_h(\gamma(s),\dot{\gamma}(s))
 + \frac{d}{ds}
 D_{\dot{\gamma}}L_h(\gamma(s),\dot{\gamma}(s))\right)\cdot
\partial_u \gamma_\delta|_{u=0}(s) \, ds \; .
\end{multline}
It follows from \eqref{eumau9} that
\begin{equation}\label{eumau21}
\partial_u \gamma_\delta(\alpha(t),u)|_{u=0}
= - \dot{\gamma}(\alpha(t))  \partial_u\alpha_\delta |_{u=0}(t)\, , \quad t=a,b \; .
\end{equation}
Since by \eqref{Ahanders} we have
$A_h(\gamma,\dot{\gamma}) =  D_{\dot{\gamma}}L_h(\gamma,\dot{\gamma})\cdot \dot{\gamma}$, we get by
\eqref{eumau13}
\begin{equation}\label{eumau21a}
- \left[D_{\dot{\gamma}} L_h(\gamma(\alpha(t)),\dot{\gamma}(\alpha(t)))\cdot
\dot{\gamma}(\alpha (t))  \partial_u\alpha_\delta |_{u=0}(t)\right]_a^b
= -\left[\left(L_h(\gamma(\alpha(t)),\dot{\gamma}(\alpha(t))) + E\right)\cdot
\partial_u\alpha_\delta |_{u=0}(t)\right]_a^b\; .
\end{equation}
Using \eqref{eumau21} and \eqref{eumau21a}, the boundary terms on the right hand side of
\eqref{eumau10a} and \eqref{eumau11} cancel. Combining \eqref{eumau10}, \eqref{eumau10a} and \eqref{eumau11} yields
\begin{multline}\label{eumau14}
dI|_{(\gamma,\alpha)}\left(\partial_u(\gamma,\alpha)_\delta|_{u=0}\right) \\
=\int_{0}^{\alpha(b)}\left(
 D_{\gamma}L_h(\gamma(s),\dot{\gamma}(s))
 - \frac{d}{ds}
 D_{\dot{\gamma}}L_h(\gamma(s),\dot{\gamma}(s))\right)\cdot
\partial_u \gamma_\delta|_{u=0}(s) \, ds \; .
\end{multline}
For $(\gamma,\alpha) = (\gamma_0,\id)$, the integrand is zero, since the integral curve $(\gamma_0,\dot{\gamma}_0)$
of $X_h$ solves Lagranges equation and thus
\[ dI|_{(\gamma_0,\id)} = 0 \; . \]

{\sl Step 5:}\\
We show (b).

If $\gamma_0$ is a Finslerian geodesic with energy $E$, then $b_{E}(\gamma_0) = (\gamma_0, \id)$ and by
\eqref{eumau40} the integral \eqref{eumau14} is zero for each tangent vector
$\partial_u\gamma_{0,\delta}|_{u=0}$.
By standard arguments it follows that $(\gamma_0,\dot{\gamma}_0)$ solves
Lagranges equation. Thus $\gamma_0$ is a base integral curve of $X_h$.
\qed

\subsection{Application to $H_\ep$}\label{Finseik}

We start with

\begin{proof}[Proof of Lemma \ref{propt}]
(a):\\
These estimates and the regularity follow at once by Hypothesis
\ref{hypdecay}, (a), (i) and (iii).\\
(b):\\
By standard Fourier theory,
$t_0$ is even with respect to $\xi$, i.e. $t_0(x,\xi)=t_0(x,-\xi)$, if and only if for all $\eta\in\Z^d$
\begin{equation}\label{anullgerade}
\tilde{a}_\eta(x) = \tilde{a}_{-\eta}(x)\, .
\end{equation}
To show \eqref{anullgerade} we use that by Hypothesis \ref{hypdecay},(a),(iv) for all $\eta\in\Z^d$,
$x\in\R^d$ and $\ep\in(0,1]$
\begin{equation}\label{asymm}
\tilde{a}_\eta(x) + \ep\,a^{(1)}_{\ep\eta}(x) + R^{(2)}_{\ep\eta}(x,\ep) =
\tilde{a}_{-\eta}(x+\ep\eta)+\ep\,a^{(1)}_{-\ep\eta}(x+\ep\eta) + R^{(2)}_{-\ep\eta}(x+\ep\eta,\ep)\; .
\end{equation}
By Hypothesis \ref{hypdecay},(a),(i) we have 
\begin{equation}\label{tildeastet}
 \tilde{a}_{-\eta} (x + \ep\eta) = \tilde{a}_{-\eta}(x) + O_\eta(\ep) \; .
\end{equation}
Combining \eqref{asymm} and \eqref{tildeastet} leads to
\begin{align}
|\tilde{a}_\eta(x) - \tilde{a}_{-\eta}(x)|  &\leq \ep\, |a^{(1)}_{-\ep\eta}(x+\ep\eta) - a^{(1)}_{\ep\eta}(x) | + 
O_\eta(\ep) + |R^{(2)}_{-\ep\eta}(x+\ep\eta,\ep) - R^{(2)}_{\ep\eta}(x,\ep)| \nonumber \\
&\leq \ep C_{x,\eta} + \ep^2 C_{x,\eta} \label{asymm2}
\end{align}
Since the left hand side of \eqref{asymm2} is independent of $\ep$, it is equal to zero and \eqref{anullgerade} follows.

The analytic continuation of $t_0$ follows at once from Hypothesis \ref{hypdecay},(a),(v), since $a^{(0)}_\gamma(x)$
are the Fourier-coefficients of $t_0(x,\xi)$.\\
(c):\\
By (a), $t_0(x,\xi)=\sum_{\eta\in\Z^d} \tilde{a}_\eta (x) \cos(\eta\cdot\xi)$,
thus its Taylor-expansion at $\xi=0$ yields by Hyp.\ref{hypdecay}(a)(ii)
\begin{equation}\label{entw}
\sum_{\eta\in\Z^d} \tilde{a}_{\eta}(x) \left(1- \frac{1}{2}(\eta\cdot\xi)^2 + O\left(|\xi|^4\right)\right) =
\skp{\xi}{B(x)\xi} + O\left(|\xi|^4\right) \; ,
\end{equation}
where the symmetric $d\times d$-matrix $B$ is given by
\begin{equation}
- \frac{1}{2}\sum_{\eta}\tilde{a}_\eta(x) \eta_\nu\eta_\mu = B_{\nu\mu}(x)
\qquad
\mbox{for}\quad \mu,\nu\in\{1,\ldots,d\}\; , x\in\R^d\label{agamma3}\; .
\end{equation}
Since $\skp{\xi}{B(x)\xi} = -\frac{1}{2\ep^2}\sum_\gamma
a^{(0)}_\gamma (x) (\xi \cdot \gamma)^2$, by Hypothesis \ref{hypdecay},(a),(iii)
and (vi) the matrix $B$ is
positive definite.\\
(d):\\
First we mention that by a short calculation
$\Op_\ep^{\T^d}(e^{-\frac{i}{\ep}\gamma\cdot\xi}) = \tau_\gamma$. This implies
$T_\ep= \sum_\gamma a_\gamma \tau_\gamma = \Op_\ep^{\T^d}(t)$ as operator on $u\in\mathcal{K}(\disk)$
for $t$ given in \eqref{talsexp}.

{\sl Boundedness}:\\
For $u\in \ell^2\left(\disk\right)$, by the Cauchy-Schwarz inequality the $l^2$-norm of
$T_\ep u$ can be estimated as
\begin{align}
\|T_\ep u\|_{\ell^2}^2 &\leq \sum_{x\in\disk}\left(\sum_{\gamma\in\disk} |a_\gamma (x,\ep) u(x+\gamma)|\right)^2 \nonumber\\
& \leq \sum_{x\in\disk}
\left(\sum_{\gamma\in\disk} |a_\gamma (x,\ep)|^2 \left(\frac{|\gamma|}{\ep}\right)^{d+1}\right)^{\frac{1}{2}}
\left(\sum_{\gamma\in\disk}
\left(\frac{|\gamma|}{\ep}\right)^{-(d+1)}|u(x+\gamma)|^2 \right)^{\frac{1}{2}}\; .\label{besch1}
\end{align}
By \eqref{abfallagamma}, the first factor on the right hand side of \eqref{besch1} is bounded uniformly in $x$.
Thus
\[
\|T_\ep u\|^2_{\ell^2}  \leq
C \sum_{\eta\in\Z^d} |\eta|^{-(d+1)} \sum_{x\in\disk} |u(x+\ep\eta)|^2 \leq C \|u\|^2_{\ell^2}\; . \]
Thus $T_\ep$ is a bounded operator on $\ell^2(\disk)$.

{\sl Symmetry:}\\
Since $T_\ep$ is bounded, it is symmetric if and only if for any $x,\gamma\in\disk$
\begin{equation}\label{symmdelta}
\skpd{ T_\ep \delta_{x}}{\delta_{x+ \gamma}} =
\skpd{\delta_x}{T_\ep \delta_{x+\gamma}} \; ,
\end{equation}
where $\delta_x(y):= \delta_{xy}$. Since the left hand side of \eqref{symmdelta} is equal to $a_{-\gamma}(x+\gamma, \ep)$
and the right hand side equals $a_\gamma (x, \ep)$, the statement follows by Hyp.\ref{hypdecay}(a)(iv).

{\sl Boundedness from below}: \\
For $u\in {\mathcal K}(\disk)$, we write
\begin{align}\label{Tbesch}
\skpd{u}{T_\ep u} &= A[u] + B[u]\, \qquad\text{where} \\
A[u] &:= \sum_{x\in\disk}\left\{ a^{(0)}_0(x) |u(x)|^2 + \sum_{\gamma\neq 0} a^{(0)}_\gamma(x) u(x+\gamma)
\bar{u}(x)\right\}\nonumber \\
B[u] &:= \sum_{x,\gamma\in\disk} \left(\ep\, a^{(1)}_\gamma(x) + R^{(2)}_\gamma(x,\ep)\right)\bar{u}(x)u(x+\gamma)
\, .\nonumber 
\end{align}
Then by the exponential decay of $a^{(1)}$ and $R^{(2)}$ with respect to $\gamma$ (Hyp.\ref{hypdecay}(a)(v)) 
\begin{equation}\label{Bu}
|B[u]| \leq \sum_{x,\gamma\in\disk} \left|\ep\, a^{(1)}_\gamma(x) +
R^{(2)}_\gamma(x,\ep)\right|\left(|\bar{u}(x)|^2 + |u(x+\gamma)|^2\right) \leq \ep C\| u\|^2\, . 
\end{equation}
By Hypothesis \ref{hypdecay}(iii) we have
\begin{align}\label{Au}
A[u] &= \sum_{x}\sum_{\gamma\neq 0} a^{(0)}_\gamma (x) \left( u(x+\gamma) \bar{u}(x) - |u(x)|^2\right) \\
&= \frac{1}{2} \left\{ \sum_{\natop{x}{\gamma\neq 0}} a^{(0)}_\gamma (x)
\left( u(x+\gamma) \bar{u}(x) - |u(x)|^2 \right) +
 \sum_{\natop{\tilde{x}}{\tilde{\gamma}\neq 0}} a^{(0)}_{-\tilde{\gamma}} (\tilde{x}+\tilde{\gamma})
 \left( u(\tilde{x}) \bar{u}(\tilde{x} + \tilde{\gamma}) - |u(\tilde{x} + 
 \tilde{\gamma})|^2 \right) \right\}\nonumber \\
  &= - \frac{1}{2}\sum_{\natop{x}{\gamma\neq 0}} a^{(0)}_\gamma (x) \left| u(x) - u(x+\gamma) \right|^2 
   \geq 0 \; , \nonumber
\end{align}
where for the second step we used the symmetry of $T_\ep$ and the substitution $\tilde{x}= x+\gamma$ and 
$\tilde{\gamma}=-\gamma$
and the last estimate follows from Hyp.\ref{hypdecay}(a)(iii). Inserting \eqref{Bu} and \eqref{Au}
in \eqref{Tbesch} gives the stated result.
\end{proof}

Definition \ref{Finslerman} and Theorem \ref{listfinsler}
allow to define a metric adapted to the Hamilton operator $H_\ep$ as follows.\\

\begin{prop}\label{hnullhyper}
The Hamilton function $ \tilde{h}_0: \R^{2d} \cong T^*\R^d \ra \R$ defined in \eqref{tildehnull} is hyperconvex in
each fibre.
\end{prop}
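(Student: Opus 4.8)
The plan is to write down the fibre Hessian of $\tilde h_0$ explicitly from the $\cosh$-series \eqref{tildetdef} and to bound it from below by the quadratic form of the matrix $B(x)$ already produced in Lemma \ref{propt}(c).

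First I would fix $x\in\R^d$. Since $V_0$ does not depend on $\xi$, the fibre Hessian of $\tilde h_0$ coincides with that of $\tilde t_0$. Differentiating \eqref{tildetdef} twice in $\xi$ and using $\partial_{\xi_\mu}\partial_{\xi_\nu}\cosh(\eta\cdot\xi) = \eta_\mu\eta_\nu\cosh(\eta\cdot\xi)$ gives, for all $\xi,\zeta\in\R^d$,
\[
 D^2_\xi\tilde h_0|_{(x,\xi)}(\zeta,\zeta) = D^2_\xi\tilde t_0|_{(x,\xi)}(\zeta,\zeta) = -\sum_{\eta\in\Z^d}\tilde a_\eta(x)\,(\eta\cdot\zeta)^2\cosh(\eta\cdot\xi)\,.
\]
Term-by-term differentiation is legitimate because the coefficients decay faster than any exponential: by Hypothesis \ref{hypdecay}(a)(iv), with $\gamma=\ep\eta$, one has $\sum_\eta e^{2c|\eta|}|\tilde a_\eta(x)|^2\leq C$ for every $c>0$, so $|\tilde a_\eta(x)|\,|\eta|^2\cosh(\eta\cdot\xi)$ is dominated by a summable sequence locally uniformly in $\xi$ (equivalently, one may invoke the analyticity of $\xi\mapsto t_0(x,\xi)$ asserted in Lemma \ref{propt}(b)).

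Next I would use the sign conditions to bound this from below. By Hypothesis \ref{hypdecay}(a)(ii), $\tilde a_\eta(x)=a^{(0)}_{\ep\eta}(x)\leq 0$ for $\eta\neq 0$, so each summand above is nonnegative; since moreover $\cosh\geq 1$ and the term $\eta=0$ vanishes,
\[
 D^2_\xi\tilde h_0|_{(x,\xi)}(\zeta,\zeta) \;\geq\; -\sum_{\eta\neq 0}\tilde a_\eta(x)\,(\eta\cdot\zeta)^2 \;=\; 2\,\skp{\zeta}{B(x)\zeta}\,,
\]
where $B(x)$ is exactly the matrix of Lemma \ref{propt}(c), since $B_{\nu\mu}(x)=-\tfrac12\sum_\eta\tilde a_\eta(x)\eta_\nu\eta_\mu$. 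By Lemma \ref{propt}(c), $B(x)$ is symmetric and positive definite, hence $\alpha_x:=2\min\spec B(x)>0$ and $D^2_\xi\tilde h_0|_{(x,\xi)}(\zeta,\zeta)\geq\alpha_x\|\zeta\|^2$ for all $\xi,\zeta$, which is precisely hyperconvexity of $\tilde h_0|_{T^*_x\R^d}$ in the sense of Definition \ref{konvexdef}.

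I do not expect a genuine obstacle here: the only points needing a little care are the justification of differentiating the series twice term by term and the bookkeeping identifying the lower bound with the already-constructed matrix $B(x)$. If a constant $\alpha$ locally uniform in $x$ is wanted for later use, one simply remarks that $x\mapsto\min\spec B(x)$ is continuous and strictly positive, hence bounded away from $0$ on compact sets.
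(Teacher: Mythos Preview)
Your argument is correct and follows essentially the same line as the paper's: compute the fibre Hessian as $-\sum_\eta \tilde a_\eta(x)(\eta\cdot\zeta)^2\cosh(\eta\cdot\xi)$, use the sign condition $\tilde a_\eta\leq 0$ for $\eta\neq 0$ together with $\cosh\geq 1$, and bound below by a positive definite quadratic form. The only difference is cosmetic: the paper drops all but a finite spanning set $\{\eta^1,\dots,\eta^d\}$ with $\tilde a_{\eta^k}<0$ (from Hypothesis~\ref{hypdecay}(a)(v)) to obtain a new matrix $M$, whereas you keep the full sum and recognize it as $2B(x)$ from Lemma~\ref{propt}(c), thereby recycling a result already proven; both routes rest on the same span condition and yield an $x$-dependent lower bound $\alpha_x$, which is exactly what hyperconvexity in each fibre requires.
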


\begin{proof}

We have to show that there exists a constant $\alpha>0$ such that
\begin{equation}\label{htildehyp}
\skp{v}{D^2_\xi \tilde{h}_0(x,\xi)v} \geq\alpha \|v\|^2\quad\text{for all}\quad x,\xi,v\in\R^d\; .
\end{equation}
For simplicity of notation, we will skip the $x$-dependence of $\tilde{h}_0$ and $a^{(0)}_\gamma$.
We have for $\tilde{a}$ defined in \eqref{agammaunep}
\begin{equation}\label{agammapos2}
\skp{v}{D^2_\xi\tilde{h}_0(\xi) v} = -\sum_{\eta\in\Z^d} \tilde{a}_\eta
(\gamma\cdot v)^2 \cosh(\gamma\cdot\xi)
 \; , \qquad \xi,v\in\R^d\; .
\end{equation}
By Hypothesis \ref{hypdecay},(vi), we can choose a basis
$\{\eta^1,\ldots , \eta^d\},\, \eta^j\in\Z^d$ of $\R^d$ with
$\tilde{a}_{\eta^i} < 0$.
Since by Hyp.\ref{hypdecay}(a)(iv)
each summand in \eqref{agammapos2} has positive sign
\begin{equation}\label{hnullhyp1}
\skp{v}{D^2_\xi\tilde{h}_0(\xi) v} \geq  -\sum_{k=1}^d  \tilde{a}_{\eta^k}
(\eta^k \cdot v)^2 \cosh(\eta^k\cdot \xi)
\; , \qquad \xi,v\in\R^d\; .
\end{equation}
We have
$C=\min_k \left(-\tilde{a}_{\eta^k}\right) > 0$, thus \eqref{hnullhyp1} yields
\[ \skp{v}{D^2_\xi\tilde{h}_0(\xi) v} \geq C \sum_{k=1}^d (\eta^k\cdot v)^2
= \skp{v}{Mv} \geq 0\,, \quad\text{for}\quad M= \left(C\sum_k \eta^k_i \eta^k_j\right)\; .\]
The sum can take the value $0$ only if $v=0$ since $\{\eta^k\}$ is a basis of $\R^d$.
Thus $\tilde{h}_0$ is hyperconvex (the lowest eigenvalue of $M$ gives the lower bound for its second derivative).
\end{proof}

Proposition \ref{hnullhyper} leads by Proposition \ref{hyperconvex} and Lemma \ref{propt} to the following corollary.

\begin{cor}\label{hnullhyperregular}
The Hamilton function $ \tilde{h}_0: \R^{2d}\ra \R$ defined in \eqref{tildehnull} is hyperregular and
even and strictly convex in each fibre.
\end{cor}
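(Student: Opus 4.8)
The plan is to assemble the three asserted properties directly from results already in hand, since the corollary is immediate once the ambient smoothness hypothesis needed to apply Proposition \ref{hyperconvex} is checked.

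First I would record that $\tilde h_0\in\Ce^\infty(\R^{2d})$. By \eqref{tildehnull} we have $\tilde h_0(x,\xi)=\tilde t_0(x,\xi)-V_0(x)$ with $\tilde t_0(x,\xi)=-t_0(x,i\xi)$, and Lemma \ref{propt}(b) tells us that $\xi\mapsto t_0(x,\xi)$ extends analytically to $\C^d$; hence $\xi\mapsto\tilde t_0(x,\xi)$ is real-analytic, its smoothness in $x$ following from $a^{(0)}_\gamma\in\Ce^\infty(\R^d)$ together with the exponential decay \eqref{abfallagamma}. Since $V_0\in\Ce^\infty(\R^d)$ by Hypothesis \ref{hypdecay}(b), this yields $\tilde h_0\in\Ce^\infty(\R^{2d})$. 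Then hyperregularity follows at once: Proposition \ref{hnullhyper} says $\tilde h_0$ is hyperconvex in each fibre, and Proposition \ref{hyperconvex} upgrades hyperconvexity in each fibre of a function in $\Ce^\infty(T^*\R^d)$ to hyperregularity.

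Next, strict convexity in each fibre is already contained in Proposition \ref{hnullhyper}: its proof establishes $\skp{v}{D^2_\xi\tilde h_0(x,\xi)v}\geq\alpha\|v\|^2>0$ for $v\neq 0$ (with $\alpha$ the smallest eigenvalue of the positive definite matrix $M$ occurring there), which is exactly the defining inequality \eqref{energie5} for strict convexity of $\xi\mapsto\tilde h_0(x,\xi)$. Finally, evenness in each fibre is read off from \eqref{tildetdef}: $\tilde t_0(x,\xi)=-\sum_{\eta\in\Z^d}\tilde a_\eta(x)\cosh(\eta\cdot\xi)$ depends on $\xi$ only through the even function $\cosh$, and $V_0(x)$ is $\xi$-independent, so $\tilde h_0(x,-\xi)=\tilde h_0(x,\xi)$; alternatively this is Lemma \ref{propt}(b) together with the elementary observation that $t_0(x,\cdot)$ even forces $\xi\mapsto t_0(x,i\xi)$ even.

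There is essentially no obstacle: the mathematical content lies entirely in Propositions \ref{hnullhyper} and \ref{hyperconvex} and in Lemma \ref{propt}, and the only point deserving a line of verification is that $\tilde h_0$ is globally $\Ce^\infty$ on $\R^{2d}$, which is precisely what makes Proposition \ref{hyperconvex} applicable.
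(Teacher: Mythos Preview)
Your proposal is correct and follows essentially the same route as the paper, which simply records that the corollary follows from Proposition \ref{hnullhyper} (hyperconvexity, hence strict convexity), Proposition \ref{hyperconvex} (hyperconvexity $\Rightarrow$ hyperregularity), and Lemma \ref{propt} (evenness in $\xi$). Your additional explicit verification that $\tilde h_0\in\Ce^\infty(\R^{2d})$ is a welcome clarification the paper leaves implicit.
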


In the setting of Theorem \ref{listfinsler},
we choose $M=\R^d$, $E=0$ and $h=\tilde{h}_0 = \tilde{t}_0 - V_0$.
Recall that by Hypothesis \ref{hypdecay}, the set $S(0)$ of singular points with respect to the
energy $E=0$ is given by $S(0) = \{0\}$.

\begin{Def}\label{hypohnull}
For the hyperregular Hamilton function $\tilde{h}_0$ given in \eqref{tildehnull}, we define
\begin{equation}\label{elldef}
\ell(x,v) := \begin{cases} \ell_{\tilde{h}_0,0}(x,v) \, ,& x\in \widetilde{M}:= \R^d \setminus \{0\},\, v\in T_x\widetilde{M} \\
0 \,& x=0\; .\end{cases}
\end{equation}
The associated Finsler metric $d_\ell: \R^d\times\R^d \rightarrow [0,\infty)$ is
given by
\begin{equation}\label{Hepmetric}
d_\ell(x_0,x_1) = \inf_{\gamma_{0,1}\in\Gamma(x_0, x_1)} \int_0^{1}
\ell(\gamma(t), \dot{\gamma}(t))\, dt
\, .
\end{equation}
\end{Def}

We notice that it follows from the Definition of $\tilde{\tau}_0$ that $\lim_{x\to 0}\tilde{\tau}_0(x,v) = (0,0)$.
Thus $\ell:\R^{2d}\ra \R$ defined in \eqref{elldef} is continuous.

\subsection{Proof of Theorem 1.5}\label{proofeiko}

In order to prove Theorem \ref{eikonald},
we notice that
if $d_\ell$ is locally Lipschitz continuous, it is
differentiable almost everywhere in both arguments (Rademacher Theorem).\\

{\sl Step 1}: We prove \eqref{d-dgamma}.\\
By the triangle inequality and the definition of $d_\ell(x,y)$, we have for
any $v\in\R^d$ with $|v|=1$ and $\delta > 0$
\begin{equation}\label{nablad1}
d^0(x+\delta v) - d^0(x)  \leq  d_\ell(x,x+\delta v)
\leq \int_0^1 \ell(\gamma_0(t),\dot{\gamma}_0(t))\, dt \; ,
\end{equation}
where $\gamma_0(t) = x + t\delta v$. For this special curve we get by the
homogeneity of $\ell$
\begin{equation}\label{nablad12}
\int_0^1 \ell(\gamma_0(t),\dot{\gamma}_0(t))\, dt \leq \sup_{t\in[0,1]}\ell(x+t\delta v, \delta v)
= \delta \sup_{t\in[0,1]}\ell(x+t\delta v, v)\; ,
\end{equation}
where by a slight abuse of notation $v$ is considered as an element of $T_{x+t\delta v}\R^d$.
Thus \eqref{nablad1} together with \eqref{nablad12} proves \eqref{d-dgamma}.\\

{\sl Step 2}: We prove \eqref{eicungl}.\\
By \eqref{nablad1} and \eqref{nablad12} we have for any $v\in\R^d$ with $|v|=1$ almost everywhere in $x\in\R^d$
\begin{equation}\label{basicnabla}
\nabla d^0(x)\cdot v = \partial_v d^0(x) = \lim_{\delta\to 0} \frac{d^0(x+\delta v) - d^0(x)}{\delta} \leq
\lim_{\delta\to 0} \sup_{t\in[0,1]}\ell(x+ t\delta v,v)  = \ell(x,v)\; .
 \end{equation}
Note that $\nabla d^0 (x)$ can be considered as an element of $T^*_x\R^d$.
Since both sides in \eqref{basicnabla} are positive homogeneous of order one with respect to $v$, we can extend the inequality to all
$v\in\R^d$.
Using \eqref{lalspaar}, the Finsler function $\ell$ can
be written as $\ell(x,v) = \xi_{\tilde{h}_0}(x,\tilde{v})\cdot v$, where $v$ is considered as an element of $T_xM$ and
will be written as $(x,v)$.
It follows from \eqref{basicnabla} that
\begin{equation}\label{xid>0}
\left(\xi_{\tilde{h}_0} (x,\tilde{v}) - \nabla d^0(x)\right)\cdot v \geq 0\; , \qquad (x,v)\in TM\;\; a.e.\,\text{on}\, M \, .
\end{equation}
Since $\tilde{h}_0(x,\xi)$ is differentiable, real valued and convex in each fibre, by \eqref{energie10} the inequality
\[ \tilde{h}_0(x,\xi) \geq \tilde{h}_0(x,\eta) + D_\eta \tilde{h}_0(x,\eta)\cdot\left(\xi-\eta\right) \]
holds for all $x,\xi,\eta \in \R^d$. Thus by setting $\xi = \xi_{\tilde{h}_0}(x,\tilde{v})$ and $\eta = \nabla d^0(x)$,
we get for all $(x,v)\in TM$ the estimate
\begin{equation}\label{eikonalungl1}
\tilde{h}_0(x,\xi_{\tilde{h}_0}(x,\tilde{v})) \geq \tilde{h}_0(x,\nabla d^0(x)) +
D_\xi\tilde{h}_0 (x,\nabla d^0(x))\cdot (\xi_{\tilde{h}_0}(x,\tilde{v})-\nabla d^0(x))\; ,
\end{equation}
where $(x,\tilde{v})\in\mathcal{E}$ is associated to $(x,v)$.
The left hand side of \eqref{eikonalungl1} is by definition of $\tilde{v}$ equal to zero.
Choosing $(x,v)=D_\xi\tilde{h}_0 (x,\nabla d^0(x))$ in equation \eqref{eikonalungl1} yields
\[ 0\geq \tilde{h}_0(x,\nabla d^0(x)) + v \cdot (\xi_{\tilde{h}_0}(x,\tilde{v})-\nabla d^0(x)) \; .\]
Using \eqref{xid>0}, this proves \eqref{eicungl}.\\

{\sl Step 3:} We prove \eqref{eicmitd} (the eikonal equality):\\
We consider the generalized eikonal equation
\begin{equation}\label{eikonal}
\tilde{h}_0(x,\nabla\varphi (x)) = \tilde{t}_0(x,\nabla\varphi (x)) - V_0(x) = 0\; .
\end{equation}

Choose coordinates such that $t_0(x_0,\xi) = |\xi|^2 + O\left(|\xi|^3\right)$ and
$V_0(x) = \sum_{\nu=1}^d \lambda_\nu^2 x_\nu^2 + O\left(|x|^3\right)$.
It is proven in \cite{thesis}
that there exists a unique positive
$\mathscr{C}^{\infty}$-function $\varphi$ defined in a
neighborhood $\Omega$ of 0, solving (\ref{eikonal}) such that
$\varphi$ has an expansion as asymptotic series
\begin{equation}\label{varphi}
\varphi (x) \sim \sum_{\nu =1}^d \frac{\lambda_\nu }{2} x_\nu^2 + \sum_{k\geq 1}\varphi_k(x) \, ,\qquad x\in{\Omega}\, ,
\end{equation}
where each $\varphi_k$ is an homogeneous polynomial of order $k+2$.

In particular, denote by $F_t$ the flow of the Hamiltonian vector field $X_{\tilde{h}_0}$.
Then the Local Stable Manifold Theorem (\cite{abma}) tells us that there is an open neighborhood
$\mathscr{N}$ of $(0,0)$ such that the two submanifolds
\begin{equation}
\Lambda_{\pm}\left(X_{\tilde{h}_0},(0,0)\right) :=
\left\{\left.(x,\xi)\in T^*\R^d \,\right|\, F_t(x,\xi)\to (0,0) \quad\text{for}\quad
t\to \mp\infty\right\}
\end{equation}
exist and are unique in $\mathscr{N}$.
They are called stable ($\Lambda_-$) and unstable ($\Lambda_+$) manifold of $X_{\tilde{h}_0}$ of the
critical point $(0,0)$.
Moreover they are of dimension $d$ and
contained in $\tilde{h}_0^{-1}(0)$. It is shown in \cite{thesis}
that $\Lambda_{\pm}$ are Lagrangian manifolds in $T^*\R^d$
and that the outgoing manifold can be parametrized as $\Lambda_+=\{ (x,\nabla\varphi (x))\,|\, x\in\Omega\}$.
Thus for a given $x\in\Omega$ there exists
an integral curve $\hat{\gamma}_0 := (\gamma_0, \nabla \varphi(\gamma_0))\subset \Lambda_+$ of the Hamiltonian
vector field $X_{\tilde{h}_0}$,
parametrized by $[-\infty,0]$ such that $\hat{\gamma}_0(0) = (x, \nabla\varphi (x))$ and
$\lim_{t \to -\infty} \hat{\gamma}_0(t) = (0,0)$.
Since $\hat{\gamma}_0$ is an integral curve of $X_{\tilde{h}_0}$, it
follows from Hamilton's equations that
\[
(\gamma_0, \dot{\gamma}_0) =
{\mathcal D}_F\tilde{h}_0 \left(\gamma_0, \nabla\varphi (\gamma_0)\right)  \]
and therefore
\[ \left({\mathcal D}_F\tilde{h}_0\right)^{-1}
(\gamma_0, \dot{\gamma}_0) = (\gamma_0, \nabla \varphi (\gamma_0)) \; . \]
Thus
\begin{equation}\label{eikzeig1}
\frac{d}{dt} \varphi\circ \gamma_0 = \nabla \varphi|_{\gamma_0} \cdot \dot{\gamma}_0 = \left({\mathcal D}_F\tilde{h}_0\right)^{-1}
(\gamma_0, \dot{\gamma}_0)\cdot \dot{\gamma}_0\; .
\end{equation}
Since $\hat{\gamma}_0$ is an integral curve, $(\gamma_0(t),\dot{\gamma}_0(t))\in {\mathcal E}$ for all $t$.
Therefore $\tilde{\tau}_0 (\gamma_0,\dot{\gamma}_0)= (\gamma_0,\dot{\gamma}_0)$ and it follows
at once from \eqref{eikzeig1} and the definition of $\ell$
that
\begin{equation}\label{eikzeig2}
\frac{d}{dt} \varphi\circ \gamma_0= \ell(\gamma_0,\dot{\gamma_0})
\end{equation}
The point $x=0$ is a singular point of the Finsler manifold $(\R^d, \ell)$, thus
the base integral curve $\gamma_0:[-\infty,0]\ra \Omega\ni 0$ of $X_{\tilde{h}_0}$
is not a regular curve on a Finsler manifold in the sense of Definition \ref{Finslerman2}.
To avoid this difficulty, we restrict the curve $\gamma_0$ to $[-T,0]$ and set $y_T:= \gamma_0(-T)$.
Then by \eqref{eikzeig2}
\begin{equation}\label{varphialsint}
\varphi(x) - \varphi(y_T)  = \int_{-T}^0 \ell(\gamma_0(t),\dot{\gamma}_0(t))\, dt
\end{equation}
By Proposition \ref{eulermaup} the base integral curve $\gamma_0$ of $X_{\tilde{h}_0}$ is a geodesic with respect to
the associated Finsler function $\ell$.
It is a basic theorem in Finsler Geometry (see Abate-Patrizio \cite{abate}, Theorem 1.6.6),
that geodesics, which are
short enough, actually minimize the curve length among all $\Ce^\infty$-curves (or $\Ce^2$-curves) with the same endpoints.
Thus the length of any short geodesic joining $x$ and $y$ is for $|x-y|$ sufficiently small
equal to the Finsler distance $d_\ell(x,y)$ and
\begin{equation}\label{eikzeig3}
\varphi (x) - \varphi (y_T) = d_\ell(y_T,x)\; .
\end{equation}
Since $y_T\to 0, T\to\infty$ and $d_{\ell}$ and $\varphi$ are continuous in $x=0$, we get
\[ \varphi (x) = d^0(x)\, \quad x\in\Omega \]
for $|x|$ sufficiently small.
\qed

\section{Weighted estimates for Dirichlet eigenfunctions}\label{Kap5a}

\subsection{Preliminary Results}\label{pre}

\begin{Lem}\label{Hepconj}
Assume Hypothesis \ref{hypdecay} and,
for $\Sigma\subset\R^d$, let $H_\ep^\Sigma$ denote the Dirichlet operator introduced in Definition \ref{ell2sigma}.
Let $\varphi:\Sigma \ra \R$ be Lipschitz and constant outside some bounded set.
Then for any real valued $v\in \De (H^\Sigma_\ep)$
\begin{eqnarray*}
 \skpd{\left(e^{\frac{\varphi}{\ep}}H^\Sigma_\ep e^{-\frac{\varphi}{\ep}}\right)v}{v} &=&
\skpd{\left(V_\ep + V_{\ep,\Sigma}^\varphi \right)v}{v}  \\
&&\hspace{0.3cm} - \frac{1}{2}\sum_{x\in \Sigma}\sum_{\gamma \in\Sigma_\ep'(x)}
a_\gamma (x, \ep) \cosh \left(\tfrac{1}{\ep}\varphi (x)-\varphi (x+\gamma)\right) (v(x)-v(x+\gamma))^2 \, ,
\end{eqnarray*}
where $\Sigma_\ep'(x) := \{\gamma\in\disk\,|\, x+\gamma\in\Sigma\}$ and
\begin{equation}\label{Vphiep}
V_{\ep,\Sigma}^\varphi (x) :=
\sum_{\gamma\in\Sigma_\ep'(x)}a_\gamma(x, \ep) \cosh \left(\tfrac{1}{\ep}(\varphi(x)- \varphi(x+\gamma))\right)\; ,
\end{equation}
where the sum on the right hand side converges.
\end{Lem}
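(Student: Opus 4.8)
The plan is to peel off the potential, which commutes with the conjugation, and then handle the kinetic term by a symmetrization based on the reversibility condition Hypothesis~\ref{hypdecay}(a)(iii), followed by completing the square. Since $V_\ep$ is a multiplication operator we have $e^{\varphi/\ep}H_\ep^\Sigma e^{-\varphi/\ep}=V_\ep+e^{\varphi/\ep}T_\ep^\Sigma e^{-\varphi/\ep}$ with $T_\ep^\Sigma:=\id_{\Sigma_\ep}T_\ep|_{\ell^2_{\Sigma_\ep}}$; moreover $\varphi$, being Lipschitz and constant outside a bounded set, is bounded, so $e^{\pm\varphi/\ep}$ are bounded multiplication operators and the conjugation is a bounded operator with the same domain $\De(H_\ep^\Sigma)$. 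A direct computation from \eqref{Hepein}, using that $v$ is real and vanishes outside $\Sigma_\ep$, then gives
\[
 \skpd{\bigl(e^{\frac{\varphi}{\ep}}T_\ep^\Sigma e^{-\frac{\varphi}{\ep}}\bigr)v}{v}
 =\sum_{x\in\Sigma_\ep}\ \sum_{\gamma\in\Sigma_\ep'(x)} a_\gamma(x,\ep)\,e^{\frac{1}{\ep}(\varphi(x)-\varphi(x+\gamma))}\,v(x)\,v(x+\gamma)\, ,
\]
the inner sum being restricted to $\Sigma_\ep'(x)$ because $v(x+\gamma)=0$ for $x+\gamma\notin\Sigma_\ep$.

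Next I would secure absolute convergence, which legitimises all the rearrangements below and simultaneously gives the convergence of the series in \eqref{Vphiep}. If $L$ denotes a Lipschitz constant of $\varphi$, then $|\varphi(x)-\varphi(x+\gamma)|\le L|\gamma|$, hence $\cosh\bigl(\tfrac1\ep(\varphi(x)-\varphi(x+\gamma))\bigr)\le e^{L|\gamma|/\ep}$; choosing $c>L$ and applying the Cauchy--Schwarz inequality against an exponential weight,
\[
 \sum_{\gamma\in\disk}|a_\gamma(x,\ep)|\,e^{\frac{L|\gamma|}{\ep}}
 \le \bigl\|e^{\frac{c|.|}{\ep}}a_.(x,\ep)\bigr\|_{\ell^2_\gamma(\disk)}\cdot\bigl\|e^{\frac{(L-c)|.|}{\ep}}\bigr\|_{\ell^2_\gamma(\disk)}<\infty\, ,
\]
locally uniformly in $x$, where the first factor is finite by \eqref{agammaexp} together with Hypothesis~\ref{hypdecay}(a)(iv) and the second is a convergent sum over $(\ep\Z)^d$.

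The core step is the symmetrization of the double sum. I substitute $\tilde x=x+\gamma$, $\tilde\gamma=-\gamma$: by Hypothesis~\ref{hypdecay}(a)(iii) one has $a_\gamma(x,\ep)=a_{-\gamma}(x+\gamma,\ep)=a_{\tilde\gamma}(\tilde x,\ep)$, while $\varphi(x)-\varphi(x+\gamma)=-(\varphi(\tilde x)-\varphi(\tilde x+\tilde\gamma))$, the product $v(x)v(x+\gamma)$ is symmetric, and the index set $\{(x,\gamma):x\in\Sigma_\ep,\ \gamma\in\Sigma_\ep'(x)\}$ is invariant. Averaging the sum with its image under this substitution replaces $e^{\frac1\ep(\varphi(x)-\varphi(x+\gamma))}$ by $\cosh\bigl(\tfrac1\ep(\varphi(x)-\varphi(x+\gamma))\bigr)$. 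I then write $v(x)v(x+\gamma)=\tfrac12\bigl(v(x)^2+v(x+\gamma)^2-(v(x)-v(x+\gamma))^2\bigr)$ and apply the same substitution once more to the $v(x+\gamma)^2$-terms, which is admissible because $\cosh$ is even, so the weight is again invariant; this turns $\tfrac12\sum_{x,\gamma}a_\gamma(x,\ep)\cosh(\cdots)\bigl(v(x)^2+v(x+\gamma)^2\bigr)$ into $\sum_{x}V^\varphi_{\ep,\Sigma}(x)v(x)^2=\skpd{V^\varphi_{\ep,\Sigma}v}{v}$. Collecting the three pieces and adding $\skpd{V_\ep v}{v}$ produces exactly the asserted identity; the $\gamma=0$ term is harmless, contributing $a_0(x,\ep)v(x)^2$ to $V^\varphi_{\ep,\Sigma}$ and $0$ to the difference-square sum.

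The only genuine obstacle is the convergence bookkeeping: a priori the double sums are only conditionally meaningful, so the weighted $\ell^1$-bound $\sum_\gamma|a_\gamma(x,\ep)|e^{L|\gamma|/\ep}<\infty$ must be established before the substitutions $\gamma\mapsto-\gamma$ and the regrouping of terms are justified. Once that bound is in place, the rest is elementary algebra.
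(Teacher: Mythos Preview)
Your proof is correct and follows essentially the same route as the paper: peel off $V_\ep$, symmetrize the kinetic piece to turn the exponential into a $\cosh$, then complete the square via the substitution $(x,\gamma)\mapsto(x+\gamma,-\gamma)$ together with Hypothesis~\ref{hypdecay}(a)(iii). The only cosmetic difference is that the paper obtains the $\cosh$ by invoking the self-adjointness of $T_\ep$ at the operator level---writing $\langle e^{\varphi/\ep}T_\ep^\Sigma e^{-\varphi/\ep}v,v\rangle=\tfrac12\bigl(\langle T_\ep^\Sigma e^{-\varphi/\ep}v,e^{\varphi/\ep}v\rangle+\langle e^{-\varphi/\ep}v,T_\ep^\Sigma e^{\varphi/\ep}v\rangle\bigr)$---whereas you obtain it by the same change of variables at the level of the double sum; since self-adjointness is exactly the reversibility condition, these are the same mechanism. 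Your explicit $\ell^1$-bound $\sum_\gamma|a_\gamma(x,\ep)|e^{L|\gamma|/\ep}<\infty$ before rearranging is a welcome clarification that the paper leaves implicit.
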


\begin{proof}
By use of the symmetry of $T_\ep$ (Lemma \ref{propt}) and since $v$ and $\varphi$ are assumed to be real valued
and $e^{\pm\frac{\varphi}{\ep}} v\in\De (H^\Sigma_\ep)$, we have
\begin{multline*}
\skpd{\left(e^{\frac{\varphi}{\ep}}\id_{\Sigma_\ep}T_\ep \id_{\Sigma_\ep} e^{-\frac{\varphi}{\ep}}\right)v}{v}
= \frac{1}{2}\left[ \skpd{\id_{\Sigma_\ep}T_\ep \id_{\Sigma_\ep}
e^{-\frac{\varphi}{\ep}}\, v}{e^{\frac{\varphi}{\ep}}\,v} +
\skpd{ e^{-\frac{\varphi}{\ep}}\,v}{\id_{\Sigma_\ep}T_\ep \id_{\Sigma_\ep}e^{\frac{\varphi}{\ep}}\,v}\right]\\
= \frac{1}{2}\sum_{x,x+\gamma\in\Sigma} a_\gamma(x, \ep)\left(e^{\frac{1}{\ep}(\varphi (x)-\varphi (x+\gamma))} +
 e^{-\frac{1}{\ep}(\varphi (x)-\varphi (x+\gamma))}\right)v(x+\gamma)v(x) \\
\end{multline*}
Writing $v(x+\gamma)v(x) = v^2(x) - \frac{1}{2} \left( 2 v^2(x) - 2 v(x+\gamma)v(x)\right)$
yields by the definition of $V_{\ep,\Sigma}^\varphi$
\begin{multline}
\skpd{\left(e^{\frac{\varphi}{\ep}}T_\ep e^{-\frac{\varphi}{\ep}}\right)v}{v}  \label{normest1}\\
 = \skpd{V_{\ep,\Sigma}^\varphi v}{v} - \frac{1}{2} \sum_{x,x+\gamma\in\Sigma}
 a_\gamma(x, \ep) \cosh \left(\tfrac{1}{\ep}(\varphi (x)-
\varphi (x+\gamma))\right)(2v^2(x) - 2 v(x)v(x+\gamma)) \; .
\end{multline}
By Hypothesis \ref{hypdecay} we have $a_\gamma (x, \ep) =
a_{-\gamma} (x+\gamma, \ep)$. Thus by use of the substitutions $x'= x+\gamma$ and
$\gamma' = -\gamma$ together with the fact that $\cosh$ is even
\begin{multline}\label{agammatrans}
 \sum_{x,x+\gamma\in\Sigma}a_\gamma(x, \ep) \cosh \left(\tfrac{1}{\ep}(\varphi (x)-
\varphi (x+\gamma))\right) v^2(x) \\
=\sum_{x',x'+\gamma'\in\Sigma}a_{-\gamma'}(x' + \gamma', \ep) \cosh \left(\tfrac{1}{\ep}
(\varphi (x'+\gamma')-\varphi (x'))\right)v^2(x'+\gamma')  \\
= \sum_{x',x'+\gamma'\in\Sigma}a_{\gamma'}(x', \ep) \cosh \left(\tfrac{1}{\ep}(\varphi (x')-
\varphi (x'+\gamma'))\right)v^2(x'+\gamma')
\end{multline}
Inserting \eqref{agammatrans} into \eqref{normest1} gives
\begin{multline*}
\skpd{\left(e^{\frac{\varphi}{\ep}}\id_{\Sigma_\ep}T_\ep\id_{\Sigma_\ep} e^{-\frac{\varphi}{\ep}}\right)v}{v} = \\
\skpd{V_{\ep,\Sigma}^\varphi v}{v} -\frac{1}{2} \sum_{x,x+\gamma\in\Sigma}a_\gamma(x, \ep) \cosh \left(\tfrac{1}{\ep}
(\varphi (x)-
\varphi (x+\gamma))\right)\left(v^2(x) - 2 v(x)v(x+\gamma) + v^2(x+\gamma)\right)
\, .
\end{multline*}
Since $V_\ep$ commutes with $e^{-\frac{\varphi}{\ep}}$, the stated equality follows.
The convergence of the series in \eqref{Vphiep} follows from the decay of $a_\gamma(x,\ep)$ with respect
to $\gamma$ (Hyp.\ref{hypdecay}(v)) together with the assumptions on $\varphi$ and the mean value theorem.
\end{proof}

Lemma \ref{Hepconj} leads to the following norm
estimate, which will be used later on to prove Theorem \ref{weig}.\\

\begin{Lem}\label{HepDchi}
Assume Hypothesis \ref{hypdecay} and,
for $\Sigma\subset\R^d$, let $H_\ep^\Sigma$ denote the Dirichlet operator introduced in Definition \ref{ell2sigma}.
For $E\geq 0$ fixed, let $F_\pm : \Sigma \rightarrow [0,\infty)$ be a pair of functions such that
$F(x) := F_+(x) + F_-(x) > 0 $ and
\begin{equation}\label{F+F-bedingung}
F_+^2(x) - F_-^2(x) = \hat{V}_\ep(x) + V^\varphi_{\ep,\Sigma}(x) - E\; , \qquad x\in\Sigma \; ,
\end{equation}
where $V^\varphi_{\ep,\Sigma}(x)$ is given in \eqref{Vphiep}.
Then
for $v\in \De (H^\Sigma_\ep)$ real-valued with
$Fv\in \ell^2_{\Sigma_\ep}$ and $\varphi:\Sigma \ra \R$ Lipschitz and constant outside
some bounded set, we have for some $C>0$
\begin{equation}\label{FmitF-}
\| Fv\|^2_{\ell^2} \leq 4 \left\| \tfrac{1}{F}\left(e^{\frac{\varphi}{\ep}}(H_\ep^\Sigma - E)
e^{-\frac{\varphi}{\ep}}\right)v\right\|^2_{\ell^2} + 8 \|F_-v\|^2_{\ell^2} + C\ep \|v\|^2 \, .
\end{equation}
\end{Lem}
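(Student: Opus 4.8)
The plan is to start from the quadratic-form identity of Lemma~\ref{Hepconj} (whose hypotheses on $v$ and $\varphi$ are exactly those assumed here), subtract $E\|v\|_{\ell^2}^2$, and insert the defining relation \eqref{F+F-bedingung}. Since $V_\ep$ is the restriction of $\hat V_\ep$ to $\disk$, and since $F=F_++F_-$ gives the purely algebraic identity $F_+^2-F_-^2 = F^2-2FF_-$, the potential contribution becomes
\[
 \skpd{\bigl(\hat V_\ep + V^\varphi_{\ep,\Sigma}-E\bigr)v}{v} = \|Fv\|_{\ell^2}^2 - 2\skpd{Fv}{F_-v}\, .
\]
Rearranging Lemma~\ref{Hepconj} therefore yields
\[
 \|Fv\|_{\ell^2}^2 = \skpd{\bigl(e^{\frac{\varphi}{\ep}}(H^\Sigma_\ep-E)e^{-\frac{\varphi}{\ep}}\bigr)v}{v} + 2\skpd{Fv}{F_-v} + \tfrac12 S\, ,
\]
where $S:=\sum_{x\in\Sigma}\sum_{\gamma\in\Sigma_\ep'(x)} a_\gamma(x,\ep)\cosh\!\bigl(\tfrac1\ep(\varphi(x)-\varphi(x+\gamma))\bigr)\,(v(x)-v(x+\gamma))^2$ is the ``kinetic'' sum occurring in Lemma~\ref{Hepconj}.

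Next I would split $S=S^{(0)}+S^{(1)}$ according to the expansion $a_\gamma(x,\ep)=a^{(0)}_\gamma(x)+\bigl(\ep a^{(1)}_\gamma(x)+R^{(2)}_\gamma(x,\ep)\bigr)$ of Hypothesis~\ref{hypdecay}(a)(i). In $S^{(0)}$ the diagonal term $\gamma=0$ vanishes and for $\gamma\neq0$ one has $a^{(0)}_\gamma(x)\le0$ (Hypothesis~\ref{hypdecay}(a)(ii)), while $\cosh\ge1$ and $(v(x)-v(x+\gamma))^2\ge0$; hence $S^{(0)}\le0$ and may be discarded. For $S^{(1)}$ I would use $(v(x)-v(x+\gamma))^2\le 2v(x)^2+2v(x+\gamma)^2$, bound the weight by $\cosh(\tfrac1\ep(\varphi(x)-\varphi(x+\gamma)))\le e^{L|\gamma|/\ep}$ with $L$ a Lipschitz constant of $\varphi$, and apply Hypothesis~\ref{hypdecay}(a)(iv) with a constant $c>L$: by the Cauchy--Schwarz inequality in $\gamma$,
\[
 \sum_{\gamma\in\disk}\bigl|\ep a^{(1)}_\gamma(x)+R^{(2)}_\gamma(x,\ep)\bigr|\,e^{\frac{L|\gamma|}{\ep}}
 \le \bigl(C\ep+C\ep^2\bigr)\,\bigl\|e^{-\frac{(c-L)|.|}{\ep}}\bigr\|_{\ell^2_\gamma(\disk)}\le C'\ep\, ,
\]
uniformly in $x$ and $\ep$, because $\bigl\|e^{-(c-L)|.|/\ep}\bigr\|_{\ell^2_\gamma(\disk)}^2=\sum_{\eta\in\Z^d}e^{-2(c-L)|\eta|}$ is a fixed finite constant. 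Summing first in $\gamma$ for the $v(x)^2$ part, and after the substitution $(x,\gamma)\mapsto(x+\gamma,-\gamma)$ for the $v(x+\gamma)^2$ part (the bound above holding uniformly in the base point), one gets $|S^{(1)}|\le C\ep\|v\|_{\ell^2}^2$, hence $\tfrac12 S\le C\ep\|v\|_{\ell^2}^2$.

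It then remains to estimate the two remaining terms. Writing $\skpd{(e^{\varphi/\ep}(H^\Sigma_\ep-E)e^{-\varphi/\ep})v}{v}=\skpd{\tfrac1F(e^{\varphi/\ep}(H^\Sigma_\ep-E)e^{-\varphi/\ep})v}{Fv}$ (legitimate since $F>0$; if this first factor is not in $\ell^2$ the asserted inequality holds trivially), the Cauchy--Schwarz inequality together with Young's inequality $ab\le\tfrac14 a^2+b^2$ bounds it by $\tfrac14\|Fv\|_{\ell^2}^2+\bigl\|\tfrac1F(e^{\varphi/\ep}(H^\Sigma_\ep-E)e^{-\varphi/\ep})v\bigr\|_{\ell^2}^2$, while $2\skpd{Fv}{F_-v}\le\tfrac14\|Fv\|_{\ell^2}^2+4\|F_-v\|_{\ell^2}^2$. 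Absorbing the two copies of $\tfrac14\|Fv\|_{\ell^2}^2$ into the left-hand side and multiplying by $2$ gives \eqref{FmitF-} (in fact with $2$ in place of the stated $4$). The main obstacle is the estimate for $S^{(1)}$: one must play off the exponential growth $e^{L|\gamma|/\ep}$ of the $\cosh$-weight --- governed only by the Lipschitz constant of $\varphi$, which is independent of $\ep$ --- against the $\ep$-uniform exponential decay of $a^{(1)}_\gamma$ and $R^{(2)}_\gamma$ from Hypothesis~\ref{hypdecay}(a)(iv), and verify that the resulting geometric series over the rescaled lattice $\disk$ does not blow up as $\ep\to0$. The rest is routine use of Cauchy--Schwarz and Young.
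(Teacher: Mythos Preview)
Your argument is correct and follows the same skeleton as the paper's proof: invoke Lemma~\ref{Hepconj}, show that the kinetic double sum contributes at most $C\ep\|v\|^2$ by splitting off the sign-definite $a^{(0)}_\gamma$ part and controlling the $O(\ep)$ remainder via the exponential decay in Hypothesis~\ref{hypdecay}(a)(iv), and finish with Cauchy--Schwarz/Young to absorb a fraction of $\|Fv\|^2$.

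The one genuine difference is algebraic. The paper works with $\|F_+v\|^2-\|F_-v\|^2$ directly, starting from $\|Fv\|^2\le 2\|F_+v\|^2+2\|F_-v\|^2 = 2(\|F_+v\|^2-\|F_-v\|^2)+4\|F_-v\|^2$, and then bounds the quadratic form by $2\|\tfrac1F(\cdots)v\|^2+\tfrac12\|Fv\|^2$, which after absorption yields the constants $(4,8)$ in \eqref{FmitF-}. You instead use the exact identity $F_+^2-F_-^2=F^2-2FF_-$, getting an equality for $\|Fv\|^2$ rather than an inequality at the first step; your two applications of Young each shed $\tfrac14\|Fv\|^2$, so after absorption you obtain $(2,8)$. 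Your route is marginally sharper and avoids the initial loss from $(F_++F_-)^2\le 2F_+^2+2F_-^2$; otherwise the two proofs are the same, including the treatment of the remainder $S^{(1)}$ (the paper simply calls this a ``straightforward calculation'' leading to \eqref{agamma}, which is precisely your estimate $\tfrac12 S\le C\ep\|v\|^2$).
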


\begin{proof}
First observe that
\begin{equation}\label{FF-F+}
\| Fv\|^2_{\ell^2} \leq 2\left( \|F_+v\|^2_{\ell^2} + \| F_-v\|^2_{\ell^2} \right)  =
2\left( \|F_+v\|^2_{\ell^2} - \| F_-v\|^2_{\ell^2} \right) + 4 \| F_-v\|^2_{\ell^2}\; .
\end{equation}
By \eqref{F+F-bedingung} one has
\begin{equation}\label{eikoF}
 \| F_+v\|^2_{\ell^2} - \| F_-v\|^2_{\ell^2} =
 \skpd{(\hat{V}_\ep + V_{\ep,\Sigma}^\varphi - E)v}{v} \, .
\end{equation}
Hyp.\ref{hypdecay}(a)(iii) and (v) yields by straightforward calculation
\begin{equation}\label{agamma}
-\frac{1}{2}\sum_{x,x+\gamma\in\Sigma}a_\gamma(x, \ep) \cosh \left(\frac{1}{\ep}(\varphi (x)-
\varphi (x+\gamma))\right) (v(x)-v(x+\gamma))^2\geq -C \ep \|v\|^2\; ,
\end{equation}
since $|\varphi(x+\gamma) - \varphi(x)| \leq |\gamma| \sup_{y\in K} |D\varphi(y)|$ for some compact set $K\subset
\Sigma$. 
Thus it follows from Lemma \ref{Hepconj} and \eqref{agamma} that
\begin{equation}\label{HepundVvarphi}
\skpd{(\hat{V}_\ep + V_{\ep,\Sigma}^\varphi - E)v}{v} - C \ep \|v\|^2 \leq
\skpd{\left( e^{\frac{\varphi}{\ep}}(H^\Sigma_\ep - E)e^{-\frac{\varphi}{\ep}}\right) v}{v}
 \, .
\end{equation}
\eqref{eikoF} and \eqref{HepundVvarphi} yield by use of the Cauchy-Schwarz inequality
\begin{eqnarray}\label{schwarzbinom}
2\left( \|F_+v\|^2_{\ell^2} - \| F_-v\|^2_{\ell^2} \right) - C \ep \|v\|^2 &\leq& 2
\skpd{\left( e^{\frac{\varphi}{\ep}}(H^\Sigma_\ep - E)e^{-\frac{\varphi}{\ep}}\right) v}{v}
 \\
&\leq& 2\sqrt{2}
\left\|\tfrac{1}{F}\left( e^{\frac{\varphi}{\ep}}(H^\Sigma_\ep - E)e^{-\frac{\varphi}{\ep}}\right) v
\right\|_{\ell^2}
\frac{1}{\sqrt{2}}\| Fv\|_{\ell^2} \nonumber\\
&\leq&  2 \left\|\tfrac{1}{F}\left( e^{\frac{\varphi}{\ep}}(H^\Sigma_\ep - E)
e^{-\frac{\varphi}{\ep}}
\right) v\right\|^2_{\ell^2} +\frac{1}{2} \| Fv\|^2_{\ell^2}\, .\nonumber
\end{eqnarray}
Inserting \eqref{schwarzbinom} into \eqref{FF-F+} we get
\[
\| Fv\|^2_{\ell^2} \leq  2 \left\|\tfrac{1}{F}
\left( e^{\frac{\varphi}{\ep}}(H^\Sigma_\ep - E)
e^{-\frac{\varphi}{\ep}}
\right) v\right\|^2_{\ell^2} +\frac{1}{2} \| Fv\|^2_{\ell^2}  +
4 \| F_-v\|^2_{\ell^2} + C\ep \|v\|^2\; . \]
This proves \eqref{FmitF-}.
\end{proof}

\begin{Lem}\label{Philem}
Let $\Sigma\subset \R^d$ be an open bounded region including the point $0$ and such that $d^0\in\Ce^2(\overline{\Sigma})$,
where $d^0(x):= d_\ell(0,x)$ is defined in \eqref{Hepmetric}.
Let $\chi\in{\Ce}^\infty(\R_+,[0,1])$ such that $\chi (r)=0$ for $r\leq
\frac{1}{2}$ and $\chi (r) =1$ for $r\geq 1$. In addition we assume that
$0\leq \chi'(r) \leq \frac{2}{\log 2}$.
For $B>0$ we define
$g: \Sigma \ra [0,1]$ by
\begin{equation}\label{defg}
g(x):= \chi\left(\frac{d^0(x)}{B\ep}\right)\; ,\qquad x\in\Sigma
\end{equation}
and set
\begin{equation}\label{DefPhix}
\Phi (x) := d^0(x) - \frac{B\ep}{2}\log \left(\frac{B}{2}\right) -
g(x)\frac{B\ep}{2} \log \left(\frac{2d^0(x)}{B\ep}\right)\; ,\qquad x\in\Sigma\, .
\end{equation}
Then there exists a constant $C>0$ such that for all $\ep\in(0,\ep_0]$
\begin{equation}\label{deltaphibound}
\left|\partial_\nu\partial_\mu \Phi(x)\right|\leq C \; .
\end{equation}
Furthermore for any $B>0$ there is $C'>0$ such that
\begin{equation}\label{ePhied}
e^{\frac{d^0(x)}{\ep}}\frac{1}{C'}\left(1+\frac{d^0(x)}{\ep}\right)^{-\frac{B}{2}}\leq
e^{\frac{\Phi(x)}{\ep}} \leq
e^{\frac{d^0(x)}{\ep}} C'\left(1+\frac{d^0(x)}{\ep}\right)^{-\frac{B}{2}}\; .
\end{equation}
\end{Lem}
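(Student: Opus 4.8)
The point is that $\Phi$ is a one‑dimensional profile composed with $d^0$. Indeed, both $g(x)=\chi\big(d^0(x)/(B\ep)\big)$ and the argument of the logarithm in \eqref{DefPhix} depend on $x$ only through $d^0(x)$, so on $\overline\Sigma$ we may write $\Phi=F\circ d^0$ with
\[
F(r):=r-\tfrac{B\ep}{2}\log\!\big(\tfrac{B}{2}\big)-\chi\!\big(\tfrac{r}{B\ep}\big)\,\tfrac{B\ep}{2}\log\!\big(\tfrac{2r}{B\ep}\big),\qquad r\ge 0 .
\]
Rescaling, $F(r)=B\ep\,\Psi\!\big(r/(B\ep)\big)$ where $\Psi(t):=t-\tfrac12\log(B/2)-\tfrac12\chi(t)\log(2t)$ is independent of $\ep$. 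Hence $F'(r)=\Psi'\!\big(r/(B\ep)\big)$, $F''(r)=(B\ep)^{-1}\Psi''\!\big(r/(B\ep)\big)$, and by the chain rule
\[
\partial_\nu\partial_\mu\Phi=F''(d^0)\,\partial_\nu d^0\,\partial_\mu d^0+F'(d^0)\,\partial_\nu\partial_\mu d^0 .
\]
(Here $\Phi\in\Ce^2(\overline\Sigma)$: on the open set $\{d^0<B\ep/2\}$ one has $g\equiv 0$, and on $\{d^0>0\}$ the logarithmic term is smooth; these cover $\overline\Sigma$.)

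The first input is a pair of bounds on $\Psi$. Differentiating,
\[
\Psi'(t)=1-\tfrac12\chi'(t)\log(2t)-\tfrac{\chi(t)}{2t},\qquad
t\,\Psi''(t)=-\tfrac{t}{2}\chi''(t)\log(2t)-\chi'(t)+\tfrac{\chi(t)}{2t}.
\]
Using that $\chi,\chi',\chi''$ vanish on $[0,\tfrac12]$, that $\chi\equiv1$ on $[1,\infty)$, and $0\le\chi'\le 2/\log 2$: on $[0,\tfrac12]$ we get $\Psi'\equiv1$ and $\Psi''\equiv0$; on the compact interval $[\tfrac12,1]$ every term is continuous, hence bounded by a constant depending only on $\chi$; on $[1,\infty)$ one has $\Psi'(t)=1-\tfrac1{2t}$ and $t\Psi''(t)=\tfrac1{2t}$. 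Therefore $\|\Psi'\|_{L^\infty[0,\infty)}\le C$ and $\sup_{t\ge 0} t\,|\Psi''(t)|\le C$ with $C$ independent of $\ep$.

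The second input is the geometric estimate $|\nabla d^0(x)|^2\le C\,d^0(x)$ for all $x\in\overline\Sigma$. Near $0$ this holds because $d^0$ is smooth there with a non‑degenerate minimum of value $0$ at $0$ (Theorem \ref{eikonald} and \eqref{eicmitd}): $\nabla d^0(0)=0$ and the Hessian is positive definite, so $c|x|^2\le d^0(x)$ and $|\nabla d^0(x)|\le C|x|$, whence $|\nabla d^0(x)|^2\le C'd^0(x)$; on the rest of $\overline\Sigma$, $d^0$ is bounded below by a positive constant while $\nabla d^0$ is bounded (since $d^0\in\Ce^2(\overline\Sigma)$), so the inequality persists. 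Combining the two inputs, with $t:=d^0(x)/(B\ep)$ so that $d^0(x)=B\ep\,t$,
\[
\big|F''(d^0(x))\,\partial_\nu d^0(x)\,\partial_\mu d^0(x)\big|\le\frac{1}{B\ep}\,|\Psi''(t)|\cdot C\,d^0(x)=C\,t\,|\Psi''(t)|\le C',
\]
while $\big|F'(d^0(x))\,\partial_\nu\partial_\mu d^0(x)\big|\le\|\Psi'\|_\infty\,\|D^2 d^0\|_{L^\infty(\overline\Sigma)}$. Adding the two bounds gives \eqref{deltaphibound}.

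Finally, \eqref{ePhied} is a direct computation. From \eqref{DefPhix}, $e^{\Phi(x)/\ep}=e^{d^0(x)/\ep}\big(\tfrac{B}{2}\big)^{-B/2}\big(\tfrac{2d^0(x)}{B\ep}\big)^{-g(x)B/2}$. Writing $s:=d^0(x)/\ep$, so $2d^0(x)/(B\ep)=2s/B$ and $g(x)=\chi(s/B)$, it remains to show that
\[
R(s):=\Big(\tfrac{B}{2}\Big)^{-B/2}\Big(\tfrac{2s}{B}\Big)^{-\chi(s/B)\,B/2}(1+s)^{B/2}
\]
is bounded above and below by positive constants depending only on $B$; this is immediate by distinguishing $s\le B/2$ (then $\chi(s/B)=0$), $s\ge B$ (then $\chi(s/B)=1$ and $R(s)=(1+s^{-1})^{B/2}$), and the remaining compact range $B/2<s<B$ where $2s/B\in(1,2)$ and $\chi(s/B)\in[0,1]$. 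Taking $C'$ to be the largest of these bounds together with their reciprocals yields \eqref{ePhied}. \emph{The one step that is not bookkeeping is the cancellation making $r\,F''(r)$ bounded: $F''$ carries a factor $(B\ep)^{-1}$ that blows up, and only after pairing it with the vanishing $|\nabla d^0(x)|^2=O(d^0(x))$ — which rests on the non‑degeneracy of $d^0$ at the well, supplied by Theorem \ref{eikonald} — does one obtain an $\ep$‑independent estimate; recognizing the structure $\Phi=F\circ d^0$ with $F(r)=B\ep\,\Psi(r/(B\ep))$ is what makes this transparent.}
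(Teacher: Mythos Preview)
Your proof is correct, and it uses exactly the same two core inputs as the paper's proof: the quadratic behavior of $d^0$ at the origin (so that $|\nabla d^0|^2\le Cd^0$ on $\overline\Sigma$), and a case split according to whether $d^0(x)/(B\ep)$ lies in $[0,\tfrac12]$, $[\tfrac12,1]$, or $[1,\infty)$. The organization, however, is genuinely different. The paper expands $\partial_\nu\partial_\mu\Phi$ directly by the product rule into three groups $A_1=\partial_\nu\partial_\mu d^0$, $A_2$ (all terms containing a derivative of $g$, supported where $\tfrac{B\ep}{2}\le d^0\le B\ep$), and $A_3=g\cdot\tfrac{B\ep}{2d^0}\big(\tfrac{\partial_\nu d^0\,\partial_\mu d^0}{d^0}+\partial_\nu\partial_\mu d^0\big)$, and then bounds each group separately on the relevant regions; in particular, for $A_3$ it introduces an auxiliary scale $\delta$ to separate the near--well region from the rest of $\Sigma$. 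Your approach instead recognizes the factorization $\Phi=F\circ d^0$ with the scaling $F(r)=B\ep\,\Psi(r/(B\ep))$, so that the chain rule reduces everything to the two scale-invariant bounds $\|\Psi'\|_\infty<\infty$ and $\sup_{t\ge 0} t|\Psi''(t)|<\infty$; the dangerous factor $(B\ep)^{-1}$ in $F''$ is then cancelled in one line by $|\nabla d^0|^2\le Cd^0$. This buys a shorter and more transparent argument (no need for the auxiliary $\delta$ or the separate estimate \eqref{derivg} on $\partial^\alpha g$), at the cost of hiding the individual terms; the paper's version makes it easier to see which pieces of $\Phi$ contribute where. For \eqref{ePhied} your case analysis on $s=d^0/\ep$ is equivalent to the paper's, which phrases the same thing as a two-sided bound on $y^{\chi(y/B)}/(1+y)$.
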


\begin{proof}
We write for simplicity $d(x):=d^0(x)$. First we notice that there exists a $C>0$ such that for $\alpha\in\N^d, |\alpha|\leq 2$
\begin{equation}\label{derivg}
\left| \partial^\alpha g(x) \right| \leq C \ep^{-\frac{\alpha}{2}}\, , \qquad x\in\Sigma \; .
\end{equation}
Here one uses that by \eqref{eicmitd} $d(x) = O(|x|^2)$ and $\nabla d(x) = O(|x|)$ as $|x|\to 0$, thus $x=O(\sqrt{\ep})$
and $\nabla d(x) = O(\sqrt{\ep})$ on $\supp \nabla g \subset \{x\in\Sigma\, |\, \frac{B\ep}{2}\leq d(x) \leq B\ep \}$.
By the definition (\ref{DefPhix}) of $\Phi$ we have
\begin{equation}\label{zweiteabPhi}
\partial_\nu\partial_\mu  \Phi(x) = \partial_\nu\partial_\mu d(x) -
\partial_\nu\partial_\mu\left(g(x) \frac{B\ep}{2}\log \left(\frac{2d(x)}{B\ep}\right)\right)
= A_1 + A_2 + A_3
\end{equation}
with
\begin{align*}
A_1 &:= \partial_\nu\partial_\mu d(x) \\
A_2 &:= -\left\{  \left(\partial_\nu\partial_\mu
g\right)(x) \frac{B\ep}{2}\log
\left(\frac{2d(x)}{B\ep}\right)+ (\partial_\nu g)(x) \frac{B\ep}{2d(x)}
(\partial_\mu d)(x) +
(\partial_\mu g)(x) \frac{B\ep}{2d(x)}(\partial_\nu d)(x) \right\} \\
A_3 &:= g(x) \frac{B\ep}{2d(x)} \left(\frac{(\partial_\nu d)(x)(\partial_\mu d)(x)}{d(x)} +
 (\partial_\nu \partial_\mu d)(x)\right) \; .
\end{align*}
Since $\Sigma$ is bounded, all derivatives of $d$ are at least bounded by a constant
independent of $\ep$, thus $A_1$ is bounded.

Each summand in $A_2$ includes a derivative of $g$ and is therefore supported in the region
$\frac{B\ep}{2}<d(x)<B\ep$. Thus
$1<\frac{2d(x)}{B\ep} < 2$ and from \eqref{eicmitd}, it follows as above
that $\partial_\nu d(x) = O(\sqrt{\ep})$. By \eqref{derivg} $A_2$ is bounded.

To estimate $A_3$, we introduce a constant $\delta>0$ such
that $\{ x\in\Sigma \,|\, d(x)<\delta \}\subset\Omega$ and $\delta\geq\ep_0 B$
and analyze the regions $d(x)<\delta$ and $d(x)\geq \delta$ separately.

{\sl Case 1:} $d(x)<\delta$:\\
By Theorem \ref{eikonald}, we have $\partial_\nu d(x) = O(|x|)$ and $\partial_\nu\partial_\mu d(x) = O(1)$.
Thus there exists a constant $M>0$ such that
\[ \sum_{\nu, \mu} \left|\frac{(\partial_\nu d)(x)(\partial_\mu d)(x)}{d(x)}\right|  +
\left|(\partial_\nu \partial_\mu d)(x)\right| < M \qquad\text{for}
\quad \delta\quad\text{small enough}.\]
Since in addition for $d(x)>\frac{B\ep}{2}$ (on the
support of $g$), the term $\frac{B\ep}{2d}$ is bounded by 1, $A_3$ is bounded
by a constant independent of $\ep$.

{\sl Case 2:} $d(x)\geq \delta$:\\
We use that the derivatives of $d$
are bounded on $\Sigma$ and that $\frac{1}{d(x)}\leq \frac{1}{\delta}$.

Combining Case 1 and 2 we get the boundedness of $A_3$ and thus \eqref{deltaphibound}.\\
To see \eqref{ePhied}, we first note that by definition
\begin{equation}\label{gewichtsfunktion}
e^{\frac{\Phi (x)}{\ep}} = e^{\frac{d(x)}{\ep}}
\left(\frac{B}{2}\right)^{-\frac{B}{2}(g(x)-1)}\left(\frac{d(x)}{\ep}\right)^{-\frac{B}{2}g(x)} \; .
\end{equation}
We notice that for any $y\geq 0$  and for any $B>0$ there exists $\tilde{C}>0$ such that
\[ \frac{1}{\tilde{C}} \leq \frac{y^{\chi(\frac{y}{B})}}{1+y} \leq \tilde{C} \, . \]
Setting $y=\frac{d(x)}{\ep}$, this leads to \eqref{ePhied}.
\end{proof}

\subsection{Proof of Theorem 1.7}\label{proofweig}

We partly follow the ideas in the proof of Proposition 5.5 in Helffer-Sj\"ostrand \cite{hesjo}.

Let
\begin{equation}\label{Diriki}
t_0^\Sigma(x,\xi) :=
\sum_{\gamma\in\Sigma_\ep'(x)}a^{(0)}_\gamma(x)\cos \left(\tfrac{1}{\ep}\gamma\cdot\xi\right)\; , \qquad (x,\xi)\in\Sigma\times\T^d\,
,
\end{equation}
where $\Sigma_\ep'(x) := \{\gamma\in\disk\,|\, x+\gamma\in\Sigma\}$.
We notice that
\begin{equation}\label{tkleiner}
t_0(x,i\xi) \leq t_0^\Sigma(x,i\xi)\; ,
\end{equation}
since $a^{(0)}_\gamma\leq 0$ for $\gamma\neq 0$.
In the following we write for simplicity $d(x):=d^0(x)$.
By Theorem \ref{eikonald}, for any $B>0$ we may choose $\ep_B>0$ such that for all $\ep<\ep_B$
\begin{equation}\label{condB}
V_0(x) + t_0(x,i\nabla d(x)) = 0 \, ,
\qquad x\in\Sigma\cap d^{-1}([0,B\ep)) \, ,
\end{equation}
By \eqref{defg} and \eqref{DefPhix}
\begin{equation}\label{gradPhi}
\nabla \Phi(x) = \nabla d(x) \left\{ 1-\frac{B\ep}{2d(x)}
\chi\left(\frac{d(x)}{B\ep}\right) - \frac{1}{2}
\chi'\left(\frac{d(x)}{B\ep}\right)\,\log
\left(\frac{2d(x)}{B\ep}\right)\right\}\, .
\end{equation}
{\sl Step 1:} We shall show that there is $C_0>0$ independent of $B$ such that
\begin{equation}\label{Vnulltab}
V_0(x) + t_0^\Sigma(x,i\nabla\Phi) \geq \begin{cases} 0\, ,&\, x\in\Sigma\cap d^{-1}([0, B\ep]) \\
\frac{B}{C_0}\ep \, ,&\, x\in\Sigma\cap d^{-1}([B\ep,\infty))
\end{cases}
\end{equation}

{\sl Case 1:} $d(x) \leq \frac{B\ep}{2}$\\
Since $\chi(x)=\chi'(x) =0$ and the eikonal equation
\eqref{eikonal} holds, we get
\begin{equation}\label{wei6}
V_0(x) + t_0(x,i\nabla \Phi(x)) = V_0(x) + t_0(x,i\nabla d(x)) = 0 \, ,\qquad
x\in\Sigma\cap d^{-1}([0, \tfrac{B\ep}{2}]) \; .
\end{equation}
which by \eqref{tkleiner} leads at once to the first estimate in \eqref{Vnulltab} in Case 1.

{\sl Case 2:} $d(x) \geq B\ep$\\
Since $\chi' (x)=0$ in this region, we have by \eqref{gradPhi}
\begin{equation}\label{wei1}
\nabla \Phi (x) = \nabla d(x) \left(1-\frac{B\ep}{2d(x)}\right) \; .
\end{equation}
By Lemma \ref{hnullhyper}, $t_0(x,i\xi)= -\tilde{t}_0(x,\xi)$ is concave with respect to $\xi$, therefore
\begin{equation}\label{convexun}
t_0(x,\lambda i\xi + (1-\lambda)i\eta) \geq \lambda t_0(x,i\xi) +
(1-\lambda)t_0(x,i\eta)\quad\mbox{for}\quad 0\leq \lambda\leq 1,\; \xi,\eta\in \R^d\; .
\end{equation}
We have $0\leq (1-\frac{B\ep}{2d(x)}) \leq 1$.
Thus chosing $\lambda = (1-\frac{B\ep}{2d(x)})$ and $\eta = 0$ in \eqref{convexun} and using
$t_0(x,0)=0$ for all $x\in\disk$, by \eqref{wei1} we get the estimate
\begin{eqnarray}
V_0(x) + t_0(x,i\nabla \Phi (x)) &\geq& V_0(x) + \left(1-\frac{B\ep}{2d(x)}\right)
t_0(x,i\nabla d(x)) \nonumber \\
&\geq& V_0(x)\left( 1-\left(1-\frac{B\ep}{2d(x)}\right)\right) \nonumber\\
&=& V_0(x) \frac{B\ep}{2d(x)} \, ,\label{Vnullcase2}
\end{eqnarray}
where for the second estimate we used that by Theorem
\ref{eikonald} the eikonal inequality $t_0(x,i\nabla d(x)) \geq
-V_0(x)$ holds. It follows from Theorem \ref{eikonald} and Hypothesis \eqref{hypdecay}(b) respectively that
$d(x)=O(|x|^2)$ and $V_0(x) = O(|x|^2)$ for $|x|\to 0$. Since the
region $\Sigma$ was assumed to be bounded, it thus follows that
there exists a constant $C_0>0$ such that
\begin{equation}\label{Vnulld}
C_0^{-1} \leq \frac{V_0(x)}{2d(x)} \leq C_0 \, ,\qquad
x\in\Sigma\cap d^{-1}([ B\ep,\infty)) \; .
\end{equation}
Combining \eqref{tkleiner}, \eqref{Vnullcase2} and \eqref{Vnulld},
we finally get the second estimate in \eqref{Vnulltab}.

{\sl Case 3:} $\frac{B\ep}{2}<d(x)<B\ep$\\
We define
\[ f_1(x) := \frac{B\ep}{2d(x)}\chi\left(\frac{d(x)}{B\ep}\right)\quad\text{and}\quad
f_2(x) := \frac{1}{2}
\chi'\left(\frac{d(x)}{B\ep}\right)\,\log \left(\frac{2d(x)}{B\ep}\right)\; ,\]
such that by \eqref{gradPhi}
\begin{equation}\label{nablaphiundd}
\nabla \Phi(x) = \nabla d(x) (1-f_1(x) - f_2(x))\; .
\end{equation}
Since $1<\frac{2d(x)}{B\ep}<2$, $f_1$ and $f_2$ are non-negative and therefore
$1-f_1(x)-f_2(x)\leq 1$. In addition it follows that $0\leq f_1(x) \leq 1$ and by the assumption
$\chi'(r)\leq \frac{2}{\log 2}$ we get $0\leq f_2(x) \leq 1$. Therefore
$0\leq f_1(x) + f_2(x) \leq 2$ and thus the estimate
\begin{equation}\label{f1f2ab}
|1-f_1(x) - f_2(x)| \leq 1
\end{equation}
holds.
Setting $\lambda(x) :=  1-f_1(x) - f_2(x)$
it follows from \eqref{nablaphiundd} and \eqref{f1f2ab} that
\begin{equation}\label{wei2}
\nabla\Phi(x) = \lambda(x) \nabla d(x)\qquad\text{with}\quad |\lambda(x)|\leq 1\; \quad x\in\R^d\; .
\end{equation}
Thus again from (\ref{convexun}) (with $\eta=0$ and $\xi = \nabla d(x)$) together with \eqref{wei2}, \eqref{tkleiner} and the
fact that $t_0$ is even with respect to $i\xi$
it follows that
\begin{equation}\label{Vnulltlambda}
V_0(x) + t_0^\Sigma(x,i\nabla \Phi (x))  \geq
V_0(x) + |\lambda(x)| t_0(x,i\nabla d(x)) \geq  V_0 (1-|\lambda (x)|)\; ,
\end{equation}
where for the second step we used \eqref{condB}.
Since $|\lambda(x)|\leq 1$ and $V_0\geq 0$, \eqref{Vnulltlambda} gives the first estimate in \eqref{Vnulltab} in Case 3.\\

{\sl Step 2:} We shall show
\begin{equation}\label{VepundVPhi}
\hat{V}_\ep(x) + V^\Phi(x) \geq \begin{cases}  - C_5\,\ep & \qquad\mbox{for}
\quad x\in\Sigma\cap d^{-1}([0, B\ep]) \\
\left(\frac{B}{C_0}-C_5\right)\ep &
\qquad\mbox{for}\quad  x\in\Sigma\cap d^{-1}([B\ep,\infty))\; .
\end{cases}
\end{equation}
for some $C_5>0$ independent of $B$, where $V^\Phi:=V_{\ep,\Sigma}^\Phi$ is defined in \eqref{Vphiep}.\\

We write
\begin{equation}\label{wei5}
\hat{V}_\ep(x) + V^\Phi(x) = \left(\hat{V}_\ep(x) - V_0(x)\right) + \left(V^\Phi(x) - t_0^\Sigma(x,i\nabla\Phi(x))\right)
 + \left(V_0(x) +
t_0^\Sigma(x,i\nabla\Phi(x))\right)
\end{equation}
and give estimates for the differences in the first two brackets on the right hand side.

By Hypothesis \ref{hypdecay} and since $\Sigma$ is bounded, there exists a constant $C_1>0$ such that
\begin{equation}\label{Vephalbbe}
\hat{V}_\ep(x) - V_0(x) \geq - C_1 \ep \, , \qquad x\in\Sigma\; .
\end{equation}
We shall show that
\begin{equation}\label{Vphitsigma}
\left| V^\Phi (x) - t_0^\Sigma(x,i\nabla \Phi(x))\right| \leq \ep C_4 \; .
\end{equation}
Then inserting \eqref{Vphitsigma}, \eqref{Vephalbbe} and \eqref{Vnulltab} in \eqref{wei5} proves \eqref{VepundVPhi}.

Setting (see \eqref{Vphiep})
\[ V^\Phi_0 (x) := \sum_{\gamma\in\Sigma'(x)} a^{(0)}_\gamma (x)
\cosh \left(\tfrac{1}{\ep}(\Phi(x)- \Phi(x+\gamma))\right)\, , \]
we write
\[ V^\Phi(x) - t_0^\Sigma(x,i\nabla\Phi(x))= \left( V^\Phi(x) - V_0^\Phi (x)\right) +
\left( V_0^\Phi (x) - t_0^\Sigma(x,i\nabla\Phi(x)) \right) =: D_1(x) + D_2(x) \]
and analyze the two Summands on the right hand side separately.
Since $\Phi$ is Lipschitz and constant outside of some bounded set, it follows from Hypothesis 
\ref{hypdecay}(a) (as in the proof of \eqref{agamma}) that for some $\tilde{C}>0$
\begin{equation}\label{D1}
\left|D_1(x)\right| = \left|\sum_{\gamma\in\Sigma'(x)}\left(\ep\, a^{(1)}_\gamma(x) + R^{(2)}_\gamma(x,\ep)\right)
\cosh \left(\tfrac{1}{\ep}(\Phi(x)- \Phi(x+\gamma))\right)\right| \leq \tilde{C} \ep\, .
\end{equation}
uniformly with respect to $x$.

We have for $x\in\Sigma$
\begin{equation}\label{Vphiminust}
\left|D_2(x)\right|  \leq \sum_{\gamma\in\Sigma_\ep'(x)}
|a^{(0)}_\gamma(x)| \left|
\cosh\left(\frac{1}{\ep}(\Phi(x) - \Phi(x+\gamma))\right)-
\cosh\left(\frac{1}{\ep}\gamma\nabla\Phi(x)\right)\right|\;.
\end{equation}
By the mean value theorem for $\cosh z$ with  $z_0=\frac{1}{\ep}\gamma\nabla\Phi(x)$ and
$z_1=\frac{1}{\ep}(\Phi(x)-\Phi(x+\gamma))$, we get from $|\sinh x|\leq e^{|x|}$
\begin{multline}\label{mittel}
\left|\cosh\left(\frac{1}{\ep}(\Phi(x) - \Phi(x+\gamma))\right)-\cosh\left(\frac{1}{\ep}\gamma\nabla\Phi(x)\right)\right|\\
\leq \sup_{t\in[0,1]} e^{\left|\frac{1}{\ep}\{(\Phi(x) - \Phi(x+\gamma))t + \gamma\nabla\Phi(x) (1-t)\}\right|}
 \left|\frac{1}{\ep}\{ (\Phi(x)-\Phi(x+\gamma))+\gamma\nabla\Phi(x) \}\right| \; .
\end{multline}
By \eqref{d-dgamma} and the definition \eqref{DefPhix} of $\Phi$ there exist constants
$c_1,c_2>0$ such that
 \begin{equation}\label{lip}
 |\Phi(x)-\Phi(x+\gamma)| \leq c_1|\gamma| \quad\mbox{and}\quad |\gamma\nabla\Phi(x)|\leq
 c_2|\gamma| \; , \qquad x\in\Sigma,\, \gamma\in\Sigma_\ep'(x) \; .
 \end{equation}
\eqref{lip} gives a constant $D>0$
such that
 \begin{equation}\label{estexp}
e^{|\frac{1}{\ep}\{(\Phi(x) - \Phi(x+\gamma))t +
\gamma\nabla\Phi(x) (1-t)\}|}  \leq e^{\frac{D}{\ep}|\gamma|}\; .
\end{equation}
By second order Taylor-expansion
\begin{equation}\label{Phi2Tay}
\frac{1}{\ep} \left| (\Phi(x)-\Phi(x+\gamma))+\gamma\nabla\Phi(x) \right| \leq \sup_{t\in[0,1]}
\frac{1}{\ep} \sum_{\nu,\mu=1}^d \left| \gamma_\nu\gamma_\mu\partial_\nu\partial_\mu\Phi(x+t\gamma)\right|\; .
\end{equation}
Inserting \eqref{deltaphibound} into \eqref{Phi2Tay} shows that there exists a
constant $C_3>0$ independent of the choice of $B$ such that for all $\ep\in(0,\ep_0]$
\begin{equation}\label{dopabab}
\frac{1}{\ep}\left| (\Phi(x)-\Phi(x+\gamma))+\gamma\nabla\Phi(x) \right| \leq
\frac{C_3}{\ep}|\gamma|^2 \; .
\end{equation}
By \eqref{abfallagamma}, inserting \eqref{estexp} and \eqref{dopabab} in \eqref{Vphiminust} we get for any $A>0$ with
$\eta=\frac{\gamma}{\ep}$
\[
\left|V_0^\Phi (x) - t_0^\Sigma(x,-i\nabla\Phi)\right|\leq \sum_{\frac{\eta}{\ep}\in\Sigma_\ep'(x)}
e^{-A|\eta|}e^{D|\eta|}C_3\ep|\eta|^2 \leq \ep
\,\sum_{\eta\in\Z^d}  e^{-|\eta|D'}C_3|\eta|^2 \leq \ep C_4\;,
\]
where $A-D = D'>0$.
This together with \eqref{D1} gives \eqref{Vphitsigma}.\\

{\sl Step 3:} We prove \eqref{weigequ} by use of Lemma \ref{HepDchi}.\\

Choosing $B \geq C_0(1+R_0+C_5)$, we have
\begin{equation}\label{CC0ep}
\left(\frac{B}{C_0}-C_5\right)\ep - E \geq \ep \, ,\qquad E\in[0,\ep R_0]\, .
\end{equation}
Let
\begin{equation}\label{OminusOplus} \O_-:= \{ x\in\Sigma\,|\, \hat{V}_\ep (x) + V^\Phi (x) - E < 0\}\qquad
\text{and}\qquad
\O_+ := \Sigma \setminus \O_- \; ,
\end{equation}
then from \eqref{CC0ep} it follows that $\O_-\subset \{d(x)< \ep B\}$ and
by \eqref{VepundVPhi}
\begin{equation}\label{normVepVphi}
|\hat{V}_\ep (x) + V^\Phi(x) | \leq \ep \,\max \{C_5,R_0\} \,\qquad\text{for all}\quad x\in \O_-\; .
\end{equation}
We define the functions $F_{\pm} : \Sigma \ra [0,\infty)$ by
\begin{equation}\label{F+def}
F_+(x) := \sqrt{\ep\id_{\{d(x)<B\ep \}}(x) + (\hat{V}_\ep(x) + V^\Phi(x) -E)\id_{\O_+}(x)}
\end{equation}
and
\begin{equation}\label{F-def}
F_-(x) := \sqrt{\ep\id_{\{d(x)<B\ep \}}(x) + (E-\hat{V}_\ep(x) - V^\Phi(x))\id_{\O_-}(x)}\; .
\end{equation}
Then $F_\pm$ are well defined and furthermore there exists a constant $C>0$ such that
\begin{equation}\label{Feigenschaft}
F:=F_+ + F_- \geq C \,\sqrt{\ep} >0\, ,\qquad F_-= O(\sqrt{\ep})\qquad \text{and}\qquad F_+^2 - F_-^2 = \hat{V}_\ep +
V^\Phi - E\; .
\end{equation}
Lemma \ref{HepDchi} yields with the choice $v=e^{\frac{\Phi}{\ep}}u$
\begin{equation}\label{HDchiab5}
\left\| F e^{\frac{\Phi}{\ep}}u\right\|^2_{\ell^2} \leq
4 \left\| \tfrac{1}{F}e^{\frac{\Phi}{\ep}}
\left(H_\ep^\Sigma - E\right)u \right\|^2_{\ell^2} +
8 \left\| F_- e^{\frac{\Phi}{\ep}}u\right\|^2_{\ell^2} + C\ep \|u\|^2
\,.
\end{equation}
By \eqref{ePhied} and \eqref{Feigenschaft}
\begin{equation}\label{HDchiab1}
\left\| F e^{\frac{\Phi}{\ep}}u\right\|^2_{\ell^2}\geq C\ep
\left\| \left(1+\tfrac{d}{\ep}\right)^{-\frac{B}{2}}
e^{\frac{d}{\ep}}u\right\|^2_{\ell^2}
\end{equation}
and
\begin{equation}\label{HDchiab2}
\left\| \tfrac{1}{F}e^{\frac{\Phi}{\ep}}
\left(H_\ep^\Sigma - E\right)u \right\|^2_{\ell^2} \leq
C\ep^{-1}\left\| \left(1+\tfrac{d}{\ep}\right)^{-\frac{B}{2}}
e^{\frac{d}{\ep}}\left(H_\ep^\Sigma - E\right)u \right\|^2_{\ell^2}  \, .
\end{equation}
Since $\O_-\subset\{d(x)<B\ep\}$ it follows from the definition of $F_-$ that $\frac{d(x)}{\ep}\leq C$ on the support
of $F_-$.
Therefore by \eqref{ePhied} and \eqref{Feigenschaft} there
exists a constant $C>0$ such that
\begin{equation}\label{HDchiab4}
\left\| F_- e^{\frac{\Phi}{\ep}}u\right\|^2_{\ell^2} \leq
C\ep \left\| u\right\|^2_{\ell^2}\, .
\end{equation}
Inserting (\ref{HDchiab1}), (\ref{HDchiab2}) and
(\ref{HDchiab4}) in
equation (\ref{HDchiab5}) yields with $\tilde{B}:=\frac{B}{2}$
\[
{\tilde C} \ep \left\|
\left(1+\tfrac{d}{\ep}\right)^{-\tilde{B}} e^{\frac{d}{\ep}}u\right\|^2_{\ell^2}
\leq  \ep^{-1}\left\| \left(1+\tfrac{d}{\ep}\right)^{-\tilde{B}}
e^{\frac{d}{\ep}}
\left(H_\ep^\Sigma - E\right)u \right\|^2_{\ell^2} +
\ep \left\| u\right\|^2_{\ell^2}  \,.
\]
This proves \eqref{weigequ}.\\

{\sl Step 4:} We prove \eqref{eigenu}.\\

If $u$ is an eigenfunction of $H_\ep^\Sigma$ with eigenvalue $E$, then the first summand on the right hand side of
\eqref{weigequ} vanishes. The normalization of $u$ leads therefore to \eqref{eigenu}.

\qed

\section{Application to Markov chains}\label{ananwend}

An example of self adjoint difference operators as analyzed above are generators of certain Markov chains and jump
processes. For the sake of the reader we briefly recall some relevant facts on Markov chains.

A Markov chain on $\disk$ is described by means of a
"transition matrix" $P_\ep :\disk\times \disk \rightarrow [0,1]$. $P_\ep$ is a
stochastic matrix, i.e.,
\begin{equation}\label{stochmat}
\sum_{y\in\disk} P_\ep(x,y) = 1\, , \qquad x\in\disk\,.
\end{equation}
We assume that $P_\ep$ satisfies a detailed balance condition, i.e.,
\begin{equation}\label{detailed}
\mu_\ep(x) P_\ep(x,y) = \mu_\ep (y) P_\ep (y,x)
\end{equation}
with respect to some family $\{\mu_\ep\}_{\ep\in (0,\ep_0]}$ of probability measures on $\disk$.
Then $(1-P_\ep)$ defines a self adjoint (diffusion)
operator on $\ell^2\left(\disk,\mu_\ep\right)$ via
\[ (1-P_\ep) u(x) = u(x) - \sum_{y\in\disk} P_\ep (x,y) u(y) \; .  \]
In fact $P_\ep$ is a bounded operator on $\ell^2(\disk, \mu_\ep)$ with $\| P_\ep\| = 1$.
To see this, we first notice that by \eqref{stochmat}
\begin{align*}
|P_\ep u(x)|^2 &\leq \left(\sum_{y\in\disk} P_\ep (x,y)\right)\left(\sum_\gamma P_\ep(x,y) |u(y)|^2\right)\\
&= \sum_\gamma P_\ep(x,y) |u(y)|^2 \; .
\end{align*}
This yields by \eqref{detailed}, the Fubini-Theorem and again \eqref{stochmat}
\begin{align*}
\| P_\ep u\|_{\ell^2(\disk,\mu_\ep)}^2 &= \sum_{x\in\disk} \mu_\ep(x) |P_\ep u(x)|^2 \leq \sum_x \mu_\ep(x)
\sum_y P_\ep(x,y) |u(y)|^2 \\
& = \sum_{y}\left(\sum_x  P_\ep(y, x)\right) \mu_\ep (y) |u(y)|^2  \\
& = \| u\|_{\ell^2(\disk,\mu_\ep)}^2\; ,
\end{align*}
thus $\|P_\ep\| \leq 1$. Since the constant function $u(x)=1$ belongs to $\ell^2(\disk,\mu_\ep)$ and
fulfills $\|P_\ep u\|_{\ell^2(\disk,\mu_\ep)} =  \| u\|_{\ell^2(\disk,\mu_\ep)}$, this proves that $\|P_\ep\| =1$ \\
The symmetry of $P_\ep$ follows from the reversibility condition \eqref{detailed}, since for $u,v\in \ell^2(\disk, \mu_\ep)$
\begin{align*}
\langle u\, , \,P_\ep v\rangle_{\ell^2(\disk,\mu_\ep)} &= \sum_{x\in\disk} \mu_\ep(x) u(x) \sum_{y\in\disk} P_\ep(x,y) v(y) \\
&= \sum_y \sum_x \mu_\ep(y) P_\ep(y,x) u(x) v(y) = \langle P_\ep u\, , \, v\rangle_{\ell^2(\disk,\mu_\ep)}\; .
\end{align*}
Conjugation with respect to the measure $\mu_\ep$ induces a bounded self adjoint
operator $H_\ep := \mu_\ep^{\frac{1}{2}} (1-P_\ep) \mu_\ep^{-\frac{1}{2}}$ on $\ell^2\left(\disk\right)$,
whose restriction to ${\mathcal K}(\disk)$ is given by
\begin{equation}\label{HpropmitP}
H_\ep u(x) = u(x) - \mu_\ep^{\frac{1}{2}}(x)\sum_\gamma P_\ep (x,x+\gamma)\mu_\ep^{-\frac{1}{2}}(x+\gamma)
u(x+\gamma)\; ,\quad u\in {\mathcal K}(\disk)\; .
\end{equation}
Note that ${\mathcal K}(\disk)$ is dense in $\ell^2(\disk)$ and $H_\ep$ is linear continuous and is therefore
completely determined by \eqref{HpropmitP}.

\begin {prop}
The operator $H_\ep := \mu_\ep^{\frac{1}{2}} (1-P_\ep) \mu_\ep^{-\frac{1}{2}}$ on $\ell^2\left(\disk\right)$
is of the form \eqref{Hepein} and fulfills Hypothesis \ref{hypdecay} (a)(iii). If the coefficients $a_\gamma$
have an expansion \eqref{agammaexp}, they also fulfill (ii). 
\end {prop}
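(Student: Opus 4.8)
The plan is to read the coefficients $a_\gamma$ off the explicit formula \eqref{HpropmitP} and then verify the three assertions in turn.

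First, since $P_\ep$ is stochastic, $\sum_{\gamma\in\disk}P_\ep(x,x+\gamma)=1$ by \eqref{stochmat}, so for $u\in{\mathcal K}(\disk)$ formula \eqref{HpropmitP} rewrites as
\[
H_\ep u(x)=\sum_{\gamma\in\disk}P_\ep(x,x+\gamma)\Bigl(u(x)-\mu_\ep^{1/2}(x)\,\mu_\ep^{-1/2}(x+\gamma)\,u(x+\gamma)\Bigr)\,.
\]
I would then define, for $\gamma\neq 0$,
\[
a_\gamma(x,\ep):=-\mu_\ep^{1/2}(x)\,P_\ep(x,x+\gamma)\,\mu_\ep^{-1/2}(x+\gamma)\,,
\]
fix the diagonal coefficient by $a_0(x,\ep):=-\sum_{\gamma\neq 0}a_\gamma(x,\ep)$, so that $\sum_{\gamma}a_\gamma(x,\ep)=0$ pointwise --- exactly the normalization demanded by Hypothesis \ref{hypdecay}(a)(ii) --- and put the remaining diagonal part into
\[
V_\ep(x):=\bigl(1-P_\ep(x,x)\bigr)-a_0(x,\ep)=\sum_{\gamma\neq 0}P_\ep(x,x+\gamma)\Bigl(1-\mu_\ep^{1/2}(x)\,\mu_\ep^{-1/2}(x+\gamma)\Bigr)\,.
\]
Collecting terms then gives $H_\ep=\sum_\gamma a_\gamma\tau_\gamma+V_\ep=T_\ep+V_\ep$ on the dense subspace ${\mathcal K}(\disk)$, hence everywhere; thus $H_\ep$ has the form \eqref{Hepein}.

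Next, for Hypothesis \ref{hypdecay}(a)(iii) I would compute, for $\gamma\neq 0$,
\[
a_{-\gamma}(x+\gamma,\ep)=-\mu_\ep^{1/2}(x+\gamma)\,P_\ep(x+\gamma,x)\,\mu_\ep^{-1/2}(x)\,,
\]
and invoke the detailed balance relation \eqref{detailed}, $\mu_\ep(x)P_\ep(x,x+\gamma)=\mu_\ep(x+\gamma)P_\ep(x+\gamma,x)$, which turns the right-hand side into $a_\gamma(x,\ep)$; for $\gamma=0$ the identity is trivial. For part (a)(ii), assume in addition the expansion \eqref{agammaexp}, so that the leading coefficient is $a_\gamma^{(0)}(x)=\lim_{\ep\to 0}a_\gamma(x,\ep)$. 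Since $\mu_\ep$ is a probability measure and the entries of $P_\ep$ lie in $[0,1]$, every $a_\gamma(x,\ep)$ with $\gamma\neq 0$ is $\leq 0$ for all $\ep\in(0,\ep_0]$, whence $a_\gamma^{(0)}(x)\leq 0$; and passing to the limit $\ep\to0$ in the identity $\sum_\gamma a_\gamma(x,\ep)=0$ gives $\sum_\gamma a_\gamma^{(0)}(x)=0$.

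The only point requiring a little care --- and the natural place for the main obstacle --- is the legitimacy of the decomposition itself: the sum $\sum_{\gamma\neq 0}a_\gamma(x,\ep)$ must be absolutely convergent in $\gamma$ for $a_0(x,\ep)$ and $V_\ep(x)$ to be well-defined functions on $\disk$, and the limit $\ep\to0$ must commute with the sum over $\gamma$. This summability is part of the standing framework --- it is forced once the exponential decay \eqref{abfallagamma} of Hypothesis \ref{hypdecay}(a)(iv) holds, and is already implicit in \eqref{HpropmitP} defining a bounded operator; it is also made transparent by the identity $|a_\gamma(x,\ep)|=\sqrt{P_\ep(x,x+\gamma)\,P_\ep(x+\gamma,x)}$, which follows from \eqref{detailed}. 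Everything else is routine bookkeeping.
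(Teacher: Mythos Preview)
Your proof is correct and follows essentially the same route as the paper: the decomposition into $T_\ep$ and $V_\ep$, the explicit formulas for $a_\gamma$ and $a_0$, and the verification of (ii) from non-negativity of $P_\ep,\mu_\ep$ together with $\sum_\gamma a_\gamma=0$ all coincide with the paper's argument. The only cosmetic difference is that for (a)(iii) you verify $a_\gamma(x,\ep)=a_{-\gamma}(x+\gamma,\ep)$ by a direct computation from the detailed balance relation \eqref{detailed}, whereas the paper argues indirectly that detailed balance makes $T_\ep$ symmetric and then invokes the characterization \eqref{symmdelta}; these are two phrasings of the same one-line check.
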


\begin{proof}
Setting
\begin{eqnarray}
T_\ep (x)  &:=& \sum_{\gamma\neq 0}\mu_\ep^{\frac{1}{2}}(x) P_\ep(x,x+\gamma)
\mu_\ep^{-\frac{1}{2}}(x+\gamma)(\id - \tau_{\gamma}) \label{Tepprob}\\
V_\ep (x) &:=&\sum_{\gamma\neq 0}\mu_\ep^{\frac{1}{2}}(x)P_\ep(x,x+\gamma)
\left(\mu_\ep^{-\frac{1}{2}}(x)-\mu_\ep^{-\frac{1}{2}}(x+\gamma)\right) \; .\label{Vepprob}
\end{eqnarray}
we have the standard form $H_\ep = T_\ep + V_\ep$, where $V_\ep$ is a potential energy 
(a multiplication operator) and 
$T_\ep$ is of the form described in \eqref{Hepein} with
\begin{eqnarray}
a_0(x,\ep) &=& \sum_{\gamma\neq 0}\mu_\ep^{\frac{1}{2}}(x) P_\ep(x,x+\gamma)
\mu_\ep^{-\frac{1}{2}}(x+\gamma) \geq 0 \label{probanull}\\
a_\gamma(x,\ep) &=& - \mu_\ep^{\frac{1}{2}}(x)P_\ep(x,x+\gamma)\mu_\ep^{-\frac{1}{2}}(x+\gamma)\; , \quad
\gamma\neq 0\; . \label{probagamma}
\end{eqnarray}
Since $P_\ep(x,y)$ and $\mu_\ep(x)$ are non-negative numbers, it follows at once that $a_\gamma(x,\ep)\leq 0$ for all $\gamma\neq 0$ and that $\sum a_\gamma (x,\ep) = 0$. Thus under the assumption \eqref{agammaexp} it follows that Hypothesis \ref{hypdecay} (a)(ii) holds. 
The detailed balance condition for $P_\ep$ ensures the
symmetry of $H_\ep$ and thus of $T_\ep$. By \eqref{symmdelta}, this leads to $a_\gamma (x,\ep) = a_{-\gamma} (x+\gamma,\ep)$ (Hypothesis \ref{hypdecay}(a)(iii)).
\end{proof}

\begin{rem}
The other conditions given in Hypothesis \ref{hypdecay} lead to analog conditions on the transition matrix $P_\ep$
and the reversible measure $\mu_\ep$. For example, condition (a)(iv) on the exponential decay of $a_\gamma$
with respect to $\gamma$, must be reflected by the fact that $P_\ep(x,y)$ is assumed to be exponential small for
$|x-y|$ large. Furthermore in order to fulfill (a)(i), the measure $\mu_\ep$ should be slowly varying. Condition
(a)(v) is a kind of ergodicity condition, which guarantees that jumps in each direction are possible. 
It follows at once from \eqref{Tepprob} and \eqref{Vepprob} that 
for a general probabilistic operator, the potential energy can
be written in terms of the kinetic energy and the measure $\mu_\ep$ as
\begin{equation}\label{VmitTmu}
V_\ep (x) = -  \mu_\ep^{-\frac{1}{2}}(x) \left(T_\ep \mu_\ep^{\frac{1}{2}}\right)(x)\; .
\end{equation}
The assumptions on $V_0$ given in Hypothesis \ref{hypdecay} are conditions on the pair $(\mu_\ep, P_\ep)$.
The class of Markov chains satisfying these conditions is more general than the class of Markov chains analyzed in 
Bovier-Eckhoff-Gayrard-Klein \cite{begk2}, if the Markov chain acts on $\disk$.
\end{rem}

\end{document}